\documentclass[reprint,aps,prx,superscriptaddress,nobibnotes,floatfix,british,twocolumn]{revtex4-2}

\usepackage[utf8]{inputenc}
\usepackage{amsmath,amssymb,amsthm}
\usepackage{amsfonts}
\usepackage{mathrsfs}
\usepackage{mathtools}
\usepackage{braket}
\usepackage[hidelinks]{hyperref}
\usepackage{cleveref}
\usepackage{enumitem}
\usepackage{tikz}
\usepackage{array}
\usepackage{booktabs}
\usepackage{multirow}
\usepackage{csquotes}
\usepackage{xcolor}
\usepackage{thmtools}
\usepackage{thm-restate}
\usepackage[linesnumbered,ruled,vlined]{algorithm2e}

\newtheorem{theorem}{Theorem}
\newtheorem{lemma}{Lemma}

\newtheorem{definition}{Definition}

\newtheorem{proposition}{Proposition}

\newcommand{\Tr}{\mathrm{Tr}}
\newcommand{\id}{\mathbb{I}}

\newcommand{\calH}{\mathcal{H}}

\newcommand{\calU}{\mathcal{U}}
\newcommand{\calC}{\mathcal{C}}
\newcommand{\calE}{\mathcal{E}}

\newcommand{\calT}{\mathcal{T}}

\newcommand{\TR}[1]{\mathrm{Tr} \left[#1\right]}

\newcommand{\proc}{\mathrm{proc}}
\newcommand{\QT}{\mathrm{QT}}
\newcommand{\Herm}{\mathrm{Herm}}
\newcommand{\RQT}{\mathrm{RQT}}
\newcommand{\ketbra}[2]{\mathinner{|{#1}\rangle\,\langle{#2}|}}

\newcommand{\eff}{\mathrm{eff}}
\newcommand{\inv}{\mathrm{inv}}
\newcommand{\fin}{\mathrm{fin}}

\newcommand{\LGY}{\mathrm{LGYNI}}

\newcommand{\bM}{\mathbb{M}}
\newcommand{\cH}{\mathcal{H}}
\newcommand{\bC}{\mathbb{C}}
\newcommand{\bR}{\mathbb{R}}
\newcommand{\cC}{\mathcal{C}}
\newcommand{\rea}{\mathsf{r}}
\newcommand{\rec}{\mathsf{c}}
\newcommand{\iu}{{i\mkern1mu}}
\newcommand{\enc}{\mathsf{e}}
\newcommand{\dec}{\mathsf{d}}
\newcommand{\tdots}{\mathinner{\ldotp\mkern-3mu\ldotp\mkern-3mu\ldotp}}
\newcommand{\tcdots}{\mathinner{\cdotp\mkern-3mu\cdotp\mkern-3mu\cdotp}}

\definecolor{js}{rgb}{0.578125,0.23828125,0.9609375}
\definecolor{err}{rgb}{1,0,0}

\newcommand{\red}{\color{red}}

\newif\ifshowhierarchyfigure
\showhierarchyfiguretrue      

\begin{document}

	\title{Indefinite Causal Order Reverses the Real-Complex Hierarchy}
	
	\author{Jacopo Surace}
    \email{jacopo.surace@gmail.com}
    \author{Shintaro Minagawa}
    \email{minagawa.shintaro@gmail.com}
    \author{Ravi Kunjwal}
    \email{quaintum.research@gmail.com}
	\affiliation{Aix-Marseille University, CNRS, LIS, Marseille, France}

\begin{abstract}
\textbf{{\red[Note added after submission. After posting the first version of this preprint and corresponding with Ved Kunte and Kuntal Sengupta, we identified an issue with the claimed separation between real quantum theory and ordinary complex quantum theory in the process-matrix framework. This version imposes normalization only for local CPTP maps acting on the parties' process input-output systems, which we call N1. A stronger compositional requirement is to impose normalization also after arbitrary shared ancillary systems are introduced and each party acts jointly on its process system and local share of the ancilla, which we call N2. Under N2, the process matrix used in this version to separate RQT from QT is not valid, and the claimed RQT/QT separation is therefore not established. We are preparing a revised version that clarifies this distinction and revises the affected claims. This temporary update is intended to alert readers to this issue while we work on revising the manuscript.]}}

\vspace{1cm}
Can causal relations be subject to quantum indefiniteness, similar to other physical properties? The process-matrix framework formalises this possibility: valid processes are defined by what local laboratories can implement, without assuming a global causal order. Standardly, the local labs are assumed to implement arbitrary quantum instruments. We ask what happens when symmetries restrict these local operations. Symmetry constraints, such as those arising from missing reference frames, superselection constraints, or the antiunitary symmetry defining real quantum theory, enlarge the admissible process cone. Do these extra processes generate genuinely new correlations? We prove a sharp dichotomy: no for any finite unitary symmetry, yes for real quantum theory. Recent work has shown that, under fixed and definite causal order, complex quantum theory is strictly richer than real quantum theory. Our work shows that this hierarchy is reversed under indefinite causal order: real quantum theory realizes strictly more process correlations than complex quantum theory.
\end{abstract}
	
	\maketitle

\section{Introduction}

Quantum theory overturns our pre-quantum picture of the physical world, one rooted in systems carrying definite physical properties, with no fundamental role for indeterminism, much less for complex numbers~\cite{kochen1967,bell1964}. More than a century after its first appearance, investigations into the foundational aspects of quantum theory continue~\cite{pusey2012,renou2021}. This is due as much to the conceptual challenge of making sense of quantum theory's mathematics, not least its reliance on complex amplitudes~\cite{stueckelberg1960,caves2001,mckague2009,aleksandrova2013,sarkar2025,barriosHita2025,hoffreumonWoods2026,ying2025a}, as it is to the challenge of rigorously reconciling general relativity's dynamical causal structure with quantum theory's probabilistic nature in an account of quantum causality~\cite{hardy2007,oreshkov2012a,brukner2014,hardy2026}. We ask a question at the heart of this two-fold challenge: is there an operational distinction between real and complex quantum theory if causal order is allowed to be indefinite? 

To make this question precise, we adopt the process-matrix framework of Oreshkov, Costa, and Brukner~\cite{oreshkov2012a}, which formalises the possibility of quantum causality with indefinite causal order. The key to the framework is its operational foundation: a process matrix is valid whenever it assigns non-negative, normalized probabilities to every experiment the local agents can actually perform.  This grounding is a strength.  It means that the set of valid processes is not fixed once and for all, but responds to the physical capabilities of the laboratories.   When those capabilities change, so do the admissible processes.

Symmetry restrictions provide a concrete and physically motivated way to change those capabilities. Missing reference frames and superselection rules confine local agents to operations that are invariant or covariant under the relevant symmetry group~\cite{bartlett2007}. This idea has been formalised, in previous work, as \emph{twirled quantum worlds} for unitary symmetry restrictions and as \emph{swirled quantum worlds} for antiunitary symmetry restrictions~\cite{centeno2025,ying2025a}. The canonical antiunitary example is real quantum theory (RQT), obtained by imposing invariance under complex conjugation. In both cases the set of locally available operations is reduced. The consequences for process-matrix correlations, however, are far from obvious, because the restriction acts simultaneously in two competing directions. Fewer available instruments weaken the validity constraints on global processes, admitting representatives that would be forbidden in ordinary quantum process framework. At the same time, fewer instruments mean fewer ways to probe those representatives, potentially rendering the new ones operationally invisible. Which effect dominates?

The answer is sharply different in the unitary and antiunitary cases. Denoting by $\mathcal{C}^{\mathrm{proc,fin}}_{\mathcal{T}}$ the set of probability
distributions achievable in finite-dimensional process experiments in a theory $\mathcal{T}$, we first show that finite unitary symmetry restrictions are operationally neutral. More precisely, for any finite group $G$ acting unitarily on the local systems, if $\mathsf{Tw}_G$ is the corresponding $G$-twirled quantum theory and $\QT$ is ordinary quantum theory, then
\begin{align*}
    \mathcal{C}^{\mathrm{proc,fin}}_{\mathsf{Tw}_G}
    =
    \mathcal{C}^{\mathrm{proc,fin}}_{\mathrm{QT}} .
\end{align*}
Although the symmetry restriction enlarges the cone of valid process representatives, this enlargement does not change the set of observable correlations. The reason is twofold: twirled processes can be symmetrized back to ordinary quantum processes, while finite reference-frame registers allow ordinary quantum process experiments to be simulated within the twirled theory.

The antiunitary case behaves differently.  In real quantum theory, restricting the laboratories to real instruments weakens the constraints that define valid processes in a way that survives at the level of observed probabilities.  Thus the additional real process matrices are not all operationally redundant: some of them generate correlations with real local instruments that no finite-dimensional complex quantum process can reproduce.   We prove
\begin{align*}
    \mathcal{C}^{\mathrm{proc,fin}}_{\mathrm{QT}}
    \subsetneq
    \mathcal{C}^{\mathrm{proc,fin}}_{\mathrm{RQT}} .
\end{align*}
Equivalently, relaxing the causal order allows RQT to generate finite-dimensional process correlations that no complex quantum process can reproduce. The separation is witnessed by an explicit finite-dimensional bipartite RQT process strategy whose value for the Lazy Guess Your Neighbour's Input (LGYNI) causal inequality~\cite{branciard2016simplest} exceeds the dimension-independent upper bound of Liu and Chiribella~\cite{liuChiribella2025} for all finite-dimensional ordinary complex-QT process strategies.

These results place indefinite causal order in a distinctive position within the real--complex comparison.  For twirled theories, our result extends an equivalence already known for fixed causal structures~\cite{ying2025a,bartlett2007}: twirled theories are operationally indistinguishable from ordinary QT at all levels, including genuinely indefinite causal order.  For real quantum theory the picture is richer and more striking. When the causal order between parties is left free but required to be definite, RQT and QT produce identical bipartite process correlations~\cite{mckague2009,aleksandrova2013}.  

Fixing a specific causal structure in a multipartite network breaks this equality: Renou et al.\ showed that, under a fixed bilocal causal structure, $\mathcal{C}^{\mathrm{biloc,fin}}_{\mathrm{RQT}} \subsetneq \mathcal{C}^{\mathrm{biloc,fin}}_{\mathrm{QT}}$~\cite{renou2021,ying2025a}, with QT strictly richer than RQT.  Relaxing the causal order surprisingly reverses this inclusion: we find already in the bipartite scenario that $\mathcal{C}^{\mathrm{proc,fin}}_{\mathrm{QT}} \subsetneq \mathcal{C}^{\mathrm{proc,fin}}_{\mathrm{RQT}}$, with RQT now strictly richer than QT.

\section{Setup}
\label{sec:setup}

We fix notation for process correlations. Each party $X^{(k)}$ has an input Hilbert space $\calH_{X^{(k)}_1}$ and an output Hilbert space $\calH_{X^{(k)}_2}$. For a system $S$, we write $L^S$ for the space of operators on $\calH_S$. Thus $L^{X^{(k)}_1X^{(k)}_2}$ denotes the space of operators on $\calH_{X^{(k)}_1}\otimes\calH_{X^{(k)}_2}$. Local operations are represented by Choi matrices following the convention of Ref.~\cite{oreshkov2012a}: a completely positive map from $X^{(k)}_1$ to $X^{(k)}_2$ corresponds to an operator $M^{X^{(k)}_1X^{(k)}_2}\ge 0$, and it is trace-preserving when $\Tr_{X^{(k)}_2}M^{X^{(k)}_1X^{(k)}_2}= \id^{X^{(k)}_1}$. A local instrument with classical setting $x_k$ and outcome $a_k$ is a family $\{M^{(k)}_{a_k|x_k}\}_{a_k}$ of positive Choi matrices whose sum $M^{(k)}_{x_k}:=\sum_{a_k}M^{(k)}_{a_k|x_k}$ is CPTP.

An $n$-party process matrix is an operator $W\in L^{X^{(1)}_1X^{(1)}_2\cdots X^{(n)}_1X^{(n)}_2}$ that assigns probabilities through the generalized Born rule
\begin{align*}
p(a_1,\tdots,a_n|x_1,\tdots,x_n)
=
\TR{W\!\left(M^{(1)}_{a_1|x_1}\otimes\tcdots\otimes M^{(n)}_{a_n|x_n}\right)}.
\end{align*}
In ordinary complex quantum theory, the operational characterisation of Oreshkov, Costa, and Brukner (OCB) says that requiring non-negative probabilities for all local CP maps, including ancilla-assisted ones, and normalization for all complete local instruments characterises valid process matrices as exactly the operators $W$ satisfying~\cite{oreshkov2012a}
\begin{align}
W&\ge 0,
\label{eq:qt-process-pos}\\
\Tr\!\left[W\left(M_1\otimes\cdots\otimes M_n\right)\right]&=1
\label{eq:qt-process-norm}
\end{align}
for all local CPTP Choi matrices $M_k$. 

\paragraph{Symmetry-restricted laboratories.}
Let $G$ be a finite group acting unitarily on each local system. A local Choi matrix $M^{X_1X_2}$ is $(G,U)$-\emph{covariant} when the corresponding channel satisfies
\begin{align*}
\calE\circ\calU_g^{X_1}=\calU_g^{X_2}\circ\calE
\qquad\forall g\in G,
\end{align*}
where $\calU_g(\cdot)=U_g(\cdot)U_g^\dagger$. A symmetry-restricted laboratory can only implement instruments $\{\mathcal M_a\}_a$ such that each CP map $\mathcal M_a$ is
$(G,U)$-covariant. For several laboratories the same group element $g$ acts on every Choi space at once: if $V_g^{(k)}$ is the induced action on $X^{(k)}_1X^{(k)}_2$, then the global process representative is acted on by $V_g\coloneqq V_g^{(1)}\otimes\cdots\otimes V_g^{(n)}$. The corresponding \emph{$(G,U)$-twirled quantum world} is the subtheory of complex quantum theory with invariant states/effects and covariant transformations~\cite{bartlett2007,centeno2025,ying2025a}. 

A \emph{twirled process matrix} is obtained by applying the OCB operational characterisation to these symmetry-restricted laboratories. Since all non-invariant components are operationally redundant under symmetric tests, we choose globally invariant representatives in
\begin{align*}
L_{\inv} \coloneqq \left\{ W\in L^{X^{(1)}_1X^{(1)}_2\cdots X^{(n)}_1X^{(n)}_2} : V_g W V_g^\dagger = W,\ \forall g\in G \right\}.
\end{align*}
Thus a finite-dimensional $(G,U)$-twirled process representative is an operator $W\in L_{\inv}$ satisfying
\begin{align}
W&\ge 0,
\label{eq:twirled-process-pos}\\
\Tr\!\left[W\left(M_1\otimes\cdots\otimes M_n\right)\right]&=1
\label{eq:twirled-process-norm}
\end{align}
for all local $(G,U)$-covariant CPTP Choi matrices $M_k$. The full proof and the precise induced actions are given in Appendix~\ref{app:characterisation-twirled-W}. Because the normalization in Eq.~\eqref{eq:twirled-process-norm} is imposed on a strictly smaller family of tests than in ordinary QT, the set of admissible twirled process representatives can be strictly larger than the QT process cone. The operational significance of this gap is what the main results address.

\paragraph{Real quantum theory.}
Real quantum theory ($\RQT$) is obtained by restricting complex quantum theory to the sector fixed by time reversal, represented antiunitarily by complex conjugation in fixed local bases~\cite{ying2025a}. Operationally, local states, effects, ancillas, and Choi matrices are real in those bases. For any composite system $S$, define the conjugation-invariant space
\begin{align*}
    L_{\bR}^{S}:=\{O\in L^{S}: O=\overline{O}\}.
\end{align*}
For the full process space we write $\mbox{$L_{\bR}\coloneqq L_{\bR}^{X^{(1)}_1X^{(1)}_2\cdots X^{(n)}_1X^{(n)}_2}$}$.
This space plays for RQT the role played by $L_{\inv}$ in a unitary twirled world.

An \emph{RQT process matrix} is defined by applying the OCB operational characterisation to real laboratories. Since all non-real components are operationally redundant under real tests, we use real representatives $W\in L_{\bR}$. Thus an $n$-party finite-dimensional RQT process representative is an operator $W\in L_{\bR}$ satisfying
\begin{align}
    W&\ge 0,
    \label{eq:rqt-process-pos}\\
    \Tr\!\left[W\left(M_1\otimes\cdots\otimes M_n\right)\right]&=1
    \label{eq:rqt-process-norm-setup}
\end{align}
for all local CPTP Choi matrices $M_i\in L_{\bR}^{X^{(i)}_1X^{(i)}_2}$. Since $W\in L_{\bR}$ and $W\ge0$, this representative is real symmetric. The complete derivation is given in Appendix~\ref{app:characterisation-RQT-W}.

\paragraph{Correlation sets.}
For a theory $\calT$, let $\calC_{\calT}^{\proc}$ denote the set of correlations achievable from process matrices in that theory.  For QT and RQT, we write $\calC_{\calT}^{\proc}(d)$ for the set at fixed local dimensions  $d$ and define the finite-dimensional sets by
\begin{align}
\calC_{\QT}^{\proc,\fin}
:=
\bigcup_{d<\infty}\calC_{\QT}^{\proc}(d),
\qquad
\calC_{\RQT}^{\proc,\fin}
:=
\bigcup_{d<\infty}\calC_{\RQT}^{\proc}(d).
\label{eq:qt-rqt-finite-corr-sets}
\end{align}
For a fixed finite-dimensional twirled world $(G,U)$, the process correlation set at dimension $d$ is $\calC_{(G,U)}^{\proc}(d)$.  Allowing all finite-dimensional systems carrying representations of $G$ gives the full $G$-twirled theory $\mathsf{Tw}_G$, with finite-dimensional process correlation set $\calC_{\mathsf{Tw}_G}^{\proc,\fin}$.

\subsection*{Causal structure of process representatives}
\label{subsec:ocb-forbidden}

Before turning to the main results, we address a concern raised by the enlargement of the admissible representative space in symmetry-restricted theories.

In ordinary QT, the OCB characterisation imposes a stringent structure on valid process matrices: in the bipartite qubit case, only certain Pauli support types are allowed. We call all remaining support types OCB-forbidden. These forbidden terms include components in which a party's output space feeds directly back into its own input without passing through another party's laboratory, and can therefore be interpreted as local causal-loop terms~\cite{oreshkov2012a}.

In symmetry-restricted theories, the normalization constraints are imposed on a strictly smaller set of local tests, and valid process representatives can carry OCB-forbidden support types. We show, however, that in the bipartite qubit case all such components are completely invisible to the restricted local laboratories, because local tomography is incomplete in symmetrized theories.

In ordinary QT, products of local instruments span the full operator space, so every component of a process matrix is, in principle, locally reconstructable. In symmetry-restricted theories, this local tomographic completeness fails~\cite{centeno2025}: the accessible instruments span only a proper subspace of the full operator space. We can therefore decompose any process representative as
\begin{align*}
W = W_{\mathrm{la}} + W_{\mathrm{ga}},
\end{align*}
where $W_{\mathrm{la}}$ is the locally accessible component, reconstructable by local symmetric tomography, and $W_{\mathrm{ga}}$ is the orthogonal component, invisible to all products of locally available tests and only globally accessible. The following then holds:

\begin{proposition}[Causal-loop terms are locally invisible]
\label{prop:ocb-forbidden-invisible}
For bipartite qubit process representatives in finite twirled worlds or in RQT, $W_{\mathrm{la}}$ contains only OCB-allowed Pauli support types. All OCB-forbidden Pauli components, i.e. the causal-loop support types excluded by the ordinary OCB constraints, lie entirely in $W_{\mathrm{ga}}$ and are inaccessible to local symmetric tomography.
\end{proposition}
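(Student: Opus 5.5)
The plan is to make the decomposition $W=W_{\mathrm{la}}+W_{\mathrm{ga}}$ explicit and then read the normalization conditions as linear constraints on $W_{\mathrm{la}}$ alone. For party $k\in\{A,B\}$, let $S_k\subseteq L^{X^{(k)}_1X^{(k)}_2}$ be the real-linear span of the locally available Choi matrices. These are the $(G,U)$-covariant CP maps in a twirled world, or the real CP maps in RQT. I take the locally accessible subspace to be $S_A\otimes S_B$, let $W_{\mathrm{la}}$ be the Hilbert--Schmidt orthogonal projection of $W$ onto it, and set $W_{\mathrm{ga}}=W-W_{\mathrm{la}}$. Then $\Tr[W(M_A\otimes M_B)]=\Tr[W_{\mathrm{la}}(M_A\otimes M_B)]$ for every product of available tests. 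So Eq.~\eqref{eq:twirled-process-norm} (respectively Eq.~\eqref{eq:rqt-process-norm-setup}) constrains only $W_{\mathrm{la}}$.

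\textbf{Step 1: $S_k$ is graded by Pauli support type.} Each qubit pair $X_1X_2$ has four support types: $\id\otimes\id$, $\sigma\otimes\id$ (type $X_1$), $\id\otimes\sigma$ (type $X_2$), and $\sigma\otimes\sigma$ (type $X_1X_2$), where $\sigma$ is traceless. I will show that $S_k$ splits as the direct sum of its intersections with these four type-subspaces.
\begin{itemize}
\item For a twirled world, $S_k$ is the image of the group-average projector $P=\frac{1}{|G|}\sum_g \mathrm{Ad}(U_g\otimes \overline{U_g})$ on Choi space. Here I use that covariant CP maps span the invariant subspace, which is standard from the twirling map and the invariance characterisation in Appendix~\ref{app:characterisation-twirled-W}. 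Each $\mathrm{Ad}(U\otimes V)$ acts factor-wise by unitary conjugation, which preserves tracelessness on each factor. Hence $P$ maps every type-subspace into itself.
\item For RQT, $S_k=L^{X_1X_2}_{\bR}$. This is the image of $O\mapsto (O+\overline{O})/2$, and entrywise conjugation also preserves each type-subspace.
\end{itemize}
By Hilbert--Schmidt orthogonality of the types, the same grading then holds for $S_A\otimes S_B$. Consequently $W_{\mathrm{la}}$ is a sum of its type components, each lying in $S_A\otimes S_B$.

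\textbf{Step 2: affine span of the restricted CPTP set.} The completely depolarizing channel, with Choi matrix $\id/d_2$, is both covariant and real, and it lies in the relative interior of the restricted CPTP Choi matrices. Therefore the affine hull of the available CPTP Choi matrices is $\id/d_2+T_k$, where
\[
T_k=\{T\in S_k:\Tr_{X^{(k)}_2}T=0\}.
\]
This is the step I expect to need the most care. One must check that small perturbations of $\id/d_2$ within $T_k$ remain positive and still lie in the restricted class, which holds because $S_k$ is a linear subspace closed under the relevant projector. By Step 1, $T_k$ is exactly the part of $S_k$ of type $X_2$ or $X_1X_2$.

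Substituting $M_k=\id/d_2+T_k$ into the normalization condition and expanding gives three families of linear conditions on $W_{\mathrm{la}}$:
\begin{align*}
\Tr[W_{\mathrm{la}}(T_A\otimes\id)]&=0,\\
\Tr[W_{\mathrm{la}}(\id\otimes T_B)]&=0,\\
\Tr[W_{\mathrm{la}}(T_A\otimes T_B)]&=0,
\end{align*}
for all $T_k\in T_k$, together with the trace condition.

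\textbf{Step 3: these conditions remove exactly the forbidden components.} By Pauli orthogonality, the conditions above set to zero every component of $W_{\mathrm{la}}$ whose type lies in $\{A_2,A_1A_2\}\otimes\{\id,B_2,B_1B_2\}$ or in its mirror image $\{\id,A_2,A_1A_2\}\otimes\{B_2,B_1B_2\}$. This union is precisely the OCB-forbidden list: every forbidden term has a party with nontrivial output and no nontrivial input at the other party, and every allowed type such as $A_1A_2B_1$ or $A_2B_1$ escapes these constraints.

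A type that is absent from $S_k$ contributes nothing to $W_{\mathrm{la}}$ by construction. A type that is present is tested by the corresponding $T_k$, since Step 1 shows that each present type carries its own elements of $T_k$. In both cases the forbidden component of $W_{\mathrm{la}}$ vanishes. Hence all forbidden support lies in $W_{\mathrm{ga}}$, which by definition is orthogonal to every product of available tests and so is invisible to local symmetric tomography.
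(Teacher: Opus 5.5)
Your argument is correct, but for the twirled half it takes a different route from the paper.

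\textbf{How the paper proceeds.} The paper splits into two cases.
\begin{enumerate}
\item For RQT it uses Lemma~\ref{lem:rqt-visible-sector-qubits}. That lemma is exactly your Steps~2--3, carried out in the explicit real Pauli basis: write $M=\tfrac12\id+N_X$, perturb, and kill the $T_A\otimes\id$, $\id\otimes T_B$ and $T_A\otimes T_B$ components.
\item For the twirled world it instead invokes the symmetrization $\widetilde W$ from the proof of Theorem~\ref{thm:twirl-subset}. This is an ordinary QT process with the same statistics on all covariant tests, so $P_S(W)=P_S(\widetilde W)$. The paper then observes that $P_S=\mathcal T_A\otimes\mathcal T_B$ is block-diagonal in the support sectors. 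Hence it cannot create forbidden support out of the OCB-allowed support of $\widetilde W$.
\end{enumerate}

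\textbf{How your route differs.} Your Step~1 grading is the same block-diagonality observation. You use it, however, to run the RQT-style normalization argument uniformly in both cases.

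\textbf{What each route buys.} Your route:
\begin{itemize}
\item is self-contained: it uses only the normalization constraints, not Theorem~\ref{thm:twirl-subset} and not the OCB support pattern as an input;
\item treats the unitary and antiunitary restrictions identically;
\item does not depend on the qubit Pauli basis, since the identity/traceless sector decomposition makes sense in any local dimension.
\end{itemize}
The paper's twirled route buys a stronger structural fact: the accessible part $W_{\mathrm{la}}$ is the accessible projection of a genuine QT process, which ties the proposition to the operational-equivalence theorem. That symmetrization has no antiunitary counterpart. Conjugating a single party acts like a partial transpose and does not preserve positivity. This is why the paper needs a separate lemma for RQT.

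\textbf{One correction in Step~3.} Your verbal gloss, ``every forbidden term has a party with nontrivial output and no nontrivial input at the other party'', fails for $A_1A_2B_1B_2$, where both parties have nontrivial inputs. That is precisely the global-loop type populated in Table~\ref{tab:forbidden-pauli}. Your set-level description is nonetheless right, since this type lies in $\{A_2,A_1A_2\}\otimes\{B_2,B_1B_2\}$. The accurate gloss is: some party has nontrivial output while the other party's factor is not a pure input term.
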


The proof is given in Appendix~\ref{app:causal-loops}, Theorem~\ref{thm:sym-ocb-forbidden-invisible}. Local agents equipped with symmetric instruments have no means of detecting $W_{\mathrm{ga}}$: it contributes nothing to any observable statistics.

The component $W_{\mathrm{ga}}$ nevertheless plays a structural role. Positivity is imposed on the full representative, $W_{\mathrm{la}}+W_{\mathrm{ga}}$, not on $W_{\mathrm{la}}$ alone. Thus a locally accessible component that would not by itself define a positive ordinary-QT process can become admissible once an appropriate invisible component supplies the missing positivity. Fundamentally, the observable statistics are obtained by projecting the process cone onto the locally accessible subspace, and allowing nonzero $W_{\mathrm{ga}}$ can enlarge this projection.

\section{Finite unitary symmetries}
\label{sec:finite-unitary}

We turn now to the first main result.  For finite unitary symmetry restrictions, the admissible local instruments are covariant ones, and the corresponding process representative cone can be strictly larger than the ordinary QT process cone.  For finite groups and finite-dimensional process experiments, the theorem below establishes that this enlargement leaves the process correlations entirely unchanged.

\begin{theorem}[Finite unitary symmetries are invisible to process correlations]
\label{thm:twirled-operational-equivalence}
Let $G$ be a finite group.  Then
\begin{align}
\calC_{\mathsf{Tw}_G}^{\proc,\fin}
=
\calC_{\QT}^{\proc,\fin}.
\label{eq:twirled-qt-equivalence}
\end{align}
\end{theorem}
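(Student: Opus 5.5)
We prove the two inclusions separately, following the twofold intuition stated in the introduction: symmetrization for $\subseteq$, and reference-frame simulation for $\supseteq$.

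\emph{Inclusion $\calC_{\mathsf{Tw}_G}^{\proc,\fin}\subseteq\calC_{\QT}^{\proc,\fin}$.} Fix a twirled process experiment: an invariant representative $W\in L_{\inv}$ satisfying Eqs.~\eqref{eq:twirled-process-pos}--\eqref{eq:twirled-process-norm}, and covariant instruments $\{M^{(k)}_{a_k|x_k}\}$. The plan is to build an ordinary QT process on enlarged local systems that reproduces the same statistics.

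First, enlarge each party's input and output by a classical register $R_k$ carrying a copy of the group label. Define the new process by
\[
W' = \frac{1}{|G|}\sum_{g}\ketbra{g}{g}^{R_1^{(1)}}\otimes\cdots\otimes\ketbra{g}{g}^{R_1^{(n)}}\otimes \tilde W_g ,
\]
where $\tilde W_g$ is a suitably rotated copy of $W$, possibly with $V_g$ applied on one side only. The idea is that the shared random $g$ acts as a reference frame.

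Second, check OCB normalization for arbitrary CPTP $M_k$. This reduces to an expression of the form
\[
\frac{1}{|G|}\sum_g \Tr\!\left[W\, V_g\left(\textstyle\bigotimes_k M_k^{(g)}\right)V_g^\dagger\right],
\]
whose local averages $\frac{1}{|G|}\sum_g V^{(k)}_g M^{(k)}_g V^{(k)\dagger}_g$ one hopes are covariant CPTP. However, the average does not factorize across parties, because $g$ is shared. This is the main obstacle.

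To get around it, I would try a different route: use the structure of $W\ge 0$ in $L_{\inv}$. By Proposition~\ref{prop:ocb-forbidden-invisible} and its general-dimension analogue, decompose $W=W_{\mathrm{la}}+W_{\mathrm{ga}}$, and seek an ordinary valid process $W''$ whose projection onto the locally accessible subspace equals $W_{\mathrm{la}}$. Then $W''$ reproduces every covariant statistic.

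The concrete candidate is the local-twirl symmetrization $W''=\big(\bigotimes_k \mathcal{G}_k\big)^*(W)$, where $\mathcal{G}_k$ is the independent local $G$-twirl on each party's Choi space. Covariant Choi operators are exactly the fixed points of $\mathcal{G}_k$, so
\[
\Tr\!\left[W''\textstyle\bigotimes_k M_k\right]=\Tr\!\left[W\textstyle\bigotimes_k\mathcal{G}_k(M_k)\right].
\]
Since $\mathcal{G}_k$ maps CPTP maps to covariant CPTP maps, normalization of $W''$ for all CPTP $M_k$ follows at once. Positivity of $W''$ holds because each local twirl is a mixture of conjugations by unitaries $V^{(k)}_{g_k}$, and these preserve $W\ge0$. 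Statistics are preserved because the $M_k$ in the experiment are already covariant. Ancilla-assisted positivity is automatic from $W''\ge0$. This settles the inclusion cleanly.

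\emph{Inclusion $\calC_{\QT}^{\proc,\fin}\subseteq\calC_{\mathsf{Tw}_G}^{\proc,\fin}$.} Given a QT process $W$ with instruments on $X^{(k)}_1X^{(k)}_2$, embed each system into $\calH_X\otimes\bC[G]$, with $G$ acting as $\mathrm{id}\otimes$ (left regular representation). This is the standard finite reference-frame construction of Ref.~\cite{bartlett2007}.

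Each arbitrary instrument $M$ is encoded as the covariant instrument
\[
\sum_g \ketbra{g}{g}\text{-controlled } \calU_g\circ M\circ \calU_g^{\dagger},
\]
acting on the system together with the frame registers. The process is built from $W$ with the reference frames prepared in aligned states $\ket{e}$ across parties.

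The process is then twirled globally by $V_g$. This makes it invariant and keeps it positive. Global invariance together with covariance of the local maps leaves the statistics equal to those of the original experiment. Normalization against covariant CPTP maps needs a check: a covariant map conjugated by the frame decoding becomes an ordinary CPTP map on $X$, up to the frame-register contributions, and this reduces the check to the QT normalization of $W$.

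The delicate step here is routing the frame registers through input and output so that no new causal loops appear. I would attach the frame as part of the input state of each party via $W\otimes\ketbra{e\cdots e}{e\cdots e}$, and trace it out on the output side.
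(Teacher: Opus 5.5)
Your proposal is correct and follows the paper's proof in both directions. The forward inclusion is exactly the independent local twirl of Theorem~\ref{thm:twirl-subset}; drop the appeal to $W_{\mathrm{la}}$ and Proposition~\ref{prop:ocb-forbidden-invisible}, which is unneeded and would be circular, since the paper proves the twirled case of that proposition from this very step. The reverse inclusion is the regular-representation frame simulation of Theorem~\ref{thm:simulation}, with a frame state fed into each input register and the frame outputs traced out; the paper uses independent uniform labels, $\widehat W=(\mathcal D_1^\dagger\otimes\cdots\otimes\mathcal D_n^\dagger)(W)$, where you use globally twirled aligned frames, and the normalization check you left open closes term by term because for each fixed frame label the decoded local maps are CPTP on $X^{(k)}_1\to X^{(k)}_2$. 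Also, since you let $G$ act trivially on $X$, $\calU_g$ is the identity there, every CPTP map on $X$ is already covariant, and the frames are in fact dispensable.
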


The proof is given in Appendix~\ref{app:correlation-sets}, Theorem~\ref{thm:findimequi}.  For the forward inclusion, averaging any twirled process representative over independent local group actions produces an ordinary QT process with identical probabilities on all covariant instruments; this gives $\mbox{$\calC_{(G,U)}^{\proc}(d) \subseteq \calC_{\QT}^{\proc}(d)$}$ at every fixed local input-output dimensions $d$ and representations $(G,U)$.  For the reverse inclusion, finite-dimensional reference-frame registers carrying the regular representation of $G$ allow any QT process experiment to be encoded faithfully inside $\mathsf{Tw}_G$, establishing equality at the level of $\calC_{\mathsf{Tw}_G}^{\proc,\fin}$.

The extra representatives admitted by a finite unitary symmetry restriction therefore carry no additional correlation content.  No process experiment, however cleverly designed within the twirled theory, can produce correlations beyond those available in ordinary QT.

\section{Real quantum theory}
\label{sec:rqt-corr}

We turn now to the second main result.  The antiunitary restriction given by complex conjugation yields real quantum theory, in which states, effects, Choi matrices, and process representatives are all real in a fixed basis~\cite{ying2025a}.  The reference-frame simulation argument of the previous section relies essentially on finite unitary group actions and does not extend to this setting.  In the bipartite finite-dimensional setting considered below, the following theorem shows that the enlarged RQT process cone has operationally visible consequences in the process-matrix framework.

\begin{theorem}[Real process correlations strictly contain complex ones]
\label{thm:QT-strict-RQT}
For bipartite finite-dimensional process-matrix correlations,
\begin{align}
\calC_{\QT}^{\proc,\fin}
\subsetneq
\calC_{\RQT}^{\proc,\fin}.
\label{eq:QT-strict-RQT}
\end{align}
\end{theorem}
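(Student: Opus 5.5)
The proof has two parts: the inclusion $\calC_{\QT}^{\proc,\fin}\subseteq\calC_{\RQT}^{\proc,\fin}$, and a separating correlation.

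\textbf{Inclusion.} We use the standard real embedding of complex quantum theory. A complex operator $O=A+\iu B$ on $\mathbb{C}^d$ is mapped to the real operator $O_{\bR}=A\otimes\id_2+B\otimes J$ on $\bR^d\otimes\bR^2$, with $J$ the real antisymmetric matrix satisfying $J^2=-\id$. This map preserves positivity, products and sums. Traces double, $\Tr O_{\bR}=2\,\mathrm{Re}\,\Tr O$, so normalizations are halved where needed.

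Given a complex process $W$ and complex local instruments, we embed each local input and output system with one real qubit and take the corresponding real Choi matrices. We then check that each embedded local instrument is still a real CPTP instrument.

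Next we must show that the embedded process $\tilde W$ is real, positive, and satisfies Eq.~\eqref{eq:rqt-process-norm-setup} for \emph{all} real CPTP $M_k$, not just for embedded ones. For this, we write the embedded process as a complex-linear image of $W$ (its real and imaginary parts combined with suitable $\id\pm J$ projectors on the auxiliary qubits). We then argue that pairing $\tilde W$ with a general real CPTP product equals pairing $W$ with a product of complex CPTP maps, obtained by compressing each real map to the complex subspace. Complex OCB normalization, Eq.~\eqref{eq:qt-process-norm}, then gives the value $1$.

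Probabilities are reproduced because $\mathrm{Re}\,\Tr=\Tr$ on Born-rule values, which are real. This step is routine but needs care with the auxiliary-qubit normalization. An alternative I would keep in reserve is to symmetrize $W$ with $\overline{W}$ on the doubled space.

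\textbf{Strictness.} Following the introduction, we use the LGYNI causal inequality~\cite{branciard2016simplest}. By Liu and Chiribella~\cite{liuChiribella2025}, every finite-dimensional complex process strategy has LGYNI success probability at most some dimension-independent bound $p^\ast$. It therefore suffices to exhibit a bipartite real process $W\in L_{\bR}$ and real instruments whose LGYNI value exceeds $p^\ast$.

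To find one, I would work with qubit inputs and outputs. The real CPTP Choi matrices span a strictly smaller space, because the antisymmetric $Y$-type Pauli components are invisible. So normalization only constrains the Pauli support types built from $\{\id,X,Z\}$ together with even numbers of $Y$'s, while Proposition~\ref{prop:ocb-forbidden-invisible} lets the globally accessible part $W_{\mathrm{ga}}$ carry forbidden terms that restore positivity.

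Concretely, I would set up a semidefinite program over real symmetric $W\ge0$, imposing the real normalization conditions on the span of real CPTP products, with fixed real measure-and-prepare instruments. I would maximize the LGYNI value, then round the optimum to a closed-form $W$ and verify positivity and normalization analytically.

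\textbf{Main obstacle.} The hard step is guaranteeing a genuine gap above $p^\ast$ while making sure that $W$ is actually valid. One point needs particular care: the validity notion must be the one stated, with normalization only over local real CPTP maps on the process systems. Under ancilla-assisted normalization, the extra representatives may become invalid. The embedding argument in the first part relies on the same notion, so both halves must be checked against the same definition.
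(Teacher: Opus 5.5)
Your two-part structure is the paper's, and your inclusion argument is essentially its proof. You realify the Kraus operators of the complex instruments, set $\widehat W=\dec_{AB}^{\dagger}(W_{\QT})$ with $\dec$ the Choi-level recomplexification, and take $W_{\RQT}=\Re\widehat W$. Normalization against an arbitrary real CPTP product then reduces to complex OCB normalization, and $\dec\circ\enc=\id$ reproduces the probabilities. Two slips need fixing:
\begin{itemize}
\item The projectors on the auxiliary qubits are $\tfrac12(\id\pm\iu J)$, not $\id\pm J$.
\item The decoding must attach the input auxiliary in a $J$-eigenstate but \emph{trace out} the output auxiliary; this is what the paper's two terms with $V_{X_o}$ and $\bar V_{X_o}$ do. If you compress the output as well, the real unitary channel $\id\otimes Z_{U_X}$, which swaps the two $J$-eigenspaces, decodes to the zero map, and your normalization step fails.
\end{itemize}

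The genuine gap is in the strictness half, which carries all the content of the theorem. You give no witness, and the search you propose is narrower than what the paper's witness actually uses. In LGYNI the outcome in a party's lazy setting is irrelevant, so only the channels $\sum_a M_{a|0}$ and $\sum_b M_{b|0}$ matter. In the paper's see-saw optimum (value $0.8602$) these are real rotation channels. Their Choi operators have a nonzero $Y^{X_1}Y^{X_2}$ component, so they read out visible-sector terms such as $Y^{A_1}Y^{A_2}\sigma^{B_1}$. Real measure-and-prepare instruments have Choi operators $\sum_k E_k\otimes\rho_k$ with real $E_k,\rho_k$. These lie in $\mathrm{span}\{\id,X,Z\}^{\otimes 2}$ and cannot probe those directions. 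An SDP over $W$ with fixed real measure-and-prepare instruments therefore explores a strictly smaller family, and nothing in your plan shows it can exceed $0.8194$. You need to optimize over general real instruments as well, for instance by alternating SDPs over $W$, Alice's instruments and Bob's instruments, as the paper does.

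Your plan to round to a closed form and verify analytically would be more rigorous than the paper. The paper certifies its witness only numerically: residuals are about $10^{-10}$ and the smallest eigenvalue about $-10^{-13}$, against a margin of about $4\times10^{-2}$. Your caveat about the validity notion is also well placed. The paper's own note says its separating process becomes invalid once normalization is required with shared ancillas. The separation is therefore established only under the weaker normalization (N1) used in the statement.
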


The inclusion follows from an extension of the standard realification of complex systems, instruments, and processes~\cite{stueckelberg1960,mckague2009,ying2025a}.
Strictness is certified in the End Matter by an explicit finite-dimensional bipartite RQT correlation whose LGYNI value exceeds the dimension-independent complex-QT process bound. The inclusion is proved in Appendix~\ref{app:correlation-sets}, Theorem~\ref{thm:QT-strict-RQT-supp}; the explicit separating distribution is given in the End Matter and the numerical realization is given in Appendix~\ref{app:correlation-sets}.

\ifshowhierarchyfigure
\begin{figure}[t]
    \centering
\begin{tikzpicture}[scale=0.90, every node/.style={transform shape}]
    \definecolor{rqtpurple}{RGB}{190,145,245}
    \definecolor{rqtpurpledark}{RGB}{95,45,150}
    \definecolor{qtgreen}{RGB}{120,245,185}
    \definecolor{qtgreendark}{RGB}{20,120,75}
    \definecolor{pointorange}{RGB}{230,110,40}

    \begin{scope}[shift={(0,2.35)}]
        \node[
            font=\bfseries,
            anchor=east
        ] at (-2.15,0) {Fixed causal structure};

        \filldraw[
            fill=rqtpurple!35,
            draw=rqtpurpledark,
            thick
        ] (0,0) ellipse (1.75cm and 0.90cm);

        \node[
            text=rqtpurpledark,
            font=\large\bfseries
        ] at (0,0.45) {$\mathcal{C}_{\mathrm{QT}}$};

        \filldraw[
            fill=qtgreen!55,
            draw=qtgreendark,
            thick
        ] (0,-0.28) ellipse (1.05cm and 0.43cm);

        \node[
            text=qtgreendark,
            font=\large\bfseries
        ] at (0,-0.28) {$\mathcal{C}_{\mathrm{RQT}}$};
    \end{scope}

    \draw[->, thick, >=stealth] (0,1.25) -- (0,0.95);

    \begin{scope}[shift={(0,0)}]
        \node[
            font=\bfseries,
            anchor=east
        ] at (-2.15,0) {Definite causal order};

        \filldraw[
            fill=rqtpurple!35,
            draw=rqtpurpledark,
            thick
        ] (0,0) ellipse (1.75cm and 0.90cm);

        \filldraw[
            fill=qtgreen!45,
            draw=qtgreendark,
            thick,
            opacity=0.78
        ] (0,0) ellipse (1.75cm and 0.90cm);

        \node[
            text=black,
            font=\large\bfseries,
            align=center
        ] at (0.0,0.0)
        {$\mathcal{C}_{\mathrm{QT}}=\mathcal{C}_{\mathrm{RQT}}$};


    \end{scope}

    \draw[->, thick, >=stealth] (0,-1.10) -- (0,-1.40);

    \begin{scope}[shift={(0,-2.35)}]
        \node[
            font=\bfseries,
            anchor=east
        ] at (-2.15,0) {Indefinite causal order};

        \filldraw[
            fill=qtgreen!45,
            draw=qtgreendark,
            thick
        ] (0,0) ellipse (1.75cm and 0.90cm);

        \node[
            text=qtgreendark,
            font=\large\bfseries
        ] at (0,0.50) {$\mathcal{C}_{\mathrm{RQT}}$};

        \filldraw[
            fill=rqtpurple!35,
            draw=rqtpurpledark,
            thick
        ] (0,-0.28) ellipse (1.05cm and 0.43cm);

        \node[
            text=rqtpurpledark,
            font=\large\bfseries
        ] at (0,-0.30) {$\mathcal{C}_{\mathrm{QT}}$};

        \filldraw[
            fill=pointorange,
            draw=black,
            line width=0.5pt
        ] (1.25,0.08) circle (0.07cm);

        \node[
            font=\bfseries,
            text=black
        ] at (2.2,0.08) {$p_\star$};

        \draw[
            ->,
            thick,
            >=stealth
        ] (2.0,0.08) -- (1.34,0.08);
    \end{scope}

\end{tikzpicture}
\caption{
Real--complex hierarchy across causal assumptions. For fixed definite multipartite causal structures, QT is strictly richer than RQT
\cite{renou2021}. For bipartite definite-order correlations, RQT and QT coincide~\cite{mckague2009,aleksandrova2013}.
For process correlations with indefinite causal order, the hierarchy reverses: RQT is strictly richer than QT, as shown in this work. The point $p_\star$
denotes the separating correlation constructed here. 
}
\label{fig:real-complex-hierarchy-reversal}
\end{figure}
\fi

The inclusion in Eq.~\eqref{eq:QT-strict-RQT} contrasts with fixed definite causal order network separations.  In multipartite networks with a fixed definite causal order, complex quantum theory can be strictly richer than real quantum theory~\cite{renou2021,ying2025a}.  Here, removing the fixed background causal order changes the operational hierarchy: real quantum theory realizes process correlations outside finite-dimensional complex quantum theory.

\section{Conclusion}
\label{sec:conclusion}

We have characterised finite-dimensional process-matrix correlations under finite symmetry restrictions, organizing them according to their unitary or antiunitary implementation.

For finite unitary symmetry groups, such restrictions, however severe at the level of local operations, are operationally neutral:
\begin{align*}
\mathcal{C}^{\mathrm{proc,fin}}_{\mathsf{Tw}_G}
=
\mathcal{C}^{\mathrm{proc,fin}}_{\mathrm{QT}} .
\end{align*}
By contrast, in the bipartite finite-dimensional process setting, the antiunitary restriction defining real quantum theory has operational consequences:
\begin{align*}
\mathcal C^{\mathrm{proc,fin}}_{\QT}\subsetneq \mathcal C^{\mathrm{proc,fin}}_{\RQT}.
\end{align*}
Thus the operational effect of a symmetry restriction depends on both the restriction of the local laboratories and the unitary or antiunitary character of the symmetry.

The striking feature of the real case is the direction of the separation. For  fixed definite causal order, real quantum theory is strictly weaker than complex quantum theory: some complex-QT network correlations cannot be reproduced in RQT~\cite{renou2021,ying2025a}. In the bipartite process-matrix setting, the inclusion is reversed. Once the causal order is not fixed, RQT realizes finite-dimensional process correlations that no finite-dimensional complex quantum process can reproduce. Thus indefinite causal order changes not only the size of the correlation set, but also the relative ordering of the real and complex theories.

The common starting point is the failure of local tomography. In both twirled theories and RQT, the locally available instruments span only a proper subspace of the full operator space~\cite{centeno2025}. A process representative may therefore contain components that are invisible to all local tests, including OCB-forbidden components associated with local causal-loop terms. These components have no observable signature of their own, but they can supply positivity support for the locally accessible part of the process.

What differs is whether this extra freedom survives at the level of observable correlations. For finite unitary symmetries, it is neutralized by finite reference-frame simulations, and the process correlations remain exactly those of ordinary QT. For real quantum theory, it is not neutralized: the hidden positivity support allows locally accessible process components that lead to correlations beyond finite-dimensional complex-QT processes. 
In fixed definite causal order networks, complex QT is strictly richer than RQT~\cite{renou2021}; in the bipartite process-matrix setting, the inclusion is reversed.

We note that our conclusions are  tied to the standard process-matrix notion of locality: laboratories are local because their input and output spaces are tensor factors. Alternative formulations of RQT that modify composition or source independence~\cite{barriosHita2025,hoffreumonWoods2026} may therefore offer a way around the separation, but only by changing the notion of locality, and hence the definition of a process matrix itself. This makes indefinite causal order a sharp testing ground for such formulations: beyond recovering complex-QT behavior in definite-order scenarios, they must say what a process is when the causal order is not fixed, and what new structures this reveals.

Several directions remain open. Our finite reference-frame argument does not cover infinite or continuous symmetry groups, such as spatial rotations, and it is unknown whether the equivalence with QT persists there. Beyond symmetry restrictions, one could ask whether the dependence on causal assumptions found here for real versus complex quantum theory has an analogue for quaternionic quantum theory~\cite{barnum2015}, or more generally for non-quantum local theories. Recent work on indefinite causal order in boxworld theories provides a complementary direction along these lines~\cite{bavaresco2024}. A broader goal is to identify which features of the local theory determine whether the enlarged process cone is operationally neutralized, as for finite unitary symmetries, or becomes visible in correlations, as for real quantum theory.

Another natural direction concerns antinomicity~\cite{kunjwalOreshkov2023,kunjwalOreshkov2024}. In bipartite scenarios, antinomicity coincides with causal-inequality violation, so the LGYNI separation already shows that real process matrices can be strictly more antinomic than complex ones. In tripartite scenarios, where antinomicity is strictly stronger than noncausality, one can ask whether a real process can violate a properly nomic inequality~\cite{wechsAhmedLiuKunjwalForthcoming}.

\section*{Data Availability}
The code and data are available at~\cite{code}. The version corresponding to the preprint is ``v0.1-preprint.''

\section*{Acknowledgments}
This work received support from the French government under the France 2030 investment plan, as part of the Initiative d’Excellence d’Aix-Marseille Université-A*MIDEX, AMX-22-CEI-01

	\bibliography{RAVI.bib}

    \onecolumngrid
    \section{END MATTER}

    \twocolumngrid
    \textit{End Matter: Strict inclusion in Theorem~\ref{thm:QT-strict-RQT}}---.
    Suppose that Alice chooses a measurement setting $x$ and applies a CP-instrument with the input system $A_1$ and output system $A_2$ whose Choi operators are $\bM^A:=\{M^{A_1A_2}_{a|x}\}_{a,x}$.
    Bob chooses a setting $y$ and applies an instrument with the input system $B_1$ and output system $B_2$ whose Choi operators are $\bM^B:=\{M^{B_1B_2}_{b|y}\}_{b,y}$.

    The bipartite conditional probability distribution obtained in the process matrix form is
    \begin{equation*}
        p(a,b|x,y)=\Tr\left[W\left(M^{A_1A_2}_{a|x}\otimes M^{B_1B_2}_{b|y}\right)\right].
    \end{equation*}
    It is enough to find one correlation function $\sum_{a,b,x,y}\alpha_{abxy}p(a,b|x,y)$ whose achievable value in RQT goes beyond an upper bound of QT.

    The one example is the Lazy Guess Your Neighbour's Input (LGYNI) correlation function~\cite{branciard2016simplest}:
    \begin{equation*}
        \begin{split}
            I_{\LGY}(p)&=\frac{1}{4}[1+p(0,0|0,1)+p(1,0|0,1)+p(0,0|1,0)\\
            &\quad+p(0,1|1,0)+p(1,1|1,1)],
        \end{split}
    \end{equation*}
    which is bounded by $3/4$ for causal correlations.
    This is exactly the LGYNI functional optimized by Liu and Chiribella~\cite{liuChiribella2025}. 
    Their success probabilities are identified with $P_{\mathrm{succ}}^{01}=p(0,0|0,1)+p(1,0|0,1)$, $P_{\mathrm{succ}}^{10}=p(0,0|1,0)+p(0,1|1,0)$, and $P_{\mathrm{succ}}^{11}=p(1,1|1,1)$, so that their objective is precisely the functional \(I_{\LGY}\) above.
    Ref.~\cite{liuChiribella2025} optimizes this functional over ordinary complex quantum process correlations with arbitrary finite-dimensional local Hilbert spaces, arbitrary complex quantum instruments, and valid complex process matrices in the standard OCB sense. 
    This is precisely the correlation set denoted here by \(\mathcal C^{\mathrm{proc,fin}}_{\mathrm{QT}}\). 
    Therefore their dimension-independent bound applies directly to every \(p\in\mathcal C^{\mathrm{proc,fin}}_{\mathrm{QT}}\):
    \begin{equation*}
        I_{\LGY}(p)\le I_{\LGY}^{\mathrm{ICO}}\simeq 0.8194 .
    \end{equation*}
    The comparison is made at the level of the operational probabilities \(p(a,b|x,y)\), so possible differences in Choi-matrix conventions amount only to fixed relabellings or transpositions of process and instrument representatives and do not change the value of \(I_{\LGY}\).
    

    \begin{table}[tbp]
    \centering
    \begin{tabular}{ccccc}
    \toprule
    $(x,y)$
    & $p(0,0|x,y)$
    & $p(0,1|x,y)$
    & $p(1,0|x,y)$
    & $p(1,1|x,y)$ \\
    \midrule
    $(0,0)$ & $0.250000134$ & $0.249999867$ & $0.250000133$ & $0.249999866$ \\
    $(0,1)$ & $0.443682389$ & $0.056317613$ & $0.443682386$ & $0.056317612$ \\
    $(1,0)$ & $0.443682625$ & $0.443682151$ & $0.056317642$ & $0.056317582$ \\
    $(1,1)$ & $0.163750682$ & $0.085116869$ & $0.085037546$ & $0.666094904$ \\
    \bottomrule
    \end{tabular}
    \caption{
    Full distribution $p_\star$ achieving an LGYNI value of
    $0.8602061136164$. All entries are rounded to nine decimal places.
    }
    \label{tab:rqt-lgyni-distribution}
\end{table}
    In RQT, the explicit finite-dimensional process and real instruments reported in Appendix~\ref{app:correlation-sets} generate the probability distribution in Table~\ref{tab:rqt-lgyni-distribution}, with
    \begin{equation*}
        I_{\LGY}(p_\star)=0.8602061136164.
    \end{equation*}
    Since the bound of Ref.~\cite{liuChiribella2025} holds after optimizing over all finite local dimensions, all ordinary complex instruments, and all valid ordinary complex process matrices, no finite-dimensional complex-QT process strategy can reproduce a distribution with this LGYNI value.
    Indeed, this value exceeds the dimension-independent complex-QT process bound by approximately \(4.08\times 10^{-2}\), so
    \begin{equation*}
        p_\star\notin \mathcal C^{\mathrm{proc,fin}}_{\mathrm{QT}} .
    \end{equation*}
    On the other hand, the explicit realization in Appendix~\ref{app:correlation-sets} gives \(p_\star\in \mathcal C^{\mathrm{proc,fin}}_{\mathrm{RQT}}\). 
    Together with the inclusion \(\mathcal C^{\mathrm{proc,fin}}_{\mathrm{QT}}\subseteq \mathcal C^{\mathrm{proc,fin}}_{\mathrm{RQT}}\) proved in Appendix~\ref{app:correlation-sets}, this establishes the strict inclusion in Theorem~\ref{thm:QT-strict-RQT}.


    \begin{table}[t]
    \centering
    \begin{tabular}{cc}
    \toprule
    Pauli string & coefficient \\
    \midrule
    $X^{A_1}Y^{A_2}X^{B_1}Y^{B_2}$ & $-0.037494312$ \\
    $Y^{A_1}Z^{A_2}X^{B_1}Y^{B_2}$ & $+0.035808848$ \\
    $X^{A_1}Y^{A_2}Y^{B_1}Z^{B_2}$ & $-0.034888881$ \\
    $Y^{A_1}Z^{A_2}Y^{B_1}Z^{B_2}$ & $+0.033317409$ \\
    $X^{A_1}Y^{A_2}Y^{B_1}X^{B_2}$ & $-0.013800738$ \\
    $Y^{A_1}Z^{A_2}Y^{B_1}X^{B_2}$ & $+0.013180541$ \\
    $Y^{A_1}X^{A_2}X^{B_1}Y^{B_2}$ & $+0.011117532$ \\
    $Y^{A_1}X^{A_2}Y^{B_1}Z^{B_2}$ & $+0.010343143$ \\
    $Y^{A_1}X^{A_2}Y^{B_1}X^{B_2}$ & $+0.004091046$ \\
    $X^{A_1}Y^{A_2}Z^{B_1}Y^{B_2}$ & $-0.000963130$ \\
    $Y^{A_1}Z^{A_2}Z^{B_1}Y^{B_2}$ & $+0.000921186$ \\
    $Y^{A_1}X^{A_2}Z^{B_1}Y^{B_2}$ & $+0.000285984$ \\
    \bottomrule
    \end{tabular}
    \caption{
    Largest (top 12) Pauli coefficients of the obtained RQT process matrix outside
    the ordinary complex-QT bipartite process-matrix subspace. The entries
    are ordered by the absolute values of the coefficients. The Pauli-string
    order is $A_1,A_2,B_1,B_2$, and the coefficients are rounded to nine
    decimal places.
    }
    \label{tab:forbidden-pauli}
    \end{table}

    As a diagnostic of the RQT process matrix obtained by the see-saw search, we decomposed the obtained process matrix $W$ in the four-qubit Pauli basis. 
    The Pauli string
    order is $A_1,A_2,B_1,B_2$, where $A_1,B_1$ are input systems and
    $A_2,B_2$ are output systems. 
    Explicitly, we write
    \begin{equation*}
    W=\sum_{\mu,\nu,\kappa,\lambda\in\{I,X,Y,Z\}}w_{\mu\nu\kappa\lambda}\,\sigma_\mu^{A_1}\otimes\sigma_\nu^{A_2}\otimes\sigma_\kappa^{B_1}\otimes\sigma_\lambda^{B_2},
    \end{equation*}
    with
    \begin{equation*}
    w_{\mu\nu\kappa\lambda}=\frac{1}{16}\Tr\!\left[\left(\sigma_\mu^{A_1}\otimes\sigma_\nu^{A_2}\otimes\sigma_\kappa^{B_1}\otimes\sigma_\lambda^{B_2}\right)W\right].
    \end{equation*}
    According to Ref.~\cite{oreshkov2012a}, for ordinary complex quantum theory, the bipartite process-matrix
    constraints allow only Pauli strings whose support is contained in
    \begin{equation*}
    \begin{split}
    &\mathcal S_{\mathrm{OCB}}=\\
    &\{\emptyset,\,A_1,\,B_1,\,A_1B_1,\,A_2B_1,\,A_1B_2,\,A_1A_2B_1,\,A_1B_1B_2\}.
    \end{split}
    \end{equation*}
    We call a Pauli string OCB-allowed if its support belongs to
    $\mathcal S_{\mathrm{OCB}}$, and OCB-forbidden otherwise.
    


The obtained RQT process matrix contains non-negligible OCB-forbidden coefficients, the largest of which are listed in Table~\ref{tab:forbidden-pauli}. 
All these terms have full support on $A_1A_2B_1B_2$, corresponding to global-loop terms in the sense of Ref.~\cite{oreshkov2012a}. 
Their Pauli structure makes them compatible with RQT while keeping them outside the locally accessible sector: each listed string contains one $Y$ factor in Alice's laboratory and one in Bob's laboratory. 
For example, the term $Y^{A_1}Z^{A_2}X^{B_1}Y^{B_2}$ has local factors $Y^{A_1}Z^{A_2}$ and $X^{B_1}Y^{B_2}$, so
\begin{align*}
\Tr\!\left[M^{A_1A_2}(Y^{A_1}\otimes Z^{A_2})\right]=0
\end{align*}
for every real Choi operator $M^{A_1A_2}$, with the analogous statement holding for Bob. 
At the same time, the full string contains an even number of $Y$ factors and is therefore globally real, so it is allowed in an RQT process representative while being inaccessible to local real tomography.

These coefficients expose the mechanism behind the separation. 
With the notation of the main text, the probabilities in Table~\ref{tab:rqt-lgyni-distribution} are determined by $W_{\mathrm{la}}$, while the OCB-forbidden components lie in $W_{\mathrm{ga}}$. 
The latter provide positivity support for the full process operator $W$, making admissible a locally accessible component $W_{\mathrm{la}}$ that realizes correlations beyond the complex-QT process bound. 
Thus the OCB-forbidden terms enlarge the projection of the admissible RQT process cone onto the locally accessible real subspace, while the LGYNI violation is read out through $W_{\mathrm{la}}$.

\clearpage
\newpage

\onecolumngrid
\appendix 

\section{Notation and conventions}
	\label{app:notation}
	
    We work throughout with finite-dimensional Hilbert spaces. For a system $S$, we write $L^S$ for the vector space of linear operators on $\mathcal H_S$. For a party $X$, we write $X=(X_1,X_2)$ for its input and output systems, so $L^{X_1X_2}$ denotes the vector space of linear operators on $\mathcal H_{X_1}\otimes\mathcal H_{X_2}$. More generally, $L^{X^{(1)}_1X^{(1)}_2\cdots X^{(n)}_1X^{(n)}_2}$ denotes the corresponding vector space of linear operators on the tensor-product Hilbert space. The identity is denoted by $\id$, the full trace by $\Tr$, and partial traces by $\Tr_Y$.
	
	We fix once and for all the computational basis used to define transpose and complex conjugation. Accordingly, $\overline{X}$, $X^T$, and $X^\dagger$ denote, respectively, entrywise complex conjugation, transpose, and adjoint in that basis. In particular, in RQT an operator is real iff $X=\overline{X}$.
	
	We use the Choi--Jamio\l{}kowski convention of Ref.~\cite{oreshkov2012a}. For a linear map $\mathcal{M}:L(\mathcal{H}_{X_1})\to L(\mathcal{H}_{X_2})$, its Choi operator is
	\begin{align*}
		M^{X_1X_2}:=\big[(\mathcal{I}\otimes \mathcal{M})(|\phi^+\rangle\langle\phi^+|)\big]^T,\qquad |\phi^+\rangle=\sum_j |jj\rangle.
	\end{align*}
	Equivalently,
	\begin{align*}
		M^{X_1X_2}=\sum_{i,j}|j\rangle\langle i|^{X_1}\otimes {\mathcal{M}(|i\rangle\langle j|)^T}^{X_2}.
	\end{align*}
	In this convention,
	\begin{align*}
		\mathcal{M}(\sigma)=\Tr_{X_1}\!\left[M^{X_1X_2}(\sigma\otimes \id)\right]^T.
	\end{align*}
	Also, $\mathcal{M}$ is completely positive iff $M^{X_1X_2}\ge 0$, and it is trace preserving iff $\Tr_{X_2}M^{X_1X_2}=\id^{X_1}$.
	
	A physical symmetry $g\in G$ acts on operators as $\mathcal{U}_g^X(O):=U_g^X O\,U_g^{X\dagger}$. In our Choi convention, the induced action on Choi operators is $M\mapsto V_g^X M (V_g^X)^\dagger$, where
	\begin{align*}
		V_g^X:=U_g^{X_1}\otimes \overline{U}_g^{X_2}.
	\end{align*}
	Thus, if $\mathcal{M}_g=\mathcal{U}_g^{X_2}\circ \mathcal{M}\circ \mathcal{U}_{g^{-1}}^{X_1}$, then its Choi operator is
	\begin{align*}
		M_g^{X_1X_2}=V_g^X M^{X_1X_2}(V_g^X)^\dagger.
	\end{align*}
	Accordingly, the local twirling map on Choi operators is
	\begin{align*}
		\mathcal{T}^X(M):=\int_G dg\,V_g^X M (V_g^X)^\dagger,
	\end{align*}
	with the integral replaced by the normalized sum when $G$ is finite.
	
	For any composite system $S$ carrying a unitary representation $V_g^S$, we write
\begin{align*}
    L_{\mathrm{inv}}^{S}
    :=
    \{M\in L^{S}:V_g^S M (V_g^S)^\dagger=M\ \forall g\in G\}
\end{align*}
for the invariant subspace. For any composite system $S$ equipped with the fixed product basis used for complex conjugation, we write
\begin{align*}
    L_{\mathbb R}^{S}
    :=
    \{M\in L^{S}:M=\overline M\}
\end{align*}
for the real subspace. Both $L_{\mathrm{inv}}^{S}$ and $L_{\mathbb R}^{S}$ are subspaces of the complex-linear operator space $L^S$. The self-adjoint parts used for Choi operators, process matrices, and tomography are
\begin{align*}
    \Herm^{S}
    &:=
    \{M\in L^{S}:M=M^\dagger\},\\
    \Herm_{\mathrm{inv}}^{S}
    &:=
    L_{\mathrm{inv}}^{S}\cap \Herm^{S},\\
    \Herm_{\mathbb R}^{S}
    &:=
    L_{\mathbb R}^{S}\cap \Herm^{S}.
\end{align*}
Thus, for a local input-output pair $X=(X_1,X_2)$, the locally accessible self-adjoint operator spaces are $\Herm_{\mathrm{inv}}^{X_1X_2}$ in a twirled world and $\Herm_{\mathbb R}^{X_1X_2}$ in RQT.
	
	Throughout the paper, a \emph{twirled process matrix} means the chosen globally invariant representative, while an RQT process matrix means the chosen real-symmetric representative of the corresponding operational equivalence class. These are global representative conventions: local symmetric tomography may access only a smaller subspace, namely the tensor product of the local symmetric self-adjoint spaces. Thus the characterisation theorems concern these fixed representative conventions, not arbitrary raw operators prior to quotienting by operational indistinguishability.
	
	In the bipartite qubit case we use the Pauli basis $\{\sigma_0,\sigma_1,\sigma_2,\sigma_3\}=\{\id,X,Y,Z\}$. Greek indices $\mu,\nu,\lambda,\gamma$ range over $\{0,1,2,3\}$, while Latin indices $i,j,k,l$ range over $\{1,2,3\}$. A support label such as $A_1$, $A_2B_1$, or $A_1B_1B_2$ means that the corresponding Pauli string is nontrivial  on those tensor factors and equal to the identity on the others. The terminology \emph{OCB-allowed} and \emph{OCB-forbidden} is used only in this bipartite qubit sense.
	
    For a theory $\mathcal{T}$, we write $\mathcal{C}_{\mathcal{T}}^{\mathcal{D}}$ for the set of correlations compatible with a fixed DAG $\mathcal{D}$, $\mathcal{C}_{\mathcal{T}}^{\mathrm{DAG}}$ for the union over all DAGs, and $\mathcal{C}_{\mathcal{T}}^{\proc}$ for the set obtainable from arbitrary process matrices. When needed, $W_{\mathrm{la}}^{\mathrm{sym}}$ denotes the orthogonal projection of a process matrix $W$ onto the locally accessible self-adjoint symmetric subspace, namely $\Herm_{\mathrm{inv}}^{A_1A_2}\otimes \Herm_{\mathrm{inv}}^{B_1B_2}$ in a twirled world and $\Herm_{\mathbb{R}}^{A_1A_2}\otimes \Herm_{\mathbb{R}}^{B_1B_2}$ in RQT.

	\subsection{Process matrices}
	\label{sec:pm-review}
	
	We recall the process-matrix framework of Ref.~\cite{oreshkov2012a}. Consider two parties, Alice and Bob, performing local operations in their laboratories, without assuming any underlying global causal order. Let $\mathcal H_{A_1}$ and $\mathcal H_{A_2}$ denote Alice's input and output Hilbert spaces, and similarly for Bob. We denote by $L^{A_1}$, $L^{A_2}$, and $L^{A_1A_2}$ the corresponding vector spaces of linear operators; similarly for Bob. 
	
	Each laboratory is described by quantum instruments. For Alice, these are collections of CP maps $\{\mathcal{M}_j^A\}$ from $L^{A_1}$ to $L^{A_2}$ such that $\sum_j\mathcal{M}_j^A$ is CPTP. By the Choi--Jamio\l{}kowski isomorphism, each CP map is represented by a positive operator $M_j^{A_1A_2}\in L^{A_1A_2}$, and the CPTP condition reads
	\begin{align}
		M^{A_1A_2}\ge 0,\qquad \Tr_{A_2}M^{A_1A_2}=\id^{A_1}. \label{eq:cptp-Choi}
	\end{align}
	The same holds for Bob. We use throughout the Choi--Jamio\l{}kowski convention of Ref.~\cite{oreshkov2012a}; see Section~\ref{app:notation}.
	
	A process matrix is an operator $W\in L^{A_1A_2B_1B_2}$ such that the joint outcome probabilities are given by the generalized Born rule
	\begin{align}
		P(i,j)=\Tr\!\bigl[W(M_i^{A_1A_2}\otimes M_j^{B_1B_2})\bigr], \label{eq:gen-Born}
	\end{align}
	and satisfy the following operational requirements:
	
	\begin{enumerate}
		\item[\textbf{(O1)}] \emph{Positivity:} $P(i,j)\ge 0$ for all local CP maps, including ancilla-assisted ones;
		\item[\textbf{(O2)}] \emph{Normalization:} $\sum_{i,j}P(i,j)=1$ for all pairs of complete local instruments.
	\end{enumerate}
	
	In QT, these conditions are equivalent to the standard characterisation~\cite{oreshkov2012a}.
	
	\begin{proposition}[Process matrices~\cite{oreshkov2012a}]
		\label{prop:W-charac}
		A matrix $W\in L^{X^{(1)}_1X^{(1)}_2\cdots X^{(n)}_1X^{(n)}_2}$ satisfies $(\mathrm{O}1)$ and $(\mathrm{O}2)$ iff
		\begin{align}
			W&\ge 0, \label{eq:W-pos}\\
			\Tr\!\bigl[W(M_1\otimes\cdots\otimes M_n)\bigr]&=1 \label{eq:W-norm}
		\end{align}
		for all local CPTP Choi matrices $M_k$.
	\end{proposition}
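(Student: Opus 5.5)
We prove the two directions separately. For sufficiency, assume $W\ge0$ and Eq.~\eqref{eq:W-norm}. Let each party apply an ancilla-assisted CP map: party $k$ receives its share $a_k$ of an ancilla state $\rho^{a_1\cdots a_n}$ and applies a CP map on $X^{(k)}_1a_k$. Its Choi operator $M_k^{X^{(k)}_1X^{(k)}_2a_k}$ is positive. The resulting probability is
\begin{align*}
\Tr\!\bigl[(W\otimes\rho^T)(M_1\otimes\cdots\otimes M_n)\bigr].
\end{align*}
Since $W\otimes\rho^T\ge0$ and the product of local Choi operators is positive, this quantity is non-negative. Thus (O1) holds. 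Summing over the outcomes of complete local instruments replaces each $M_{a_k|x_k}$ by the CPTP $M_{x_k}$, and Eq.~\eqref{eq:W-norm} then gives $\sum P=1$, which is (O2). I am reading (O2), as in the paper's N1 convention, as imposed only on instruments acting on the process systems.

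For necessity, positivity comes from the standard ancilla argument. Each party $k$ prepares a maximally entangled state between $X^{(k)}_1$ and an ancilla $a'_k$. Using (O1) with general ancilla-assisted CP maps together with a joint (even separable) effect on the ancillas, one can realize effective probabilities $\Tr[W\,\Pi]$ for arbitrary product positive operators on each $X^{(k)}_1X^{(k)}_2$. Allowing shared entangled ancillas between the parties, which the paper's phrase ``including ancilla-assisted ones'' permits, extends this to arbitrary positive $\Pi$ on the full space. Concretely, teleportation-type constructions let the parties realize $\Tr[W\,|\psi\rangle\langle\psi|]$ up to a positive factor for every global vector $|\psi\rangle$. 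From this we conclude $W\ge0$. This follows OCB's original argument, and I will cite \cite{oreshkov2012a} for the detailed construction.

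For the normalization condition, take any CPTP $M_k$ and complete it to a one-outcome instrument. Condition (O2) then gives Eq.~\eqref{eq:W-norm} directly. Hermiticity of $W$ follows from the fact that positive operators span the Hermitian operators, together with real-valued probabilities.

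The main obstacle is the positivity direction. The difficulty is showing that the local ancilla-assisted operations are rich enough to test $W$ against every global positive operator, not merely against product positive operators. Positivity on products alone would only give block-positivity of $W$. I would handle this first by using a shared maximally entangled ancilla and local Bell-type effects to implement the transpose trick. That reduces the required global positivity to the nonnegativity of the probability on an entangled input, as in \cite{oreshkov2012a}.
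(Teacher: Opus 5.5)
Your proposal is correct and takes essentially the same route as the paper. The paper itself only cites OCB for this proposition, but its own analogous proofs (Lemma~\ref{lem:aT} and Lemma~\ref{lem:aT-rqt}, feeding Theorems~\ref{thm:tw-process-matrices} and~\ref{thm:rqt-process-matrices}) use your structure: normalization from one-outcome instruments, Hermiticity from the span of positive testers, and positivity by realizing every positive $T$, up to scale, as an effective tester built from a shared ancilla $\rho\propto T^T$ and local Choi operators $\propto|\Omega_k\rangle\langle\Omega_k|$ on $X^{(k)}_1X^{(k)}_2R'_k$. That construction is the precise form of your ``teleportation-type'' step; note that a party cannot prepare entanglement on its received input $X^{(k)}_1$, and no joint effect on the ancillas is needed, since product positive operators are already available as local CP maps.
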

	
	We will later refer to the standard support-pattern description in the bipartite qubit case. Expanding
	\begin{align*}
		W=\sum_{\mu,\nu,\lambda,\gamma=0}^3 w_{\mu\nu\lambda\gamma}\,\sigma_\mu^{A_1}\otimes \sigma_\nu^{A_2}\otimes \sigma_\lambda^{B_1}\otimes \sigma_\gamma^{B_2},
	\end{align*}
	with $\{\sigma_0,\sigma_1,\sigma_2,\sigma_3\}=\{\id,X,Y,Z\}$, the normalization condition~\eqref{eq:W-norm} allows  the support types $1$, $A_1$, $B_1$, $A_1B_1$, $A_2B_1$, $A_1B_2$, $A_1A_2B_1$, and $A_1B_1B_2$~\cite{oreshkov2012a}. We call these \emph{OCB-allowed}. All remaining bipartite qubit support types are excluded in ordinary QT; we call them \emph{OCB-forbidden}. Each forbidden pattern is associated with causal cyclicity or postselection, so their automatic exclusion by the conditions of Proposition~\ref{prop:W-charac} can be seen as a consistency property of the framework.
	
	\subsection{Twirled and swirled quantum worlds}
	\label{sec:twirl-swirl-review}
	
	The process-matrix framework assumes that local laboratories have access to the full set of quantum instruments. If the parties lack a shared reference frame, the operationally accessible states, effects, and transformations are restricted to those symmetric under the relevant group action, equivalently to a superselection rule~\cite{bartlett2007}. Centeno \emph{et al.}~\cite{centeno2025} formalised this as a \emph{twirled quantum world}, and Ying \emph{et al.}~\cite{ying2025a} extended it to antiunitary symmetries under the name of \emph{swirled quantum world}. Because Proposition~\ref{prop:W-charac} quantifies over all local instruments, restricting to symmetric ones can enlarge the set of admissible process matrices.
	
	Let $G$ be a compact group acting on a system $S_i$ through a unitary representation $\{U_g^{S_i}\}_{g\in G}$, and define $\mathcal{U}_g^{S_i}(\cdot):=U_g^{S_i}(\cdot)U_g^{S_i\dagger}$. For a composite system $S=S_1\cdots S_n$, the action is $\mathcal{U}_g^S=\mathcal{U}_g^{S_1}\otimes\cdots\otimes \mathcal{U}_g^{S_n}$. An operator $O^S$ is $(G,U)$-invariant if $\mathcal{U}_g^S(O^S)=O^S$ for all $g$, and a channel $\mathcal{E}^{S'|S}$ is $(G,U)$-covariant if $\mathcal{E}^{S'|S}\circ \mathcal{U}_g^S=\mathcal{U}_g^{S'}\circ \mathcal{E}^{S'|S}$ for all $g$. The $(G,U)$-twirled quantum world is the subtheory consisting of invariant states and effects and covariant operations~\cite{centeno2025}.
	
	For an antiunitary symmetry, the analogous construction defines a swirled quantum world~\cite{ying2025a}. The relevant case here is time reversal, represented by $\{I,C\}$ with $C$ complex conjugation in a fixed basis. The corresponding swirled theory is real quantum theory (RQT): the subtheory in which all states, effects, and Choi matrices are real in that basis.

	\subsection{Twirled process matrices}
	\label{sec:twirled-pm}
	
	In a $(G,U)$-twirled quantum world, local instruments, ancillas, and operations are restricted to $(G,U)$-invariant or $(G,U)$-covariant ones. We define twirled process matrices by imposing the usual operational conditions $(\mathrm{O}1)$ and $(\mathrm{O}2)$ only on this restricted class.
	
	Since we are interested in the operationally accessible part of the process, we work throughout with the globally invariant representative, namely with operators in the invariant subspace $L_{\mathrm{inv}}^{X^{(1)}_1X^{(1)}_2\cdots X^{(n)}_1X^{(n)}_2}$. Equivalently, throughout the paper, \emph{twirled process matrix} means globally invariant twirled process matrix. This choice of representative is made because these are the process matrices that agents in a twirled world can reconstruct from local and non-local measurements. 
	
	As in the standard process-matrix framework, shared ancillas are allowed in $(\mathrm{O}1)$, while $(\mathrm{O}2)$ is imposed on complete local instruments.
	
	\begin{definition}[Twirled process matrix representatives]
		\label{def:twirled-W}
		Let the parties be $X^{(1)},\dots,X^{(n)}$. A $(G,U)$-twirled process matrix is a globally invariant representative $W\in L_{\mathrm{inv}}^{X^{(1)}_1X^{(1)}_2\cdots X^{(n)}_1X^{(n)}_2}$ such that: (i) $W$ satisfies $(\mathrm{O}1)$ for all local $(G,U)$-covariant instruments, allowing $(G,U)$-invariant shared ancillas; (ii) $W$ satisfies $(\mathrm{O}2)$ for all complete local $(G,U)$-covariant instruments.
	\end{definition}
	
	The characterisation below concerns finitely many finite-dimensional parties and these globally invariant representatives. Since local $(G,U)$-covariant CPTP maps form, in general, a proper subset of all local CPTP maps, the normalization constraints are weaker than in ordinary QT, so the corresponding set of representatives can be strictly larger.
	
	\begin{restatable}[Twirled process matrices]{theorem}{twprmat}
		\label{thm:tw-process-matrices}
		Let $(G,U)$ be a finite twirled world with parties $X^{(1)},\dots,X^{(n)}$. A matrix $W\in L^{X^{(1)}_1X^{(1)}_2\cdots X^{(n)}_1X^{(n)}_2}$ is an $n$-party twirled process matrix iff
		\begin{align}
			W&\in L_{\mathrm{inv}}^{X^{(1)}_1X^{(1)}_2\cdots X^{(n)}_1X^{(n)}_2}, \label{eq:tw-inv}\\
			W&\ge 0, \label{eq:tw-pos}\\
			\Tr\!\bigl[W(M_1\otimes\cdots\otimes M_n)\bigr]&=1 \label{eq:tw-norm}
		\end{align}
		for all Choi matrices $M_k$ of $(G,U)$-covariant CPTP maps.
	\end{restatable}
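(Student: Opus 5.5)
We prove both directions by adapting the OCB argument (Proposition~\ref{prop:W-charac}) to the covariant setting. The only genuinely new ingredients are the invariance condition and the restriction of ancillas to $(G,U)$-invariant states.

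\emph{Sufficiency.} Let $W$ satisfy Eqs.~\eqref{eq:tw-inv}--\eqref{eq:tw-norm}. Normalization (O2) is immediate. For any complete covariant instruments $\{M^{(k)}_{a_k}\}$, the sum $\sum_{a_k}M^{(k)}_{a_k}$ is a covariant CPTP Choi matrix. Multilinearity of the generalized Born rule together with Eq.~\eqref{eq:tw-norm} then gives total probability $1$.

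For positivity (O1), take covariant CP maps acting on $X^{(k)}_1X^{(k)}_2$ together with local shares $a_k',a_k''$ of a shared invariant ancilla state $\rho\ge0$. The probability has the standard form
\begin{align*}
\Tr\bigl[(W\otimes\rho^T)(M_1\otimes\cdots\otimes M_n)\bigr],
\end{align*}
with each $M_k\ge0$ acting on the party's system and its ancilla share. Since $W\ge0$ and $\rho^T\ge0$, the operator $W\otimes\rho^T$ is positive, and so is $\bigotimes_k M_k$. The trace of a product of two positive operators is non-negative, which gives (O1).

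\emph{Necessity.} Let $W$ be a twirled process matrix in the sense of Definition~\ref{def:twirled-W}. Invariance, Eq.~\eqref{eq:tw-inv}, holds by the representative convention. Normalization, Eq.~\eqref{eq:tw-norm}, follows from (O2) applied to single-outcome instruments $\{M_k\}$ built from covariant CPTP maps. The main obstacle is $W\ge0$: OCB obtain it with arbitrary ancillas and non-covariant maps, both of which are now forbidden.

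For positivity, I would give each party an ancilla $R_k\cong X^{(k)}_1X^{(k)}_2$ carrying the conjugate representation $\overline{V}^{(k)}_g$. The global ancilla then carries $\bigotimes_k\overline{V}^{(k)}_g$, which is precisely the representation under which the maximally entangled vector $|\Phi^+\rangle$ between each $X^{(k)}_1X^{(k)}_2$ and $R_k$ is invariant. One then needs two facts.
\begin{itemize}
\item[(a)] For a normalized pure state $|\psi\rangle$ on $\bigotimes_k X^{(k)}_1X^{(k)}_2$, the twirl of $|\psi\rangle\langle\psi|$ (appropriately transposed) can be prepared as a $(G,U)$-invariant shared ancilla.
\item[(b)] Each party can implement a covariant CP map whose Choi operator, jointly on system plus ancilla, is the projector onto the local $|\Phi^+\rangle$. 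This projector is invariant under $V^{(k)}_g\otimes\overline V^{(k)}_g$, so the map is covariant.
\end{itemize}
The resulting probability is proportional to $\langle\psi|\mathcal{T}(W)|\psi\rangle$, where $\mathcal{T}$ denotes the global twirl. Since $W$ is invariant, $\mathcal{T}(W)=W$, and (O1) forces $\langle\psi|W|\psi\rangle\ge0$ for every $|\psi\rangle$, i.e.\ $W\ge0$.

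The delicate points are, first, checking that the transposes in the Choi convention of Section~\ref{app:notation} produce $\overline U_g$ rather than $U_g$ in the right places, and second, verifying (b) carefully. For (b) one must confirm that the Choi projector is invariant under the joint action, so that the map is genuinely covariant, and that normalizing it to a trace-non-increasing CP map only rescales the probability. Finiteness of $G$ is used to write the twirl as a finite convex mixture, so that the invariant ancilla state is an explicit finite-dimensional object.
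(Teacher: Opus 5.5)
Your proof is correct. For the key step $W\ge 0$ it takes a genuinely different, and shorter, route than the paper.

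\textbf{The paper's route.} Lemma~\ref{lem:aT} starts from the ordinary OCB tester, with ancilla $R'_k\simeq X^{(k)}_1X^{(k)}_2$ carrying the \emph{trivial} representation. That tester is not covariant. The paper repairs this by adjoining a classical regular-representation register in the perfectly correlated state $\omega_{G_1\cdots G_n}$ and replacing each local projection by the controlled twirl $\widehat N_k$. The resulting tester is the global twirl of $T$, i.e.\ $a_TT$ for invariant $T$. The paper then proves Hermiticity separately and concludes with Lemma~\ref{lem:inv-positivity}.

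\textbf{Your route.} You give the ancilla the conjugate representation $\overline V^{(k)}_g$, which makes the plain OCB tester covariant as it stands. The points you flagged do check out:
\begin{itemize}
\item In the paper's convention the tester is $\Tr_R[(\id\otimes\rho)N]\propto\rho^T$, so you need $\rho=[\mathcal T(\ketbra{\psi}{\psi})]^T$. This state is invariant because $\overline V_h X^T\overline V_h^{\dagger}=(V_hXV_h^\dagger)^T$.
\item The input-side ancilla enters the Choi action unconjugated, so the joint action is $V^{(k)}_g\otimes\overline V^{(k)}_g$. This fixes $|\Phi^+\rangle$, since $(A\otimes\overline A)|\Phi^+\rangle=(AA^\dagger\otimes\id)|\Phi^+\rangle$.
\end{itemize}
Because $\bra{\psi}W\ket{\psi}$ comes out real and non-negative for every $\ket{\psi}$, complex polarization gives Hermiticity and positivity in one stroke. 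The paper's separate Hermiticity step and Lemma~\ref{lem:inv-positivity} are therefore unnecessary. Taking $\rho=T^T/\Tr T$ also recovers Lemma~\ref{lem:aT} for every invariant $T\ge0$. A Haar twirl would extend your argument to compact groups.

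\textbf{What the paper's route buys.} It only uses trivial and regular-representation ancillas, i.e.\ a shared classical reference frame, which is the same resource reused in Theorem~\ref{thm:simulation}. Your route needs conjugate-representation ancillas. That is legitimate here, since ancillas with arbitrary unitary representations of $G$ are allowed (cf.\ Lemma~\ref{lem:twirled-effective-tester}).

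Your sufficiency direction coincides with the paper's.
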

	
	Proofs are given in Appendix~\ref{app:characterisation-twirled-W}. Appendix~\ref{app:examples} presents the explicit process-matrix forms for one and two two-dimensional parties.
	
	\subsection{Swirled process matrices}
	\label{sec:swirled-pm}
	
	The same construction can be adapted to swirled quantum worlds. In the time-reversal-swirled case, namely real quantum theory (RQT), local instruments, ancillas, and operations are restricted to those with real density operators and real Choi matrices in the chosen basis~\cite{ying2025a}. A swirled process is therefore defined only up to operational indistinguishability with respect to the allowed RQT experiments as in the case of twirled process matrices.
	
	We fix as representative the one reconstructable within RQT, namely the real-symmetric one. The imaginary part and the real antisymmetric part are invisible to RQT experiments and are therefore removed by convention.
	
\begin{definition}[Swirled (RQT) process matrix representatives]
    \label{def:swirled-W}
    Let the parties be $X^{(1)},\dots,X^{(n)}$. A swirled (RQT) process matrix is a real-symmetric representative $W\in L_{\mathbb R}^{X^{(1)}_1X^{(1)}_2\cdots X^{(n)}_1X^{(n)}_2}$ such that: (i) $W$ satisfies $(\mathrm{O}1)$ for all local RQT instruments, allowing shared real ancillas; (ii) $W$ satisfies $(\mathrm{O}2)$ for all complete local RQT instruments.
\end{definition}

The same argument gives a full finite-dimensional characterisation for any number of parties. Here, as in Definition~\ref{def:swirled-W}, shared ancillas mean arbitrary shared real multipartite states; no decomposition into independent sources is imposed.

\begin{restatable}[Multipartite RQT process matrices]{theorem}{rqtprmat}

	\label{thm:rqt-process-matrices}
	Let the parties be $X^{(1)},\dots,X^{(n)}$. A matrix $W\in L^{X^{(1)}_1X^{(1)}_2\cdots X^{(n)}_1X^{(n)}_2}$ is an $n$-party RQT process matrix iff
	\begin{align}
		W &\in L_{\bR}^{X^{(1)}_1X^{(1)}_2\cdots X^{(n)}_1X^{(n)}_2}, \label{eq:rqt-real}\\
		W&\ge 0, \label{eq:rqt-pos}\\
		\Tr\!\bigl[W(M_1\otimes\cdots\otimes M_n)\bigr]&=1 \label{eq:rqt-norm}
	\end{align}
	for all CPTP Choi matrices $M_i\in L_{\bR}^{X^{(i)}_1X^{(i)}_2}$.
	Equivalently, condition~\eqref{eq:rqt-real} says that $W=\overline{W}$.
\end{restatable}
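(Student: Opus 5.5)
We prove the two directions separately, following the structure of the OCB proof of Proposition~\ref{prop:W-charac} but with every quantifier restricted to real objects. Throughout we use Definition~\ref{def:swirled-W}, so the candidate $W$ is already a real representative, $W\in L_{\bR}$. Condition~\eqref{eq:rqt-real} is therefore built in, and the content of the theorem is that (O1) and (O2), imposed only on real instruments with real shared ancillas, are equivalent to \eqref{eq:rqt-pos} and \eqref{eq:rqt-norm}.

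\emph{Sufficiency.} Assume \eqref{eq:rqt-real}--\eqref{eq:rqt-norm}.

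For (O1), take any real local CP maps acting jointly with a shared real ancilla state $\rho$ on auxiliary systems $A'B'\cdots$. The probability is $\Tr[(W\otimes\rho^T)(M_1\otimes\cdots\otimes M_n)]$, where each $M_k$ acts on $X^{(k)}_1X^{(k)}_2$ together with the local ancilla share. This is the trace of a product of two positive operators, since $W\ge0$, $\rho^T\ge0$ and each $M_k\ge0$. Hence it is non-negative, and reality plays no role here.

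For (O2), fix complete real instruments. Summing over outcomes reduces the total probability to $\Tr[W(M_1\otimes\cdots\otimes M_n)]$, where each $M_k=\sum_a M_{a|x}$ is a real CPTP Choi matrix. This equals $1$ by \eqref{eq:rqt-norm}.

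\emph{Necessity: normalization.} Assume $W\in L_{\bR}$ satisfies (O1) and (O2) for real instruments. Applying (O2) to single-outcome real CPTP instruments gives \eqref{eq:rqt-norm} directly.

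\emph{Necessity: positivity.} This step carries the real content of the argument. Because $W$ is real symmetric, $W\ge0$ holds iff $\langle v|W|v\rangle\ge0$ for all \emph{real} vectors $v$. Any real vector $v\in\bigotimes_k(\calH_{X^{(k)}_1}\otimes\calH_{X^{(k)}_2})$ can be realised as follows, in direct analogy with the OCB ancilla argument.
\begin{enumerate}
\item Introduce, for each party, ancilla copies $X'^{(k)}_1$ of its input and $X'^{(k)}_2$ of its output.
\item Prepare the shared state $\rho=\overline{|v\rangle\langle v|}=|v\rangle\langle v|$ on these copies. It is a real, unnormalised but rescalable, multipartite state, which Definition~\ref{def:swirled-W} explicitly allows.
\item Let each party apply the CP map whose Choi operator is $|\phi^+\rangle\langle\phi^+|$ on $X^{(k)}_1X'^{(k)}_1$ (a Bell-type projection, which is real) tensored with $|\phi^+\rangle\langle\phi^+|$ on $X'^{(k)}_2X^{(k)}_2$ (the identity channel from the ancilla to the output, whose Choi matrix is also real).
\end{enumerate}
Contracting these maximally entangled real vectors through the generalized Born rule returns $\Tr[W\,|v\rangle\langle v|^{T}]$, up to a positive constant and a transpose. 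Since $v$ is real, $|v\rangle\langle v|^{T}=|v\rangle\langle v|$, so (O1) gives $\langle v|W|v\rangle\ge0$ for every real $v$, and hence $W\ge0$.

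\emph{Expected obstacle.} The main care is needed in the bookkeeping of transposes in the Choi convention of Appendix~\ref{app:notation}. We must check that every object used (the $|\phi^+\rangle$ projectors, the ancilla state, and the transposed state appearing in the Born rule) stays real, and that the resulting quadratic form is $\langle v|W|v\rangle$ rather than $\langle \bar v|W|\bar v\rangle$. For real $v$ these coincide, but the identification must be made explicit.

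We also note that positivity is only tested on real vectors, which suffices precisely because $W=\overline{W}$. A non-real representative would not be controlled this way, and this is exactly why the convention \eqref{eq:rqt-real} is part of the statement.
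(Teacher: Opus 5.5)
Your proposal is correct and follows essentially the same route as the paper. Sufficiency uses positivity of the ancilla-assisted pairing, and normalization follows by summing instrument elements into real CPTP Choi matrices. Necessity of positivity uses the same maximally-correlated real ancilla construction as Lemma~\ref{lem:aT-rqt}: ancilla copies of $X^{(k)}_1X^{(k)}_2$, a Bell-type projection on the input side, and the identity channel to the output, with the same $1/d_{X_1}$ rescaling.

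The only difference is cosmetic. You test against real rank-one projectors $|v\rangle\langle v|$ and use the fact that a real symmetric matrix that is nonnegative on real vectors is positive semidefinite. The paper instead tests against all real PSD $T$ and then uses $T_\psi=|\psi\rangle\langle\psi|+\overline{|\psi\rangle\langle\psi|}$, which for $\psi=a+ib$ with $a,b$ real is just $2(|a\rangle\langle a|+|b\rangle\langle b|)$.
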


	Proofs are given in Appendix~\ref{app:characterisation-RQT-W}. Appendix~\ref{app:examples} presents the explicit process-matrix forms for one and two two-dimensional parties.

	\section{Characterisation of twirled process matrices}
	\label{app:characterisation-twirled-W}

	To handle condition (O1), it is convenient to package ancilla-assisted local tests into a single operator on the process-matrix space.
	
	Consider $n$ parties. For each party $X^{(k)}$, let $R_k$ be an ancilla carrying a unitary action $g\mapsto U_g^{R_k}$ of $G$, and let $\rho_{R_1\cdots R_n}$ be a shared $(G,U)$-invariant ancilla state. For each $k$, let
	\begin{align*}
		N_k\in L(\calH_{X^{(k)}_1}\otimes \calH_{R_k}\otimes \calH_{X^{(k)}_2})
	\end{align*}
	be the Choi operator of a local trace-nonincreasing $(G,U)$-covariant CP map
	\begin{align*}
		\mathcal N_k:L(\calH_{X^{(k)}_1}\otimes \calH_{R_k}) \to L(\calH_{X^{(k)}_2}).
	\end{align*}
	The corresponding effective test operator on the process-matrix legs is
	\begin{align*}
		T_{\mathrm{eff}}:=\Tr_{R_1\cdots R_n}\!\left[\bigl(I\otimes \rho_{R_1\cdots R_n}\bigr)\bigl(N_1\otimes\cdots\otimes N_n\bigr)\right].
	\end{align*}
	With this notation, condition (O1) says simply that
	\begin{align*}
		\TR{W T_{\eff}}\geq 0
	\end{align*}
	for every such test operator $T_{\eff}$.
	
	\subsection{Characterisation of test operators}
	
	\begin{lemma}
		\label{lem:twirled-effective-tester}
		Every test operator $T_{\eff}$ is positive,
		\begin{align*}
			T_{\eff} \geq 0,
		\end{align*}
		and globally invariant on the process-matrix legs. That is, for all $g\in G$,
		\begin{align*}
			\left(V_g^{(1)}\otimes\cdots\otimes V_g^{(n)}\right)	T_{\eff} \left(V_g^{(1)}\otimes\cdots\otimes V_g^{(n)}\right)^\dagger = T_{\mathrm{eff}},
		\end{align*}
		where
		\begin{align*}
			 V^{(k)}_g := U^{X^{(k)}_1}_g\otimes \overline{U}^{X^{(k)}_2}_g
		\end{align*}
		is the induced action on the Choi space of party $k$.
	\end{lemma}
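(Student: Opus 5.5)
The lemma has two parts, positivity and invariance, and I will handle them separately.

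\emph{Positivity.} I will view $T_{\eff}$ as a composition of positive objects. Each $N_k\ge 0$, because $\mathcal N_k$ is completely positive. Hence $N_1\otimes\cdots\otimes N_n\ge 0$, and we also have $\rho_{R_1\cdots R_n}\ge 0$. The key identity is
\begin{align*}
\Tr_{R}\!\left[(I\otimes\rho)N\right]=\Tr_R\!\left[(I\otimes\rho^{1/2})\,N\,(I\otimes\rho^{1/2})\right],
\end{align*}
which follows from cyclicity of the partial trace over the factor on which $\rho^{1/2}$ acts. The operator $(I\otimes\rho^{1/2})N(I\otimes\rho^{1/2})$ is positive, being a congruence of a positive operator. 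A partial trace of a positive operator is positive, so $T_{\eff}\ge0$.

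One convention point needs care: a transpose may appear on the $R_k$ legs in the Choi convention. If so, I replace $\rho$ by $\rho^T$, which is still a state, and the argument is unchanged.

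\emph{Invariance.} I first translate covariance of $\mathcal N_k:L(\calH_{X_1^{(k)}}\otimes\calH_{R_k})\to L(\calH_{X_2^{(k)}})$ into a statement about its Choi operator. Following the computation in Appendix~\ref{app:notation} for $V_g^X$, covariance means
\begin{align*}
\bigl(U_g^{X_1^{(k)}}\otimes \tilde U_g^{R_k}\otimes \overline U_g^{X_2^{(k)}}\bigr)\,N_k\,\bigl(\cdots\bigr)^\dagger=N_k .
\end{align*}
Here $\tilde U_g^{R_k}$ is the representation induced on the $R_k$ leg by the convention. Since $X_1^{(k)}$ and $R_k$ are both inputs, it will be $U_g^{R_k}$. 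Tensoring over $k$ shows that $N:=\bigotimes_k N_k$ is invariant under $V_g\otimes \tilde U_g^{R}$, where $\tilde U_g^R=\bigotimes_k\tilde U_g^{R_k}$ and the same $g$ acts on all parties.

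Next I use the invariance of $\rho$. With the same convention, $\tilde U_g^R\,\rho\,\tilde U_g^{R\dagger}=\rho$; if a transpose is present, invariance of $\rho$ under $U_g$ gives invariance of $\rho^T$ under $\overline U_g$, and this matches the induced representation. Then, for each $g$,
\begin{align*}
V_g T_{\eff}V_g^\dagger=\Tr_R\!\left[(V_g\otimes I)(I\otimes\rho)N(V_g\otimes I)^\dagger\right].
\end{align*}
I insert $I=\tilde U_g^{R\dagger}\tilde U_g^R$ on the $R$ legs. This is legitimate because the partial trace over $R$ is invariant under unitary conjugation on $R$. The expression becomes
\begin{align*}
\Tr_R\!\left[(I\otimes \tilde U_g\rho\tilde U_g^\dagger)\,(V_g\otimes\tilde U_g)N(V_g\otimes\tilde U_g)^\dagger\right]=\Tr_R\!\left[(I\otimes\rho)N\right]=T_{\eff}.
\end{align*}

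\emph{Main obstacle.} The only delicate step is the bookkeeping of the Choi convention, with its transpose, on the ancilla legs. Concretely, I must check that the representation under which $N_k$ is invariant on $R_k$ is exactly the one, possibly conjugated, under which $\rho$, or the transposed $\rho$ actually appearing in the contraction, is invariant. I would verify this first for a single party, by directly expanding $\mathcal N_k$ in the form $\sum_{ij}|j\rangle\langle i|\otimes\mathcal N_k(|i\rangle\langle j|)^T$ as in Appendix~\ref{app:notation}. After that, the multipartite case follows factorwise.
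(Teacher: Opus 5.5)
Your proposal is correct and follows the paper's proof essentially line by line. Positivity comes from $\Tr_R[(I\otimes\rho)N]=\Tr_R[(I\otimes\rho^{1/2})N(I\otimes\rho^{1/2})]$, and invariance from conjugating by $V_g\otimes W_g$ with $W_g=\bigotimes_k U_g^{R_k}$, using covariance of $N_1\otimes\cdots\otimes N_n$ and invariance of $\rho$. The convention issue you flag resolves trivially: in the OCB convention $\mathcal N_k(\sigma)=\Tr_{X_1^{(k)}R_k}[N_k(\sigma\otimes I)]^T$. So the ancilla is contracted without a transpose, and the input leg $R_k$ carries the unconjugated $U_g^{R_k}$, which is exactly what the paper uses implicitly.
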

	
	\begin{proof}
		Write
		\begin{align*}
			T_{\mathrm{eff}}=\Tr_R\!\left[\bigl(I_S\otimes \rho_{R_1\cdots R_n}\bigr)\bigl(N_1\otimes\cdots\otimes N_n\bigr)\right].
		\end{align*}
		
		We first prove positivity. Write
		\begin{align*}
			N:=N_1\otimes\cdots\otimes N_n,\qquad
			\rho:=\rho_{R_1\cdots R_n}.
		\end{align*}
		By cyclicity of the partial trace over the ancillary systems,
		\begin{align*}
			T_{\eff}
			&=
			\Tr_R\!\left[(I_S\otimes \rho)N\right]\\
			&=
			\Tr_R\!\left[(I_S\otimes \rho^{1/2})N(I_S\otimes \rho^{1/2})\right].
		\end{align*}
		The operator inside the partial trace is positive semidefinite, because $N\ge 0$. Since the partial trace is positive,
		\begin{align*}
			T_{\eff}\ge 0.
		\end{align*}
		
		We now prove invariance. Let
		\begin{align*}
			V_g:=V_g^{(1)}\otimes\cdots\otimes V_g^{(n)},
			\qquad
			W_g:=U_g^{R_1}\otimes\cdots\otimes U_g^{R_n}.
		\end{align*}
		After reordering tensor factors, covariance of the local Choi operators gives
		\begin{align*}
			(V_g\otimes W_g)\,(N_1\otimes\cdots\otimes N_n)\,(V_g\otimes W_g)^\dagger
			=
			N_1\otimes\cdots\otimes N_n,
		\end{align*}
		and invariance of the ancilla state gives
		\begin{align*}
			W_g\,\rho_{R_1\cdots R_n}\,W_g^\dagger=\rho_{R_1\cdots R_n}.
		\end{align*}
		Therefore
		\begin{align*}
			&V_g T_{\eff} V_g^\dagger=\Tr_R\!\left[(V_g\!\otimes\!W_g)\bigl(I_S\!\otimes\!\rho_{R_1\cdots R_n}\bigr)\bigl(N_1\!\otimes\cdots\otimes\!N_n\bigr)(V_g\!\otimes\!W_g)^\dagger\right] =
			\Tr_R\!\left[\bigl(I_S\otimes \rho_{R_1\cdots R_n}\bigr)\bigl(N_1\otimes\cdots\otimes N_n\bigr)\right] =
			T_{\eff}.
		\end{align*}
		So $T_{\mathrm{eff}}$ is globally invariant.
	\end{proof}
	
	For the next lemma, write $L_{\mathrm{inv}}(\calH)$ for the invariant subspace of $L(\calH)$, and $\mathrm{PSD}(L_{\mathrm{inv}}(\calH))$ for the set of positive semidefinite operators that lie in this subspace.
	
	\begin{lemma}
		\label{lem:aT}
		For every
		\begin{align*}
			T\in K:=\mathrm{PSD}(L^{X^{(1)}_1 X^{(1)}_2\cdots X^{(n)}_1 X^{(n)}_2}_{\mathrm{inv}}),
		\end{align*}
		there exists a scalar $a_T>0$ such that $a_T T$ is an effective twirled test operator generated by a $(G,U)$-invariant shared ancilla and local $(G,U)$-covariant trace-nonincreasing CP maps.
	\end{lemma}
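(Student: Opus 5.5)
The idea is to realise an arbitrary invariant positive operator $T$ by a teleportation-style (``gate-teleportation'') construction. The shared ancilla state will be proportional to $T^T$, and each party will project its process legs together with its share of the ancilla onto a maximally entangled state. Three things must then be checked: this local test is a valid trace-nonincreasing CP map, it is $(G,U)$-covariant, and the ancilla state is $(G,U)$-invariant. If $T=0$ the statement is trivial (any test with a vanishing effect works), so assume $T\neq 0$, hence $\Tr T>0$.

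\emph{Construction.} For each party $k$, take an ancilla $R_k\cong X^{(k)}_1X^{(k)}_2$, that is, a copy of the whole Choi space, and equip it with the representation $g\mapsto \overline{V}^{(k)}_g=\overline{U}^{X^{(k)}_1}_g\otimes U^{X^{(k)}_2}_g$. This is a legitimate unitary representation of $G$. Let $|\Phi_k\rangle=\sum_m |m\rangle^{X^{(k)}_1X^{(k)}_2}|m\rangle^{R_k}$ be the unnormalised maximally entangled vector between the Choi space and $R_k$. Define
\begin{align*}
N_k:=\frac{1}{d_{X^{(k)}_1}}\,|\Phi_k\rangle\langle\Phi_k|,\qquad \rho_{R_1\cdots R_n}:=\frac{T^T}{\Tr T},
\end{align*}
where the transpose is taken in the fixed computational basis and $\rho$ is read on $R_1\cdots R_n$. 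The standard identity $\Tr_R[(I\otimes\rho)|\Phi\rangle\langle\Phi|]=\rho^T$, applied factor by factor, gives
\begin{align*}
T_{\eff}=\Big(\prod_k d_{X^{(k)}_1}\Big)^{-1}(\Tr T)^{-1}\,T.
\end{align*}
Hence $a_T=\bigl(\Tr T\prod_k d_{X^{(k)}_1}\bigr)^{-1}>0$.

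\emph{Validity checks.} First, $\rho\ge0$ with unit trace, because transposition preserves the spectrum. Second, $N_k\ge 0$, and it is the Choi operator of a map from $X^{(k)}_1R_k$ to $X^{(k)}_2$. Split $R_k=R'_1R'_2$ with $R'_i\cong X^{(k)}_i$. Tracing out $X^{(k)}_2$ from $|\Phi_k\rangle\langle\Phi_k|$ gives $|\phi^+\rangle\langle\phi^+|_{X^{(k)}_1R'_1}\otimes \id_{R'_2}$, whose operator norm is $d_{X^{(k)}_1}$. After the rescaling we therefore have $\Tr_{X^{(k)}_2}N_k\le \id$, which is exactly the trace-nonincreasing condition.

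\emph{Covariance and invariance.} In our Choi convention, covariance of a map with input $X^{(k)}_1R_k$ and output $X^{(k)}_2$ means invariance of $N_k$ under $U^{X_1}_g\otimes \overline{V}^{R_k}_g\otimes \overline{U}^{X_2}_g=V^{(k)}_g\otimes\overline{V}^{(k)}_g$, after reordering factors. This holds because $(V\otimes\overline{V})|\Phi\rangle=|\Phi\rangle$ for every unitary $V$. For the ancilla state we use
\begin{align*}
\big(V_gTV_g^\dagger\big)^T=\overline{V}_g\,T^T\,\overline{V}_g^{\,\dagger}.
\end{align*}
Since $T$ is invariant under $V_g$, this shows that $T^T$, and hence $\rho$, is invariant under $\bigotimes_k\overline{V}^{(k)}_g$, which is precisely the group action chosen on $R_1\cdots R_n$.

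\emph{Main obstacle.} The step most likely to go wrong is getting the conjugations and transposes to line up. The representation on $R_k$ must be $\overline{V}^{(k)}$, and not $V^{(k)}$, simultaneously for the local covariance of $N_k$ and for the invariance of $\rho\propto T^T$. The paper's transposed Choi convention has to be tracked carefully so that the teleportation identity returns $T$ itself rather than $T^T$ or $\overline{T}$. If a mismatch does appear, the fix is to replace $\rho$ by $\overline{T}/\Tr T$ or by $T/\Tr T$, together with the corresponding choice of representation on $R_k$; the structure of the argument is unchanged.
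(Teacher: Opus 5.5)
Your proof is correct, but it makes the construction covariant by a different mechanism than the paper. The non-covariant core is the same. The paper also takes $R'_k\simeq X^{(k)}_1X^{(k)}_2$, your vectors $|\Phi_k\rangle$ (its $|\Omega_k\rangle$), local Choi operators $|\Omega_k\rangle\langle\Omega_k|/c_k$ with $c_k\ge d_{X^{(k)}_1}$, and $\rho\propto T^T$.

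\textbf{The paper's route.} It gives $R'_k$ the trivial representation, so $|\Omega_k\rangle\langle\Omega_k|$ is not covariant. It then restores covariance with a shared classical reference frame:
\begin{itemize}
\item extra registers $G_k\simeq\mathbb{C}^{|G|}$ carrying the left-regular action;
\item the perfectly correlated ancilla state $\frac{1}{|G|}\sum_g|g\rangle\langle g|^{\otimes n}$;
\item controlled local operators $\sum_g|g\rangle\langle g|\otimes V^{(k)}_gN'_k(V^{(k)}_g)^\dagger$.
\end{itemize}
The resulting effective operator is the twirl $\frac{1}{|G|}\sum_g V_g(a_TT)V_g^\dagger$, which equals $a_TT$ because $T$ is invariant.

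\textbf{Your route.} You put the conjugate representation $\overline V^{(k)}_g$ on $R_k$. This makes the maximally entangled projector intrinsically covariant, via $(V\otimes\overline V)|\Phi\rangle=|\Phi\rangle$. It also turns invariance of $T$ directly into invariance of $\rho\propto T^T$.

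\textbf{What each buys.} Your construction needs no extra registers and no group averaging, with ancilla dimension $\dim X^{(k)}_1X^{(k)}_2$ rather than $|G|$ times that. It also works unchanged for any compact group. The paper's uses finiteness of $G$ essentially, through the $|G|$-dimensional regular representation. The paper's route is more general in a different direction: it shows that the twirl of \emph{any} unconstrained ancilla-assisted tester can be realized covariantly. It also reuses the same reference-frame trick that drives its later simulation theorem.

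\textbf{On your hedge.} The caveat in your last paragraph is unnecessary. With $\Tr_R[(I\otimes\rho)|\Phi\rangle\langle\Phi|]=\rho^T$ and $\rho=T^T/\Tr T$, the construction returns $T$ itself. The single choice $\overline V^{(k)}$ serves both the covariance of $N_k$ in the paper's Choi convention and the invariance of $\rho$, so no fix is needed.
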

	
	\begin{proof}
		Let $T\in K$. The case $T=0$ is trivial, so assume $T\neq 0$.
		
		We first realize $T$, up to an overall positive factor, as an effective test operator without worrying yet about covariance.
		For each party $k$, write
		\begin{align*}
			S_k:=X^{(k)}_1X^{(k)}_2,
		\end{align*}
		and choose an ancilla system $R'_k$ such that $R'_k\simeq S_k$. Define
		\begin{align*}
			\calH_S:=\bigotimes_{k=1}^n \calH_{X^{(k)}_1}\otimes \calH_{X^{(k)}_2},\qquad
			\calH_{SR}:=\calH_S\otimes\bigotimes_{k=1}^n \calH_{R'_k}.
		\end{align*}
		
		Consider the unnormalised maximally entangled vectors
		\begin{align*}
			\ket{\Omega}:=\bigotimes_{k=1}^{n}|\Omega_k\rangle
			:=
			\bigotimes_{k=1}^{n}\sum_{i,\alpha}|i\rangle_{X^{(k)}_1}\otimes|\alpha\rangle_{X^{(k)}_2}\otimes|i,\alpha\rangle_{R'_k},
		\end{align*}
		and define the ancilla state
		\begin{align*}
			\rho_{R'}:=\frac{T^T}{\TR{T}}.
		\end{align*}
		This is a valid state because $T\ge 0$ implies $T^T\ge 0$.
		
		For Hermitian operators $A,B$ on
\begin{align*}
	\bigotimes_{k=1}^n \calH_{X^{(k)}_1}\otimes \calH_{X^{(k)}_2},
\end{align*}
one has
\begin{align*}
	\TR{(A\otimes B)\ket{\Omega}\bra{\Omega}}=\TR{AB^T}.
\end{align*}
Hence, for every Hermitian $A$,
\begin{align*}
	\TR{(A\otimes\rho_{R'})\ket{\Omega}\bra{\Omega}}
	=
	\TR{A\rho_{R'}^T}
	=
	\frac{\TR{AT}}{\TR{T}}.
\end{align*}
Equivalently,
\begin{align*}
	\TR{AT}
	&=
	\TR{T}\,
	\TR{(A\otimes\rho_{R'})\ket{\Omega}\bra{\Omega}}\\
	&=
	\TR{T}\,
	\TR{(A\otimes\rho_{R'})
	(\ket{\Omega_1}\bra{\Omega_1}\otimes \dots \otimes \ket{\Omega_n}\bra{\Omega_n})}\\
	&=
	\TR{T}\left(\prod_{k=1}^{n} c_k\right)
	\TR{(A\otimes\rho_{R'})
	\left(\frac{\ket{\Omega_1}\bra{\Omega_1}}{c_1}\otimes \dots \otimes
	\frac{\ket{\Omega_n}\bra{\Omega_n}}{c_n}\right)},
\end{align*}
with $c_k \ge d_{X^{(k)}_1}$.
Since
\begin{align*}
	N'_k:=\frac{\ket{\Omega_k}\bra{\Omega_k}}{c_k}
\end{align*}
is the Choi operator of a trace-nonincreasing map, defining
\begin{align*}
	a_T:=\frac{1}{\TR{T}\prod_{k=1}^{n}c_k},
\end{align*}
shows that $a_TT$ has the form of an effective test operator.
		
		We now modify the construction so that it is covariant. For each party $k$, let $G_k$ be a classical register with orthonormal basis $\{|g\rangle\}_{g\in G}$, on which $G$ acts by
		\begin{align*}
			L_h|g\rangle = |hg\rangle.
		\end{align*}
		Set
		\begin{align*}
			R_k:=G_k\otimes R'_k,
		\end{align*}
		The register $R'_k$ carries the trivial representation; the nontrivial action on $R_k$ is only the left-regular action on $G_k$.
		Define
		\begin{align*}
			V_g^{(k)}:=U_g^{X^{(k)}_1}\otimes \overline{U}_g^{X^{(k)}_2}.
		\end{align*}
		
		Let
		\begin{align*}
			\sigma_{R_1\cdots R_n}:=\omega_{G_1\cdots G_n}\otimes \rho_{R'_1\cdots R'_n},
		\end{align*}
		where
		\begin{align*}
			\omega_{G_1\cdots G_n}:=\frac{1}{|G|}\sum_{g\in G}|g\rangle\langle g|_{G_1}\otimes\cdots\otimes |g\rangle\langle g|_{G_n}.
		\end{align*}
		This state is $(G,U)$-invariant.
		
		Define the local Choi operators
		\begin{align*}
			\widehat N_k
			:=
			\sum_{g\in G}|g\rangle\langle g|_{G_k}\otimes \bigl(V_g^{(k)}\otimes I_{R'_k}\bigr)\,N'_k\,\bigl(V_g^{(k)}\otimes I_{R'_k}\bigr)^\dagger.
		\end{align*}
		Each $\widehat N_k$ is positive semidefinite. It is also trace-nonincreasing, since
		\begin{align*}
			&\Tr_{X^{(k)}_2}\widehat N_k= \sum_{g\in G} |g\rangle\langle g|_{G_k}\otimes \bigl(U_g^{X^{(k)}_1}\otimes I_{R'_k}\bigr)\bigl(\Tr_{X^{(k)}_2}N'_k\bigr)\bigl(U_g^{X^{(k)}_1}\otimes I_{R'_k}\bigr)^\dagger \le
			I_{G_k}\otimes I_{X^{(k)}_1R'_k}.
		\end{align*}
		So $\widehat N_k$ is the Choi operator of a valid local trace-nonincreasing CP map
		\begin{align*}
			\widehat{\mathcal N}_k:L(\calH_{X^{(k)}_1}\otimes \calH_{G_k}\otimes \calH_{R'_k}) \to L(\calH_{X^{(k)}_2}),
		\end{align*}
		and this map is $(G,U)$-covariant.
		
		Let $\widehat T$ be the effective operator produced by the invariant ancilla $\sigma_{R_1\cdots R_n}$ and the local covariant maps $\widehat N_1,\dots,\widehat N_n$. Writing
		\begin{align*}
			V_g:=V_g^{(1)}\otimes\cdots\otimes V_g^{(n)},
		\end{align*}
		we find, for every Hermitian $A$ on $\calH_S$,
		\begin{align*}
			\Tr(A\widehat T)
			&=
			\Tr\!\left[(A\otimes \sigma_{R_1\cdots R_n})(\widehat N_1\otimes\cdots\otimes \widehat N_n)\right] =\frac{1}{|G|}\sum_{g\in G}\Tr\!\left[\bigl(V_g^\dagger A V_g \otimes \rho_{R'_1\cdots R'_n}\bigr)(N'_1\otimes\cdots\otimes N'_n)\right] =\\
			&=\frac{1}{|G|}\sum_{g\in G} a_T\,\Tr\!\left(V_g^\dagger A V_g\,T\right)=
			a_T\,\Tr\!\left[A\,\frac{1}{|G|}\sum_{g\in G} V_g T V_g^\dagger\right].
		\end{align*}
		Since $T$ is invariant,
		\begin{align*}
			V_g T V_g^\dagger = T \qquad \forall g\in G,
		\end{align*}
		so
		\begin{align*}
			\Tr(A\widehat T)=a_T\,\Tr(AT)\qquad \forall A.
		\end{align*}
		Hence
		\begin{align*}
			\widehat T=a_TT.
		\end{align*}
		So $a_TT$ is indeed an effective twirled test operator.
	\end{proof}
	
	Before proving the characterisation theorem, we record a simple positivity lemma.
	
	\begin{lemma}[Positivity from invariant positive test operators]
		\label{lem:inv-positivity}
		Let $ W\in L_{\mathrm{inv}}(\calH)$ be Hermitian. If
		\begin{align*}
			\Tr(WT)\ge 0 \qquad \forall\,T\in L_{\mathrm{inv}}(\calH)\ \text{with}\ T\ge 0,
		\end{align*}
		then
		\begin{align*}
			W\ge 0.
		\end{align*}
	\end{lemma}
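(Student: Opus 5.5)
The plan is to reduce the claim to the ordinary statement that a Hermitian operator with non-negative trace against every positive semidefinite operator is itself positive semidefinite, using the group twirl to move an arbitrary positive test operator into the invariant subspace. Let $V_g$ denote the (unitary) global action on $\calH$ under which $L_{\mathrm{inv}}(\calH)$ is defined, and let
\begin{align*}
\calT(T):=\frac{1}{|G|}\sum_{g\in G}V_gTV_g^\dagger
\end{align*}
be the twirl (for compact $G$, the normalized Haar integral). Two properties of $\calT$ will be used. First, it maps positive operators to positive operators, since each conjugation $T\mapsto V_gTV_g^\dagger$ preserves positivity and a convex combination of positive operators is positive. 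Second, it lands in $L_{\mathrm{inv}}(\calH)$: for any $h\in G$, $V_h\calT(T)V_h^\dagger=\calT(T)$ by reindexing $g\mapsto hg$, using that $g\mapsto V_g$ is a representation, possibly projective, in which case the phases cancel under conjugation.

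Next I would show that $\calT$ is self-adjoint with respect to the Hilbert--Schmidt pairing, and that it acts trivially on invariant operators when paired against them. Concretely, for $W\in L_{\mathrm{inv}}(\calH)$ and any $T$, cyclicity of the trace gives
\begin{align*}
\Tr\!\left[W\,\calT(T)\right]=\frac{1}{|G|}\sum_{g}\Tr\!\left[V_g^\dagger W V_g\,T\right]=\Tr(WT),
\end{align*}
because $V_g^\dagger WV_g=W$ for every $g$. This identity is the heart of the argument.

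With these ingredients the conclusion is immediate. Take an arbitrary $T\ge0$ on $\calH$; it suffices to take rank-one projectors $T=\ket{\psi}\bra{\psi}$. Then $\calT(T)$ is a positive element of $L_{\mathrm{inv}}(\calH)$, so the hypothesis gives $\Tr[W\calT(T)]\ge0$, and the identity above turns this into $\bra{\psi}W\ket{\psi}=\Tr(WT)\ge0$. Since $W$ is Hermitian by assumption, non-negativity of all its expectation values means $W\ge0$.

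The only step needing care is checking that the twirl commutes past $W$ in the trace. This requires that the same global action $V_g=V_g^{(1)}\otimes\cdots\otimes V_g^{(n)}$ is used to define both the invariant subspace containing $W$ and the twirl, and that each $V_g$ is unitary so that $V_g^\dagger=V_g^{-1}$. Both hold here, since $V_g^{(k)}=U_g\otimes\overline U_g$ is unitary. No appeal to Lemma~\ref{lem:aT} is needed for this statement itself; that lemma is used afterwards to show that every invariant positive $T$ is realized, up to scaling, as a test operator, so that condition (O1) supplies exactly the hypothesis of the present lemma.
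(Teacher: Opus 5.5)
Your proposal is correct and follows essentially the same route as the paper: both twirl a rank-one projector $\ket{\psi}\bra{\psi}$ into a positive invariant test operator and use the invariance of $W$ to identify $\Tr[W\calT(\ket{\psi}\bra{\psi})]$ with $\bra{\psi}W\ket{\psi}$. The only difference is cosmetic. The paper argues by contradiction from a negative eigenvector, noting that each $V_g\ket{\psi}$ is again an eigenvector with the same eigenvalue. You argue directly for arbitrary $\ket{\psi}$ via cyclicity of the trace.
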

	
	\begin{proof}
		Assume for contradiction that $W\not\ge 0$. Since $W$ is Hermitian, there exist a unit vector $|\psi\rangle\in \calH$ and an eigenvalue $\lambda<0$ such that
		\begin{align*}
			W|\psi\rangle=\lambda|\psi\rangle.
		\end{align*}
		Because $W$ is invariant, we have
		\begin{align*}
			V_gWV_g^\dagger=W \qquad \forall g\in G,
		\end{align*}
		or equivalently
		\begin{align*}
			V_gW=WV_g\qquad \forall g\in G.
		\end{align*}
		Therefore, for every $g\in G$,
		\begin{align*}
			W(V_g|\psi\rangle)=V_gW|\psi\rangle=\lambda\,V_g|\psi\rangle.
		\end{align*}
		
		Now define
		\begin{align*}
			T:=\frac{1}{|G|}\sum_{g\in G}	V_g|\psi\rangle\langle\psi|V_g^\dagger.
		\end{align*}
		Each term is positive semidefinite, so $T\ge 0$. Also, $T$ is invariant:
		\begin{align*}
			V_hTV_h^\dagger
			&=
			\frac{1}{|G|}\sum_{g\in G} V_hV_g|\psi\rangle\langle\psi|V_g^\dagger V_h^\dagger=
			\frac{1}{|G|}\sum_{g\in G} V_{hg}|\psi\rangle\langle\psi|V_{hg}^\dagger
			=
			T.
		\end{align*}
		So $T\in L_{\mathrm{inv}}(\calH)$ and $T\ge 0$.
		
		By hypothesis,
		\begin{align*}
			\Tr(WT)\ge 0.
		\end{align*}
		But
		\begin{align*}
			\Tr(WT)
			&=
			\frac{1}{|G|}\sum_{g\in G}\Tr\!\left(WV_g|\psi\rangle\langle\psi|V_g^\dagger\right) =
			\frac{1}{|G|}\sum_{g\in G}\langle\psi|V_g^\dagger W V_g|\psi\rangle =
			\frac{1}{|G|}\sum_{g\in G}\langle\psi|W|\psi\rangle =
			\lambda<0,
		\end{align*}
		a contradiction. Hence $W\ge 0$.
	\end{proof}
	
	\subsection{Characterisation of process matrices for twirled worlds}
	
    \twprmat*
	\begin{proof}
		\noindent $(\Rightarrow)$ Assume that $W$ is an $n$-party twirled process matrix. By Definition~\ref{def:twirled-W}, $W$ is invariant, which gives Eq.~\eqref{eq:tw-inv}. Applying (O2) to complete local $(G,U)$-covariant instruments gives Eq.~\eqref{eq:tw-norm}.
		
		It remains to prove Eq.~\eqref{eq:tw-pos}, namely $W\ge 0$. We first show that $W$ is Hermitian.
		
		Let $T$ be any effective twirled test operator. Since $T\ge 0$, it is Hermitian, and since $\Tr(WT)$ is a probability, it is real. Therefore
		\begin{align*}
			\Tr\!\bigl[(W-W^\dagger)T\bigr]
			=
			\Tr(WT)-\overline{\Tr(WT)}
			=
			0
		\end{align*}
		for every effective twirled test operator $T$.
		
Now let $A\in L^{X^{(1)}_1 X^{(1)}_2\cdots X^{(n)}_1 X^{(n)}_2}_{\mathrm{inv}}$ be any Hermitian operator, and write $A=A_+-A_-$ with
		\begin{align*}
			A_\pm\in K:=\mathrm{PSD}(L^{X^{(1)}_1 X^{(1)}_2\cdots X^{(n)}_1 X^{(n)}_2}_{\mathrm{inv}}).
		\end{align*}
		By Lemma~\ref{lem:aT}, there exist positive scalars $a_{A_+},a_{A_-}>0$ such that $a_{A_+}A_+$ and $a_{A_-}A_-$ are effective twirled test operators. Hence
		\begin{align*}
			\Tr\!\bigl[(W-W^\dagger)A_+\bigr]=0,
			\qquad
			\Tr\!\bigl[(W-W^\dagger)A_-\bigr]=0,
		\end{align*}
		and therefore
		\begin{align*}
			\Tr\!\bigl[(W-W^\dagger)A\bigr]=0
		\end{align*}
		for every invariant Hermitian $A$.
		
		Since $W$ is invariant, so is
		\begin{align*}
			H:=\frac{W-W^\dagger}{2i},
		\end{align*}
		and $H$ is Hermitian. Taking $A=H$ gives
		\begin{align*}
			0=\Tr\!\bigl[(W-W^\dagger)H\bigr]=2i\,\Tr(H^2).
		\end{align*}
		Hence $\Tr(H^2)=0$, so $H=0$. Therefore $W=W^\dagger$.
		
		Now let $T\in K$ be arbitrary. By Lemma~\ref{lem:aT}, there exists $a_T>0$ such that $a_TT$ is an effective twirled test operator. Applying (O1),
		\begin{align*}
			0\le \Tr\!\bigl(W(a_TT)\bigr)=a_T\,\Tr(WT).
		\end{align*}
		Since $a_T>0$, this implies
		\begin{align*}
			\Tr(WT)\ge 0 \qquad \forall\, T\in K.
		\end{align*}
		Because $W$ is Hermitian, invariant, and nonnegative on all invariant positive operators, Lemma~\ref{lem:inv-positivity} applies and yields
		\begin{align*}
			W\ge 0.
		\end{align*}
		This proves Eq.~\eqref{eq:tw-pos}.
		
		\medskip
		\noindent $(\Leftarrow)$ Assume now that
		\begin{align*}
			W&\in L_{\mathrm{inv}}^{X^{(1)}_1X^{(1)}_2\cdots X^{(n)}_1X^{(n)}_2}, \\
			W&\ge 0,\\
			\TR{W(M_1\otimes\cdots\otimes M_n)}&=1
		\end{align*}
		for all Choi matrices $M_k$ of $(G,U)$-covariant channels.
		
		Let $T$ be any effective twirled test operator. By Lemma~\ref{lem:twirled-effective-tester}, $T$ is positive semidefinite and invariant, hence $T\in K$. Since $W\ge 0$ and $T\ge 0$,
		\begin{align*}
			\Tr(WT)=\Tr\!\bigl(W^{1/2}TW^{1/2}\bigr)\ge 0.
		\end{align*}
		So condition (O1) holds.
		
		Condition (O2) is exactly the assumed normalization condition, since complete local twirled instruments are built from local $(G,U)$-covariant CPTP maps. Hence (O2) also holds.
		
		Therefore $W$ is an $n$-party twirled process matrix.
	\end{proof}

	\section{Characterisation of process matrices for RQT}
	\label{app:characterisation-RQT-W}
	
\begin{lemma}[Effective RQT test operators are real and positive]
	\label{lem:rqt-effective-testers-real-pos}
	Any effective RQT test operator generated by a shared real positive semidefinite ancilla state together with real positive semidefinite local Choi operators is itself real and positive semidefinite:
	\begin{align*}
		T=\overline{T},\qquad T\ge 0.
	\end{align*}
	In particular, this holds for arbitrary multipartite ancilla-assisted RQT testers in the process framework.
\end{lemma}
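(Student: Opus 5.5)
The plan follows the proof of Lemma~\ref{lem:twirled-effective-tester}, with global invariance under $V_g$ replaced by invariance under entrywise complex conjugation in the fixed product basis. As there, write the tester as
\begin{align*}
T=\Tr_R\!\left[(I_S\otimes\rho)N\right],\qquad N:=N_1\otimes\cdots\otimes N_n,
\end{align*}
where $\rho=\rho_{R_1\cdots R_n}$ is the shared real ancilla state and each $N_k\in L_{\bR}^{X^{(k)}_1R_kX^{(k)}_2}$ is a real positive semidefinite Choi operator of a trace-nonincreasing local map. We prove positivity and reality separately.

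\emph{Positivity.} This step is identical to the twirled case. Since $\rho\ge 0$, it has a square root $\rho^{1/2}$. Cyclicity of the partial trace over $R$, applied to operators acting only on $R$, gives
\begin{align*}
T=\Tr_R\!\left[(I_S\otimes\rho^{1/2})N(I_S\otimes\rho^{1/2})\right].
\end{align*}
The operator inside the trace is a congruence of $N\ge0$, hence positive semidefinite. The partial trace is a positive map, so $T\ge0$. Reality is not used here.

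\emph{Reality.} We use three elementary properties of entrywise conjugation in a fixed product basis.
\begin{enumerate}
\item It is multiplicative: $\overline{AB}=\overline{A}\,\overline{B}$.
\item It is compatible with tensor products: $\overline{A\otimes B}=\overline A\otimes\overline B$.
\item It commutes with partial traces: $\overline{\Tr_R X}=\Tr_R\overline X$. This holds because the partial trace is computed by summing matrix entries over a product-basis index, and conjugating each entry commutes with summation.
\end{enumerate}
Applying these in turn,
\begin{align*}
\overline T=\Tr_R\!\left[(I_S\otimes\overline\rho)\,\overline{N_1}\otimes\cdots\otimes\overline{N_n}\right]=T,
\end{align*}
using $\overline\rho=\rho$ and $\overline{N_k}=N_k$.

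The only point needing care is bookkeeping of the tensor factors. Each $N_k$ acts on $X^{(k)}_1R_kX^{(k)}_2$, and these factors are interleaved across parties. The reordering is done by a permutation of tensor factors, which is a real permutation matrix in the product basis, so conjugation commutes with it. The computational bases on the ancillas must also be the ones in which $\rho$ and the $N_k$ are real; this is built into the definition of RQT ancillas.

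Finally, the argument nowhere assumes that $\rho$ factorizes across $R_1,\dots,R_n$. The same computation therefore covers arbitrary multipartite ancilla-assisted RQT testers, which gives the last sentence of the statement.
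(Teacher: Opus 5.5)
Your proof is correct and follows the paper's argument: positivity comes from the congruence $\Tr_R[(I\otimes\rho^{1/2})N(I\otimes\rho^{1/2})]$, and reality comes from the reality of all the ingredients. The only difference is cosmetic: the paper reads off reality from the sandwiched form, using that $\rho^{1/2}$ is real whenever $\rho$ is real positive semidefinite, whereas you conjugate $\Tr_R[(I\otimes\rho)N]$ directly and so do not need that extra fact.
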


\begin{proof}
	Write
	\begin{align*}
		N:=N_1\otimes\cdots\otimes N_n,\qquad
		\rho:=\rho_{R_1\cdots R_n},
	\end{align*}
	where the $N_i$ are the real positive semidefinite local Choi operators and $\rho$ is the shared real positive semidefinite ancilla state. The effective test operator is
	\begin{align*}
		T=\Tr_R\!\left[(I\otimes \rho)N\right].
	\end{align*}
	By cyclicity of the partial trace over the ancillary systems,
	\begin{align*}
		T
		=
		\Tr_R\!\left[(I\otimes \rho^{1/2})N(I\otimes \rho^{1/2})\right].
	\end{align*}
	Since $N\ge 0$, the operator inside the partial trace is positive semidefinite, and therefore $T\ge 0$. Moreover, $\rho^{1/2}$ is real because $\rho$ is real positive semidefinite, and all $N_i$ are real. Hence the operator inside the partial trace is real, and so is its partial trace. Thus
	\begin{align*}
		T=\overline{T},\qquad T\ge 0.
	\end{align*}
\end{proof}
	
	\begin{lemma}[Every real positive operator is an effective multipartite RQT tester up to scale]
	\label{lem:aT-rqt}
Let the parties be $X^{(1)},\dots,X^{(n)}$. For every
\begin{align*}
    T\in K_{\mathbb R}:=\mathrm{PSD}\!\left(
    L_{\mathbb R}^{X^{(1)}_1X^{(1)}_2\cdots X^{(n)}_1X^{(n)}_2}
    \right),
\end{align*}
	there exists a scalar $a_T>0$ such that $a_TT$ is an effective $n$-party RQT test operator generated by a shared real ancilla state and local real trace-nonincreasing CP maps.
\end{lemma}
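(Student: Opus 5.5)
The plan is to transcribe the first half of the proof of Lemma~\ref{lem:aT} into the real setting. In that proof the covariance step was needed only because the bare construction was not covariant. Here no symmetrisation is required: it suffices to check that every ingredient of the bare construction is real in the fixed product basis.

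Let $T\in K_{\mathbb R}$ with $T\neq 0$; the case $T=0$ is trivial. For each party choose an ancilla $R'_k\simeq X^{(k)}_1X^{(k)}_2$ and take the unnormalised maximally entangled vectors
\begin{align*}
|\Omega_k\rangle=\sum_{i,\alpha}|i\rangle_{X^{(k)}_1}|\alpha\rangle_{X^{(k)}_2}|i,\alpha\rangle_{R'_k}
\end{align*}
as in the proof of Lemma~\ref{lem:aT}. These vectors have entries $0$ or $1$ in the computational basis. Hence each $N'_k:=|\Omega_k\rangle\langle\Omega_k|/c_k$ is real and positive semidefinite. Since $\Tr_{X^{(k)}_2}|\Omega_k\rangle\langle\Omega_k|\le d_{X^{(k)}_2}\,\id$, any choice $c_k\ge d_{X^{(k)}_2}$ makes $N'_k$ the Choi operator of a trace-nonincreasing CP map. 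Its input is $X^{(k)}_1R'_k$ and its output is $X^{(k)}_2$, so it is a local real map.

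Take the shared ancilla state $\rho_{R'}:=T^T/\Tr T$. Because $T\ge0$ and $T=\overline{T}$, we have $T^T=\overline{T}^{\,\dagger}$ up to conjugation, so $T^T=T$. Hence $\rho_{R'}$ is a real positive semidefinite state of unit trace, and therefore an admissible shared real ancilla. This uses only that real Hermitian means real symmetric.

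The identity $\Tr[(A\otimes B)|\Omega\rangle\langle\Omega|]=\Tr[AB^T]$, used verbatim from the proof of Lemma~\ref{lem:aT}, then gives, for every Hermitian $A$,
\begin{align*}
\Tr\!\left[(A\otimes\rho_{R'})(N'_1\otimes\cdots\otimes N'_n)\right]=\frac{\Tr(AT)}{\Tr(T)\prod_k c_k}.
\end{align*}
So the effective operator generated by these ingredients equals $a_T T$ with $a_T=(\Tr T\prod_k c_k)^{-1}>0$. By Lemma~\ref{lem:rqt-effective-testers-real-pos} it is automatically real, which is a consistency check.

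There is no serious obstacle. The only points needing care are the following. First, the trace-nonincreasing normalisation constant must be chosen correctly; the bound should be in terms of $d_{X_2}$, and Lemma~\ref{lem:aT} states $c_k\ge d_{X^{(k)}_1}$, so I would take $c_k\ge d_{X^{(k)}_1}d_{X^{(k)}_2}$ to be safe. Second, one must confirm that the ancilla state is shared across all parties with no source-independence assumption. This is allowed explicitly by Definition~\ref{def:swirled-W}, and it matters because $T$ may be entangled across parties.
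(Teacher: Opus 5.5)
Your proposal is correct and follows the paper's proof essentially verbatim: the real maximally correlated vectors $|\Omega_k\rangle$, the shared real ancilla $T^T/\Tr T$ (with no source-independence assumption), and the contraction identity $\Tr[(A\otimes B)|\Omega\rangle\langle\Omega|]=\Tr[AB^T]$ together yield $a_TT$ with $a_T=(\Tr T\prod_k c_k)^{-1}$. One small correction: $\Tr_{X^{(k)}_2}|\Omega_k\rangle\langle\Omega_k|$ equals $d_{X^{(k)}_1}$ times a projector, so the constant $c_k\ge d_{X^{(k)}_1}$ of Lemma~\ref{lem:aT} (equivalently the paper's $c_i\le 1/d_{X^{(i)}_1}$) was the right one, and your claimed bound in terms of $d_{X^{(k)}_2}$ fails whenever $d_{X^{(k)}_2}<d_{X^{(k)}_1}$; your hedged choice $c_k\ge d_{X^{(k)}_1}d_{X^{(k)}_2}$ remains valid, so the argument still goes through.
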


\begin{proof}
	Let $T\in K_{\mathbb R}$. If $T=0$, the claim is trivial, so assume $T\neq 0$.
	
	For each party $X^{(i)}$, write
	\begin{align*}
		S_i:=X^{(i)}_1X^{(i)}_2.
	\end{align*}
	Choose ancillas
	\begin{align*}
		R'_i\simeq S_i,
	\end{align*}
	with fixed bases identified with the product bases of $S_i$.
	
	For each $i$, define the unnormalised maximally correlated vector
	\begin{align*}
		|\Omega_i\rangle
		:=
		\sum_{p_i,q_i}
		|p_i\rangle_{X^{(i)}_1}\otimes |p_i,q_i\rangle_{R'_i}\otimes |q_i\rangle_{X^{(i)}_2}.
	\end{align*}
	Choose
	\begin{align*}
		0<c_i\le \frac{1}{d_{X^{(i)}_1}},
	\end{align*}
	and set
	\begin{align*}
		N'_i:=c_i\,|\Omega_i\rangle\langle\Omega_i|.
	\end{align*}
	Since $|\Omega_i\rangle$ has only real coefficients in the chosen basis, $N'_i$ is real positive semidefinite. Moreover,
	\begin{align*}
		\Tr_{X^{(i)}_2}N'_i
		=
		c_i\sum_{q_i}
		\left(\sum_{p_i}|p_i\rangle_{X^{(i)}_1}|p_i,q_i\rangle_{R'_i}\right)
		\left(\sum_{p'_i}\langle p'_i|_{X^{(i)}_1}\langle p'_i,q_i|_{R'_i}\right).
	\end{align*}
	The vectors in parentheses are mutually orthogonal for different $q_i$, and each has squared norm $d_{X^{(i)}_1}$. Hence the largest eigenvalue of $\Tr_{X^{(i)}_2}N'_i$ is $c_i d_{X^{(i)}_1}\le 1$, so
	\begin{align*}
		\Tr_{X^{(i)}_2}N'_i\le I_{X^{(i)}_1R'_i}.
	\end{align*}
	Thus $N'_i$ is a valid local trace-nonincreasing RQT Choi operator.
	
	Now define the shared ancilla state
	\begin{align*}
		\rho_{R'_1\cdots R'_n}:=\frac{T^T}{\Tr(T)},
	\end{align*}
	where we identify $R'_1\cdots R'_n$ with $S_1\cdots S_n$ in the fixed product basis. Since $T\ge 0$, also $T^T\ge 0$, and since $T$ is real, $\rho_{R'_1\cdots R'_n}$ is a real density operator.
	
	For every Hermitian operator $Y\in L^{X^{(1)}_1X^{(1)}_2\cdots X^{(n)}_1X^{(n)}_2}$, the maximally correlated contraction gives
	\begin{align*}
\Tr\!\left[ (Y\otimes \rho_{R'_1\cdots R'_n})(N'_1\otimes\cdots\otimes N'_n)\right]=\frac{\prod_{i=1}^n c_i}{\Tr(T)}\,\Tr(YT).
	\end{align*}
	Therefore, defining
	\begin{align*}
		a_T:=\frac{\prod_{i=1}^n c_i}{\Tr(T)},
	\end{align*}
	the effective $n$-party RQT test operator generated by $\rho_{R'_1\cdots R'_n}$ and the local Choi operators $N'_1,\dots,N'_n$ is $a_TT$.
	
	Hence $a_TT$ is an effective $n$-party RQT test operator.
\end{proof}
	
\rqtprmat*
\begin{proof}
$(\Rightarrow)$ Assume that $W$ is an $n$-party RQT process matrix in the sense of Definition~\ref{def:swirled-W}. By definition, $W$ is the real-symmetric representative, hence
\begin{align*}
    W\in L_{\bR}^{X^{(1)}_1X^{(1)}_2\cdots X^{(n)}_1X^{(n)}_2}.
\end{align*}
We first prove positivity. Let
\begin{align*}
    T\in K_{\mathbb R}:=\mathrm{PSD}\!\left(
    L_{\bR}^{X^{(1)}_1X^{(1)}_2\cdots X^{(n)}_1X^{(n)}_2}
    \right).
\end{align*}
By Lemma~\ref{lem:aT-rqt}, there exists $a_T>0$ such that $a_TT$ is an effective RQT test operator generated by a shared real ancilla state and local real trace-nonincreasing CP maps. Applying condition $(\mathrm O1)$ to this test operator gives $0\le \Tr[W\,a_TT]=a_T\Tr[WT]$, and therefore $\Tr[WT]\ge 0$ for all $T\in K_{\mathbb R}$.

Since $W$ is real symmetric, it is enough to test positivity against real positive semidefinite operators. Indeed, for every vector $|\psi\rangle$, let $T_\psi:=|\psi\rangle\!\langle\psi|+\overline{|\psi\rangle\!\langle\psi|}$. Then $T_\psi\in K_{\mathbb R}$, and
\begin{align*}
    \Tr[WT_\psi]
    &=
    \langle\psi|W|\psi\rangle
    +
    \Tr\!\left[W\,\overline{|\psi\rangle\!\langle\psi|}\right]  \\
    &=
    \langle\psi|W|\psi\rangle
    +
    \overline{\langle\psi|W|\psi\rangle}
    =
    2\langle\psi|W|\psi\rangle,
\end{align*}
where we used that $W$ is real symmetric, hence Hermitian. Thus $\langle\psi|W|\psi\rangle\ge 0$ for all $|\psi\rangle$, and therefore $W\ge 0$.

It remains to prove normalization. Applying condition $(\mathrm O2)$ to one-outcome complete local RQT instruments, with CPTP Choi matrices $M_i\in L_{\bR}^{X^{(i)}_1X^{(i)}_2}$, gives
\begin{align*}
    \Tr\!\bigl[W(M_1\otimes\cdots\otimes M_n)\bigr]=1
\end{align*}
for all CPTP Choi matrices $M_i\in L_{\bR}^{X^{(i)}_1X^{(i)}_2}$. This proves the \enquote{only if} direction.

$(\Leftarrow)$ Conversely, assume that
\begin{align*}
    W\in L_{\bR}^{X^{(1)}_1X^{(1)}_2\cdots X^{(n)}_1X^{(n)}_2},
    \qquad
    W\ge 0,
\end{align*}
and that $\Tr\!\bigl[W(M_1\otimes\cdots\otimes M_n)\bigr]=1$ for all CPTP Choi matrices $M_i\in L_{\bR}^{X^{(i)}_1X^{(i)}_2}$. Thus $W$ is a real-symmetric representative, as required in Definition~\ref{def:swirled-W}.

We verify $(\mathrm O1)$. Consider an arbitrary effective RQT test operator generated by a shared real ancilla state and local real trace-nonincreasing CP maps. Such a test operator is positive semidefinite and belongs to $L_{\bR}^{X^{(1)}_1X^{(1)}_2\cdots X^{(n)}_1X^{(n)}_2}$. Since $W\ge 0$, its pairing with this positive test operator is nonnegative. Hence $(\mathrm O1)$ holds.

We verify $(\mathrm O2)$. Let $\{\mathcal M^{(i)}_{a_i}\}_{a_i}$ be complete local RQT instruments for each party $X^{(i)}$, with Choi matrices $M^{(i)}_{a_i}\in L_{\bR}^{X^{(i)}_1X^{(i)}_2}$. For each $i$, the sum $M_i:=\sum_{a_i}M^{(i)}_{a_i}$ is a CPTP Choi matrix in $L_{\bR}^{X^{(i)}_1X^{(i)}_2}$. Therefore, by the assumed normalization condition,
\begin{align*}
    \sum_{a_1,\dots,a_n} \Tr\!\left[ W\left(M^{(1)}_{a_1}\otimes\cdots\otimes M^{(n)}_{a_n}\right) \right]
    =\Tr\!\left[W\left(M_1\otimes\cdots\otimes M_n\right)\right] =1.
\end{align*}
Thus $(\mathrm O2)$ holds. Hence $W$ is an $n$-party RQT process matrix.
\end{proof}

	\section{Examples}
	\label{app:examples}
	
	\subsection{Single 2-dimensional party}	
	
	\paragraph{Ordinary quantum theory}
	\label{sec:qt-example}
	
	From~\cite{oreshkov2012a} the single-party process matrix in ordinary quantum theory takes the form
	\begin{align}\label{eq:single-QT}
		W_{\mathrm{QT}}&=\tfrac{1}{2}\Big(\id+v_x \sigma_x^{A_1}+v_y \sigma_y^{A_1}+v_z \sigma_z^{A_1}\Big),
	\end{align}
	with $v_x,v_y,v_z\in\mathbb R$, subject to $W_{\mathrm{QT}}\ge 0$.
	
	Thus, in the single-party case, ordinary QT process matrices are density operators on the input space $A_1$, tensored with the maximally mixed operator on the output space $A_2$. 
	
	\paragraph{Parity-twirled fermionic world}
	\label{sec:parity-example}
	
	The parity-twirled fermionic world~\cite{centeno2025} has $G=\mathbb Z_2$, with $U_\pi=\sigma_z$ on a single fermionic mode. The Pauli basis transforms as $\sigma_z\sigma_\mu\sigma_z=p_\mu\sigma_\mu$, with $p_0=p_z=+1$ and $p_x=p_y=-1$. Thus, on a single mode, the invariant operators are spanned by $\{\id,\sigma_z\}$. On the input-output pair $A_1A_2$, the globally invariant operator space is spanned by $\id$, $\sigma_z^{A_1}$, $\sigma_z^{A_2}$, $\sigma_z^{A_1}\sigma_z^{A_2}$, $\sigma_x^{A_1}\sigma_x^{A_2}$, $\sigma_x^{A_1}\sigma_y^{A_2}$, $\sigma_y^{A_1}\sigma_x^{A_2}$, and $\sigma_y^{A_1}\sigma_y^{A_2}$.
	
	A parity-covariant CPTP Choi matrix is therefore a linear combination of $\id$, $\sigma_z^{A_2}$, $\sigma_z^{A_1}\sigma_z^{A_2}$, $\sigma_x^{A_1}\sigma_x^{A_2}$, $\sigma_x^{A_1}\sigma_y^{A_2}$, $\sigma_y^{A_1}\sigma_x^{A_2}$, and $\sigma_y^{A_1}\sigma_y^{A_2}$, subject to positivity. Since we now require the process matrix itself to be globally invariant, $W_{\mathrm{par}}$ (the subscript \enquote{par} indicates that it is a process matrix for parity-twirled QT) must lie in the invariant subspace above. Imposing the normalization condition $\Tr[W_{\mathrm{par}}M_{\mathrm{par}}]=1$ for all parity-covariant CPTP maps $M_{\mathrm{par}}$ removes every invariant component except the identity and the input term $\sigma_z^{A_1}$. Therefore the single-party parity-twirled process matrix has the form
	\begin{equation}\label{eq:single-par}
		W_{\mathrm{par}}=\tfrac{1}{2}\bigl(\id+v_z\sigma_z^{A_1}\bigr),
	\end{equation}
	with $v_z\in\mathbb R$, subject to $W_{\mathrm{par}}\ge 0$.
	
	Thus the single-party parity-twirled process is a parity-twirled fermionic qubit state on the input system, mirroring the ordinary QT case where a single-party process matrix is just a qubit state.
	
	\paragraph{Real quantum theory}
	\label{sec:rqt-example}
	
	For qubits, complex conjugation acts as $\overline{\sigma_\mu}=s_\mu\sigma_\mu$, with $s_0=s_x=s_z=+1$ and $s_y=-1$. Hence an operator is real iff its expansion contains no term with an odd number of $\sigma_y$ factors. On a single input-output pair $A_1A_2$, the real operator space is spanned by $\id$, $\sigma_x^{A_1}$, $\sigma_z^{A_1}$, $\sigma_x^{A_2}$, $\sigma_z^{A_2}$, $\sigma_x^{A_1}\sigma_x^{A_2}$, $\sigma_x^{A_1}\sigma_z^{A_2}$, $\sigma_z^{A_1}\sigma_x^{A_2}$, $\sigma_z^{A_1}\sigma_z^{A_2}$, and $\sigma_y^{A_1}\sigma_y^{A_2}$.
	
	A real CPTP Choi matrix is therefore a linear combination of $\id$, $\sigma_x^{A_2}$, $\sigma_z^{A_2}$, $\sigma_x^{A_1}\sigma_x^{A_2}$, $\sigma_x^{A_1}\sigma_z^{A_2}$, $\sigma_z^{A_1}\sigma_x^{A_2}$, $\sigma_z^{A_1}\sigma_z^{A_2}$, and $\sigma_y^{A_1}\sigma_y^{A_2}$, subject to positivity. Since we now require the process matrix itself to be real, $W_{\mathrm{RQT}}$ must lie in the real subspace above. Imposing the normalization condition $\Tr[W_{\mathrm{RQT}}M_{\mathrm{RQT}}]=1$ for all real CPTP Choi matrices $M_{\mathrm{RQT}}$ removes every real component except the identity and the input terms $\sigma_x^{A_1}$ and $\sigma_z^{A_1}$. Therefore the single-party RQT process matrix has the form
	\begin{equation}\label{eq:single-rqt}
		W_{\mathrm{RQT}}=\tfrac{1}{2}\bigl(\id+v_x\sigma_x^{A_1}+v_z\sigma_z^{A_1}\bigr),
	\end{equation}
	with $v_x,v_z\in\mathbb R$, subject to $W_{\mathrm{RQT}}\ge 0$.
	
	Thus the single-party RQT process is a real qubit state on the input system, again mirroring the ordinary QT case and showing that the construction is consistent at the single-party level.
	
	\subsection{Two 2-dimensional parties}
	
	\paragraph{Ordinary quantum theory: two parties}
	\label{sec:qt-example-bipartite}
	
	For two two-dimensional parties, the most general bipartite qubit process matrix is a Hermitian operator on $A_1A_2B_1B_2$ satisfying the OCB positivity and normalization conditions. In the Pauli basis, the allowed support types are $1$, $A_1$, $B_1$, $A_1B_1$, $A_2B_1$, $A_1B_2$, $A_1A_2B_1$, and $A_1B_1B_2$. Thus one can write
	\begin{equation}
		\begin{aligned}
			W_{\mathrm{QT}}=\frac{1}{4}\Big(&\id+\sum_{i\in\{x,y,z\}} a_i\,\sigma_i^{A_1}+\sum_{j\in\{x,y,z\}} b_j\,\sigma_j^{B_1}+\sum_{i,j\in\{x,y,z\}} c_{ij}\,\sigma_i^{A_1}\sigma_j^{B_1}+\sum_{i,j\in\{x,y,z\}} d_{ij}\,\sigma_i^{A_2}\sigma_j^{B_1}\\
			&+\sum_{i,j\in\{x,y,z\}} e_{ij}\,\sigma_i^{A_1}\sigma_j^{B_2}+\sum_{i,j,k\in\{x,y,z\}} f_{ijk}\,\sigma_i^{A_1}\sigma_j^{A_2}\sigma_k^{B_1}+\sum_{i,j,k\in\{x,y,z\}} g_{ijk}\,\sigma_i^{A_1}\sigma_j^{B_1}\sigma_k^{B_2}\Big),
		\end{aligned}
		\label{eq:bipartite-qt}
	\end{equation}
	with all coefficients real, subject to $W_{\mathrm{QT}}\ge 0$.
	
	Thus the bipartite QT process matrix is the ordinary two-qubit OCB process matrix. OCB-forbidden terms as postselection terms, local-loop terms, channel-with-local-loop terms, and global-loop terms are absent.
	
	\paragraph{Parity-twirled fermionic world: two parties}
	\label{sec:parity-example-bipartite}
	
	For each local input-output pair $X_1X_2$, with $X=A,B$, the parity-invariant Pauli operators are $\id\id$, $\sigma_z^{X_1}$, $\sigma_z^{X_2}$, $\sigma_z^{X_1}\sigma_z^{X_2}$, $\sigma_x^{X_1}\sigma_x^{X_2}$, $\sigma_x^{X_1}\sigma_y^{X_2}$, $\sigma_y^{X_1}\sigma_x^{X_2}$, and $\sigma_y^{X_1}\sigma_y^{X_2}$. Since the process matrix itself must be globally invariant, the bipartite process matrix is a linear combination of globally invariant Pauli strings. Imposing the normalization conditions against all local parity-covariant CPTP maps leaves the usual OCB-visible sector, together with an additional globally invariant but locally inaccessible sector. Thus
	\begin{equation}
		\begin{aligned}
			W_{\mathrm{par}}=\frac{1}{4}\Big(&\id+a\,\sigma_z^{A_1}+b\,\sigma_z^{B_1}+c\,\sigma_z^{A_1}\sigma_z^{B_1}+d\,\sigma_z^{A_2}\sigma_z^{B_1}+e\,\sigma_z^{A_1}\sigma_z^{B_2}+\sum_{P\in\mathcal I_A^{(2)}} \alpha_P\,P\,\sigma_z^{B_1}+\sum_{Q\in\mathcal I_B^{(2)}} \beta_Q\,\sigma_z^{A_1}\,Q+\Delta_{\mathrm{par}}\Big),
		\end{aligned}
		\label{eq:bipartite-par}
	\end{equation}
	where
	\begin{align*}
		\mathcal I_A^{(2)}&=\{\sigma_z^{A_1}\sigma_z^{A_2}, \sigma_x^{A_1}\sigma_x^{A_2}, \sigma_x^{A_1}\sigma_y^{A_2}, \sigma_y^{A_1}\sigma_x^{A_2}, \sigma_y^{A_1}\sigma_y^{A_2}\},\\
		\mathcal I_B^{(2)}&=\{\sigma_z^{B_1}\sigma_z^{B_2}, \sigma_x^{B_1}\sigma_x^{B_2}, \sigma_x^{B_1}\sigma_y^{B_2}, \sigma_y^{B_1}\sigma_x^{B_2}, \sigma_y^{B_1}\sigma_y^{B_2}\},
	\end{align*}
	and
	\begin{equation}
		\Delta_{\mathrm{par}}=\sum_{P\in\mathcal O_A}\sum_{Q\in\mathcal O_B} h_{PQ}\,P\,Q ,
	\end{equation}
	with
	\begin{align*}
		\mathcal O_A&=\{\sigma_x^{A_1},\sigma_y^{A_1}, \sigma_x^{A_2},\sigma_y^{A_2}, \sigma_x^{A_1}\sigma_z^{A_2}, \sigma_y^{A_1}\sigma_z^{A_2}, \sigma_z^{A_1}\sigma_x^{A_2}, \sigma_z^{A_1}\sigma_y^{A_2}\},\\
		\mathcal O_B&=\{\sigma_x^{B_1},\sigma_y^{B_1}, \sigma_x^{B_2},\sigma_y^{B_2}, \sigma_x^{B_1}\sigma_z^{B_2}, \sigma_y^{B_1}\sigma_z^{B_2},\sigma_z^{B_1}\sigma_x^{B_2}, \sigma_z^{B_1}\sigma_y^{B_2}\},
	\end{align*}
	and all coefficients are real, subject to $W_{\mathrm{par}}\ge 0$.
	
	The terms different from $\Delta_{\mathrm{par}}$ in Eq.~\eqref{eq:bipartite-par} are the parity-restricted analogue of the ordinary OCB sector and are all OCB-allowed: they realize the support types $1$, $A_1$, $B_1$, $A_1B_1$, $A_2B_1$, $A_1B_2$, $A_1A_2B_1$, and $A_1B_1B_2$, but only with parity-invariant local factors. The term $\Delta_{\mathrm{par}}$ contains the additional globally invariant directions that are not reconstructable from local covariant data. This sector is locally inaccessible; it may contain OCB-forbidden support types. Thus the visible part of $W_{\mathrm{par}}$ still mirrors the ordinary bipartite qubit process matrix, while the extra parity-twirled structure is local tomographically hidden.
	
	\paragraph{Real quantum theory: two parties}
	\label{sec:rqt-example-bipartite}
	
	For each local input-output pair $X_1X_2$, with $X=A,B$, the real Pauli operators are
	\begin{align*}
		\id\id,\quad \sigma_x^{X_1},\quad \sigma_z^{X_1},\quad \sigma_x^{X_2},\quad \sigma_z^{X_2},
	\end{align*}
	\begin{align*}
		\sigma_x^{X_1}\sigma_x^{X_2},\quad \sigma_x^{X_1}\sigma_z^{X_2},\quad \sigma_z^{X_1}\sigma_x^{X_2},\quad \sigma_z^{X_1}\sigma_z^{X_2},\quad \sigma_y^{X_1}\sigma_y^{X_2}.
	\end{align*}
	Thus the real self-adjoint local space is
    \begin{align*}
    \Herm_{\mathbb R}^{X_1X_2}= \mathrm{span}_{\mathbb R}\!\big\{ & \id\id, \sigma_x^{X_1}, \sigma_z^{X_1}, \sigma_x^{X_2}, \sigma_z^{X_2}, \sigma_x^{X_1}\sigma_x^{X_2},\sigma_x^{X_1}\sigma_z^{X_2}, \sigma_z^{X_1}\sigma_x^{X_2}, \sigma_z^{X_1}\sigma_z^{X_2},\sigma_y^{X_1}\sigma_y^{X_2} \big\}.
    \end{align*}
	
	Define
	\begin{align*}
		K_X:=\mathrm{span}_{\mathbb R}\!\left\{ I,\ \sigma_x^{X_1},\ \sigma_z^{X_1} \right\}, \  K_X^0:=\mathrm{span}_{\mathbb R}\!\left\{ \sigma_x^{X_1},\ \sigma_z^{X_1} \right\},
	\end{align*}
	and
	\begin{align*}
		T_X:=\mathrm{span}_{\mathbb R}\!\big\{& \sigma_x^{X_2},\sigma_z^{X_2}, \sigma_x^{X_1}\sigma_x^{X_2}, \sigma_x^{X_1}\sigma_z^{X_2}, \sigma_z^{X_1}\sigma_x^{X_2},\sigma_z^{X_1}\sigma_z^{X_2}, \sigma_y^{X_1}\sigma_y^{X_2} \big\}.
	\end{align*}
	Then
	\begin{align*}
		\Herm_{\mathbb R}^{X_1X_2}=K_X\oplus T_X,
	\end{align*}
	and, by Lemma~\ref{lem:rqt-visible-sector-qubits}, the locally accessible part of any bipartite two-qubit RQT process matrix belongs to
	\begin{align*}
		(K_A\otimes K_B)\oplus(T_A\otimes K_B^0)\oplus(K_A^0\otimes T_B).
	\end{align*}
	Accordingly, every bipartite two-qubit RQT process matrix can be decomposed as
	\begin{align*}
		W_{\mathrm{RQT}}=W_{\mathrm{vis}}+\Delta_{\mathrm{RQT}},
	\end{align*}
	where $W_{\mathrm{vis}}$ is the locally accessible component
	\begin{equation}
		\begin{aligned}
			W_{\mathrm{vis}}=\frac{1}{4}\Big(&\id+\sum_{i\in\{x,z\}} a_i\,\sigma_i^{A_1}+\sum_{j\in\{x,z\}} b_j\,\sigma_j^{B_1}+\sum_{i,j\in\{x,z\}} c_{ij}\,\sigma_i^{A_1}\sigma_j^{B_1}+\sum_{t\in\mathcal T_A}\sum_{j\in\{x,z\}} d_{tj}\,t\,\sigma_j^{B_1}+\sum_{i\in\{x,z\}}\sum_{t\in\mathcal T_B} e_{it}\,\sigma_i^{A_1}\,t\Big),
		\end{aligned}
		\label{eq:bipartite-rqt}
	\end{equation}
	with
	\begin{align*}
		\mathcal T_X= \big\{& \sigma_x^{X_2}, \sigma_z^{X_2}, \sigma_x^{X_1}\sigma_x^{X_2}, \sigma_x^{X_1}\sigma_z^{X_2}, \sigma_z^{X_1}\sigma_x^{X_2},\sigma_z^{X_1}\sigma_z^{X_2}, \sigma_y^{X_1}\sigma_y^{X_2} \big\},
	\end{align*}
    all coefficients real, and $\Delta_{\mathrm{RQT}}$ orthogonal to
    \begin{align*}
        \Herm_{\mathbb R}^{A_1A_2}\otimes \Herm_{\mathbb R}^{B_1B_2}.
    \end{align*}
	

	\section{Causal loops and tomographic locality}
	\label{app:causal-loops}
	
	Before proving the main theorem of the section we need to introduce some notation and a short lemma that will result useful in the case of RQT.
	
For each qubit input-output pair $X_1X_2$, with $X=A,B$, the real self-adjoint local space introduced above is
\begin{align*}
    \Herm_{\mathbb{R}}^{X_1X_2}
    =\mathrm{span}_{\mathbb{R}}\big\{&
    \id,\sigma_x^{X_1},\sigma_z^{X_1},
    \sigma_x^{X_2},\sigma_z^{X_2},
    \sigma_x^{X_1}\sigma_x^{X_2},
    \sigma_x^{X_1}\sigma_z^{X_2},
    \sigma_z^{X_1}\sigma_x^{X_2},
    \sigma_z^{X_1}\sigma_z^{X_2},
    \sigma_y^{X_1}\sigma_y^{X_2}
    \big\}.
\end{align*}
	Define
	\begin{align*}
		K_X&:=\mathrm{span}_{\mathbb{R}}\{\id,\sigma_x^{X_1},\sigma_z^{X_1}\},\\
		K_X^0&:=\mathrm{span}_{\mathbb{R}}\{\sigma_x^{X_1},\sigma_z^{X_1}\},\\
		T_X&:=\{N\in \Herm_{\mathbb{R}}^{X_1X_2}:\Tr_{X_2}N=0\}.
	\end{align*}
	
	\begin{lemma}[Visible sector of bipartite qubit RQT]
		\label{lem:rqt-visible-sector-qubits}
		Let $W_{\mathrm{la}}\in \Herm_{\mathbb{R}}^{A_1A_2}\otimes \Herm_{\mathbb{R}}^{B_1B_2}$ be a real self-adjoint operator such that
		\begin{align*}
			\Tr\bigl[W_{\mathrm{la}}(M^{A_1A_2}\otimes M^{B_1B_2})\bigr]=1
		\end{align*}
		for all real CPTP Choi matrices $M^{A_1A_2}\in \Herm_{\mathbb R}^{A_1A_2}$ and $M^{B_1B_2}\in \Herm_{\mathbb R}^{B_1B_2}$. Then
		\begin{align*}
			W_{\mathrm{la}}\in (K_A\otimes K_B)\oplus(T_A\otimes K_B^0)\oplus(K_A^0\otimes T_B).
		\end{align*}
	\end{lemma}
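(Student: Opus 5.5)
The plan is to mirror the OCB derivation of the bipartite qubit process-matrix form, now run inside the real local spaces. First, characterise the affine span of real CPTP Choi matrices on a single pair $X_1X_2$. Every such $M$ satisfies $\Tr_{X_2}M=\id^{X_1}$. Writing $M$ in the basis of $\Herm_{\mathbb R}^{X_1X_2}$, this forces the coefficient of $\id$ to be $\tfrac12$ and kills the $\sigma_x^{X_1},\sigma_z^{X_1}$ components. Hence $M\in \tfrac12\id+T_X$.

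Conversely, the differences of real CPTP Choi matrices span all of $T_X$. To see this, take $M_0=\tfrac12\id$, the completely depolarising channel, which lies in the interior of the positive cone. Since $M_0>0$, the operator $M_0+\epsilon t$ is positive for small $\epsilon$ and every $t\in T_X$, and it remains real and trace preserving. So the set $\{M-M_0\}$ spans $T_X$ over $\mathbb R$. This local step is routine but is the foundation for everything that follows.

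Next, decompose $W_{\mathrm{la}}$ along $\Herm_{\mathbb R}^{X_1X_2}=K_X\oplus T_X$. This is a genuine direct sum, since $T_X$ is exactly the kernel of $\Tr_{X_2}$ on the real space and $K_X$ is complementary to it. The decomposition reads
\begin{align*}
W_{\mathrm{la}}=W_{KK}+W_{TK}+W_{KT}+W_{TT},
\end{align*}
with $W_{KK}\in K_A\otimes K_B$, $W_{TK}\in T_A\otimes K_B$, and so on.

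The normalization condition states that $f(M_A,M_B)=\Tr[W_{\mathrm{la}}(M_A\otimes M_B)]$ is constant on the product of the two affine sets. Substituting $M_A=M_0^A+t_A$ and $M_B=M_0^B+t_B$ with small $t_A,t_B$, and using bilinearity, produces three conditions:
\begin{align*}
\Tr[W_{\mathrm{la}}(t_A\otimes M_0^B)]=0,\qquad \Tr[W_{\mathrm{la}}(M_0^A\otimes t_B)]=0,\qquad \Tr[W_{\mathrm{la}}(t_A\otimes t_B)]=0,
\end{align*}
each holding for all $t_A\in T_A$ and $t_B\in T_B$ by spanning.

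These conditions are then translated into orthogonality statements, using the Hilbert--Schmidt pairing. The key observation is that $K_X$ is Hilbert--Schmidt orthogonal to $T_X$: every element of $T_X$ is traceless and has no pure $X_1$ Pauli component, so its pairing with $\id,\sigma_x^{X_1},\sigma_z^{X_1}$ vanishes. Pairing with $M_0=\tfrac12\id$ only detects the $\id$ component within $K_X$. The conditions therefore become:
\begin{itemize}
\item $W_{TT}=0$;
\item the $T_A\otimes \id_B$ component of $W_{TK}$ vanishes, so $W_{TK}\in T_A\otimes K_B^0$;
\item similarly, $W_{KT}\in K_A^0\otimes T_B$.
\end{itemize}
This yields exactly the stated membership.

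The main obstacle is justifying the orthogonality of $K_X$ and $T_X$ together with the nondegeneracy of the pairing restricted to $T_X$. In the qubit case, I would check both directly from the listed Pauli bases. The Pauli strings are mutually orthogonal, $T_X$ is spanned by the seven listed strings, and each of these strings is orthogonal to $\id,\sigma_x^{X_1},\sigma_z^{X_1}$. Because of this, a component orthogonal to all of $T_A\otimes T_B$ within that span must vanish.
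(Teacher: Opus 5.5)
Your proposal is correct and follows essentially the same route as the paper. Both arguments write every real CPTP Choi matrix as $\tfrac12\id+T_X$ and perturb around the depolarising channel to obtain the three vanishing conditions. Both then use Hilbert--Schmidt orthogonality of the Pauli strings ($K_X\perp T_X$) to kill the $T_A\otimes\id$, $\id\otimes T_B$ and $T_A\otimes T_B$ components.
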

	
	\begin{proof}
		For each $X=A,B$, the displayed basis of $\Herm_{\mathbb{R}}^{X_1X_2}$ is a real Pauli basis. Since $\Tr(\sigma_\mu\sigma_\nu)=2\delta_{\mu\nu}$, distinct Pauli strings are orthogonal when using the Hilbert--Schmidt inner product. Hence $K_X$ and $T_X$ are orthogonal, and $\Herm_{\mathbb{R}}^{X_1X_2}=K_X\oplus T_X$.
		
		Now let $M^{X_1X_2}$ be a real CPTP Choi matrix. Expanding it in the basis of $\Herm_{\mathbb{R}}^{X_1X_2}$ and imposing $\Tr_{X_2}M^{X_1X_2}=I_{X_1}$ shows that its $K_X$-component is $\tfrac12\id$, with no $\sigma_x^{X_1}$ or $\sigma_z^{X_1}$ term. Therefore every real CPTP Choi matrix has a unique decomposition
		\begin{align*}
			M^{X_1X_2}=\tfrac12\id+N_X,\qquad N_X\in T_X.
		\end{align*}
		Conversely, for every $N_X\in T_X$, the operator $\tfrac12\id+\varepsilon N_X$ is a real CPTP Choi matrix for all sufficiently small real $\varepsilon$, since $\tfrac12\id$ is strictly positive and the trace-preserving condition is automatic for $N_X\in T_X$.
		
		Take arbitrary $N_A\in T_A$ and $N_B\in T_B$. For sufficiently small $\varepsilon,\delta\in\mathbb{R}$, the operators
		\begin{align*}
			M_A(\varepsilon)&:=\tfrac12\id+\varepsilon N_A,\\
			M_B(\delta)&:=\tfrac12\id+\delta N_B
		\end{align*}
		are real CPTP Choi matrices, so by hypothesis
		\begin{align*}
			\Tr\bigl[W_{\mathrm{la}}(M_A(\varepsilon)\otimes M_B(\delta))\bigr]=1
		\end{align*}
		for all sufficiently small $\varepsilon,\delta$. Expanding, we get
		\begin{align*}
			1&=\Tr\bigl[W_{\mathrm{la}}((\tfrac12\id+\varepsilon N_A)\otimes(\tfrac12\id+\delta N_B))\bigr]\\ &= \Tr\bigl[W_{\mathrm{la}}(\tfrac12\id\otimes\tfrac12\id)\bigr]+\frac{\varepsilon}{2}\Tr\bigl[W_{\mathrm{la}}(N_A\otimes\id)\bigr]+\frac{\delta}{2}\Tr\bigl[W_{\mathrm{la}}(\id\otimes N_B)\bigr]+\varepsilon\delta\,\Tr\bigl[W_{\mathrm{la}}(N_A\otimes N_B)\bigr].
		\end{align*}
		Since this identity holds for all sufficiently small $\varepsilon,\delta$, it follows that
		\begin{align*}
			\Tr\bigl[W_{\mathrm{la}}(N_A\otimes\id)\bigr]&=0\qquad\forall\,N_A\in T_A,\\
			\Tr\bigl[W_{\mathrm{la}}(\id\otimes N_B)\bigr]&=0\qquad\forall\,N_B\in T_B,\\
			\Tr\bigl[W_{\mathrm{la}}(N_A\otimes N_B)\bigr]&=0\qquad\forall\,N_A\in T_A,\ \forall\,N_B\in T_B.
		\end{align*}
		
	Using $\Herm_{\mathbb{R}}^{A_1A_2}=K_A\oplus T_A$ and $\Herm_{\mathbb{R}}^{B_1B_2}=K_B\oplus T_B$, we obtain the orthogonal decomposition
    \begin{align*}
            \Herm_{\mathbb{R}}^{A_1A_2}\otimes \Herm_{\mathbb{R}}^{B_1B_2}&=(K_A\otimes K_B)\oplus(K_A\otimes T_B)\oplus(T_A\otimes K_B)\oplus(T_A\otimes T_B)\\
			&=(K_A\otimes K_B)\oplus(\id\otimes T_B)\oplus(K_A^0\otimes T_B)\oplus(T_A\otimes\id)\oplus(T_A\otimes K_B^0)\oplus(T_A\otimes T_B).
		\end{align*}
		The three orthogonality relations above imply that the components of $W_{\mathrm{la}}$ along $T_A\otimes\id$, $\id\otimes T_B$, and $T_A\otimes T_B$ vanish. Therefore
		\begin{align*}
			W_{\mathrm{la}}\in (K_A\otimes K_B)\oplus(T_A\otimes K_B^0)\oplus(K_A^0\otimes T_B).
		\end{align*}
	\end{proof}
	
	Thus the normalised locally accessible sector contains only the support types
	\begin{align*}
		&1,\ A_1,\ B_1,\ A_1B_1,\\
		&A_2B_1,\ A_1B_2,\ A_1A_2B_1,\ A_1B_1B_2,
	\end{align*}
	that is, only OCB-allowed terms.
	
	We introduce some notation before the next theorem. Let $\Herm_{\mathrm{sym}}^{X_1X_2}$ denote the locally accessible self-adjoint symmetric operator space, namely $\Herm_{\mathrm{inv}}^{X_1X_2}$ in a twirled world and $\Herm_{\mathbb{R}}^{X_1X_2}$ in RQT. Furthermore we denote by $W_{\mathrm{la}}^{\mathrm{sym}}$ the projection of the process matrix onto the locally accessible self-adjoint subspace $\bigotimes_i \Herm_{\mathrm{sym}}^{X^{(i)}_1X^{(i)}_2}$.
	
	\begin{theorem}[Local symmetry hides OCB-forbidden terms in the bipartite qubit case]
		\label{thm:sym-ocb-forbidden-invisible}
		Consider a bipartite process matrix with two-dimensional local input and output systems, in either a finite bipartite twirled world or bipartite real quantum theory. Then the locally reconstructable part of the process matrix contains only OCB-allowed support types. Equivalently, every OCB-forbidden term is invisible to local symmetric tomography.
	\end{theorem}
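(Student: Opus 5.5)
The plan is to treat the two cases with one abstract argument and then specialise it. Write $\Herm_{\mathrm{sym}}^{X_1X_2}$ for the local symmetric space. First I will show that the orthogonal projection $W_{\mathrm{la}}^{\mathrm{sym}}$ of a process representative $W$ onto $\Herm_{\mathrm{sym}}^{A_1A_2}\otimes\Herm_{\mathrm{sym}}^{B_1B_2}$ satisfies the normalisation condition itself. Every admissible local Choi matrix $M^A\otimes M^B$ lies in this subspace, so $\Tr[WM^A\otimes M^B]=\Tr[W_{\mathrm{la}}^{\mathrm{sym}}M^A\otimes M^B]=1$. The remainder $W_{\mathrm{ga}}=W-W_{\mathrm{la}}^{\mathrm{sym}}$ is orthogonal to all product tests, and by construction it is precisely the part invisible to local symmetric tomography. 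The statement therefore reduces to showing that every normalised operator in the locally accessible product space has only OCB-allowed Pauli supports.

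For RQT this is exactly Lemma~\ref{lem:rqt-visible-sector-qubits}. It places $W_{\mathrm{la}}$ in $(K_A\otimes K_B)\oplus(T_A\otimes K_B^0)\oplus(K_A^0\otimes T_B)$. I will then read off the supports of the basis strings. $K_X$ contributes $\emptyset$ or $X_1$. $T_X$ is spanned by strings with support $X_2$ or $X_1X_2$, since $\sigma_y\sigma_y$ has support $X_1X_2$ and every string in $T_X$ is nontrivial on $X_2$. The products therefore give $\emptyset, A_1, B_1, A_1B_1$, together with $A_2B_1, A_1A_2B_1$ and $B_2A_1, A_1B_1B_2$, all of which lie in $\mathcal S_{\mathrm{OCB}}$.

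For a finite twirled world on qubits, I will repeat the proof of Lemma~\ref{lem:rqt-visible-sector-qubits} with $\Herm_{\mathrm{inv}}$ in place of $\Herm_{\mathbb R}$. I set $K_X:=\Herm_{\mathrm{inv}}^{X_1X_2}\cap(\Herm^{X_1}\otimes\id)$, $K_X^0$ its traceless part, and $T_X:=\{N\in\Herm_{\mathrm{inv}}^{X_1X_2}:\Tr_{X_2}N=0\}$.

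The step I expect to need care is establishing $\Herm_{\mathrm{inv}}^{X_1X_2}=K_X\oplus T_X$ orthogonally, together with the fact that covariant CPTP Choi matrices are exactly $\tfrac12\id+N_X$ with $N_X\in T_X$, up to small perturbations. My approach is to note that the twirl $\mathcal T^X$ is a Hilbert--Schmidt-orthogonal projection. It commutes with $\Tr_{X_2}$, because $V_g=U_g\otimes\overline U_g$ and the partial trace intertwines this action with $U_g$ on $X_1$. It also maps $\Herm^{X_1}\otimes\id$ to itself. Hence the complex-QT orthogonal decomposition $\Herm^{X_1X_2}=(\Herm^{X_1}\otimes\id)\oplus\ker\Tr_{X_2}$ is preserved by the twirl, and restricting to the invariant subspace gives the desired split.

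Next, the covariant CPTP condition forces the $K_X$-component to be $\tfrac12\id$. Conversely, $\tfrac12\id+\varepsilon N$ is covariant and CPTP for $N\in T_X$, since $\id$ is invariant and strictly positive. The $\varepsilon,\delta$ expansion argument then transfers verbatim and kills the components along $T_A\otimes\id$, $\id\otimes T_B$ and $T_A\otimes T_B$.

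Finally, I will check that the invariant strings involved are still Pauli-compatible for the support reading. The twirl of a Pauli string is a combination of strings with the same support pattern per factor only when $U_g$ is Pauli-monomial, as in the parity example. In general, though, $T_X\subseteq\ker\Tr_{X_2}$ consists of operators whose Pauli expansion has support $X_2$ or $X_1X_2$, and $K_X\subseteq\Herm^{X_1}\otimes\id$ has support $\emptyset$ or $X_1$. The support bookkeeping therefore goes through exactly as in the RQT case, without needing a Pauli basis for the invariant space. Every OCB-forbidden component thus lies in $W_{\mathrm{ga}}$, as claimed.
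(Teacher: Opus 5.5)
Your proof is correct. For RQT it is the paper's argument: the projection onto $S=\Herm_{\mathbb R}^{A_1A_2}\otimes\Herm_{\mathbb R}^{B_1B_2}$ inherits normalisation, and Lemma~\ref{lem:rqt-visible-sector-qubits} then applies.

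\textbf{Twirled case: a different route.} The paper does not re-run the lemma for invariant spaces. It takes the locally averaged operator $\widetilde W=(\mathcal T_A\otimes\mathcal T_B)(W)$ from the proof of Theorem~\ref{thm:twirl-subset}. This is an ordinary QT process matrix with the same symmetric product statistics, so $P_S(W)=P_S(\widetilde W)$. Since $P_S=\mathcal T_A\otimes\mathcal T_B$ preserves Pauli-support sectors, and $\widetilde W$ has only OCB-allowed support by the standard OCB result, so does $P_S(W)$.

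You instead prove an invariant-subspace analogue of the RQT lemma, using the twirl-stable orthogonal split $(\Herm^{X_1}\otimes\id)\oplus\ker\Tr_{X_2}$. What your route buys:
\begin{itemize}
\item both cases are handled uniformly and self-containedly;
\item there is no appeal to the OCB support theorem;
\item only normalisation of $W$ is used;
\item the decomposition step never uses $d=2$.
\end{itemize}
What the paper's route buys: in fact $\widetilde W=P_S(W)$, so it shows that the visible part of a twirled process is itself a positive, fully normalised QT process matrix. This is exactly the property that fails in RQT and drives the End Matter separation, and your uniform argument does not capture it. One small omission: you assert rather than prove that allowed product tests span $S$. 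This is needed to identify $P_S(W)$ with what local tomography reconstructs. The paper supplies it by writing any symmetric Hermitian $H$ as a difference of symmetric positive operators, e.g.\ $(H+c\id)-c\id$, and rescaling.

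\textbf{A false side remark.} You say the twirl preserves per-factor support patterns only when $U_g$ is Pauli-monomial. This is false. For any unitary $U$, $U\sigma_iU^\dagger$ is traceless, hence lies in $\mathrm{span}\{X,Y,Z\}$. So conjugation by $V_g^X=U_g^{X_1}\otimes\overline U_g^{X_2}$ preserves each of the sectors $1,X_1,X_2,X_1X_2$; this is precisely the fact the paper uses. Your main support claim does not depend on the remark.

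However, the literal ``equivalently'' clause does depend on the correct fact. Let $\Pi_F$ be the projection onto the OCB-forbidden sectors. The clause says $\Pi_F(W)$ is orthogonal to every product test, not merely that $\Pi_F(W)=\Pi_F(W_{\mathrm{ga}})$. This requires $P_S$ to commute with $\Pi_F$. That commutation follows from sector preservation in the twirled case, and from $S$ being spanned by Pauli strings in RQT.
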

	\begin{proof}
		We work throughout in the bipartite qubit setting. For each party $X=A,B$, let $S_X:=\Herm_{\mathrm{sym}}^{X_1X_2}$, where $\Herm_{\mathrm{sym}}^{X_1X_2}=\Herm_{\mathrm{inv}}^{X_1X_2}$ in a twirled world and $\Herm_{\mathrm{sym}}^{X_1X_2}=\Herm_{\mathbb{R}}^{X_1X_2}$ in RQT. Define $S:=S_A\otimes S_B$, let $P_S$ be the orthogonal projector onto $S$, and write
		\begin{align*}
			W_{\mathrm{la}}^{\mathrm{sym}}:=P_S(W)
		\end{align*}
		for the locally accessible part of the process matrix.
		
		For each party $X=A,B$, the allowed local symmetric Choi operators span the full real Hermitian space $S_X$: indeed, every Hermitian $H\in S_X$ can be written as a difference of two positive operators in $S_X$, and after a sufficiently small rescaling these become allowed trace-nonincreasing local outcomes. Hence
		\begin{align*}
			\mathrm{span}_{\mathbb{R}}\{M_A\otimes M_B\}=S_A\otimes S_B=S,
		\end{align*}
		where $M_A$ and $M_B$ range over the allowed local symmetric Choi operators.
		
		We now show that $W_{\mathrm{la}}^{\mathrm{sym}}=P_S(W)$ is the part of $W$ reconstructed by local symmetric tomography. Since every allowed local symmetric Choi operator lies in the corresponding space $S_X$, every allowed product tester $M_A\otimes M_B$ lies in $S$. Hence
		\begin{align*}
			\Tr\bigl[W(M_A\otimes M_B)\bigr]=\Tr\bigl[P_S(W)(M_A\otimes M_B)\bigr].
		\end{align*}
		Because the allowed product testers span $S$, all locally accessible probabilities depend only on $P_S(W)$, so local symmetric tomography reconstructs $W_{\mathrm{la}}^{\mathrm{sym}}$.
		
		We now treat separately the twirled and RQT cases.
		
		\medskip
		\noindent\textbf{Twirled world.}  We use here the stronger operator-level construction proved in Section~\ref{app:correlation-sets}, in the proof of Theorem~\ref{thm:twirl-subset}.
		
		By the operator-level construction used in the proof of Theorem~\ref{thm:twirl-subset} in Section~\ref{app:correlation-sets}, there exists an ordinary QT process matrix $\widetilde W$ on the same local spaces such that
		\begin{align*}
			\Tr\bigl[W(M_A\otimes M_B)\bigr]=\Tr\bigl[\widetilde W(M_A\otimes M_B)\bigr]
		\end{align*}
		for all allowed local symmetric Choi operators $M_A$ and $M_B$. Since these testers span $S$, it follows that
		\begin{align*}
			P_S(W)=P_S(\widetilde W).
		\end{align*}
		
		Now
		\begin{align*}
			S=\Herm_{\mathrm{inv}}^{A_1A_2}\otimes \Herm_{\mathrm{inv}}^{B_1B_2},
		\end{align*}
		so, on Hermitian operators,
    \begin{align*}
        P_S=\mathcal T_A\otimes \mathcal T_B,
    \end{align*}
		where
    \begin{align*}
    \mathcal T_A(X)&:=\frac{1}{|G|}\sum_{g\in G}V_g^A X(V_g^A)^\dagger,
    \qquad
    V_g^A:= U_g^{A_1}\otimes \overline{U}_g^{A_2},\\
    \mathcal T_B(Y)&:=\frac{1}{|G|}\sum_{g\in G}V_g^B Y(V_g^B)^\dagger,
    \qquad
    V_g^B:=U_g^{B_1}\otimes \overline{U}_g^{B_2}.
\end{align*}
		
		For each party $X=A,B$, we decompose the local operator space into the four Pauli-support sectors $1$, $X_1$, $X_2$, and $X_1X_2$. Conjugation by $V_g^X=U_g^{X_1}\otimes \overline{U}_g^{X_2}$ preserves each of these sectors: $I$ is fixed, and conjugation on each tensor factor preserves tracelessness. Therefore the local twirling maps $\mathcal T_A$ and $\mathcal T_B$ are block-diagonal with respect to this decomposition, and so is $P_S=\mathcal T_A\otimes \mathcal T_B$. In particular, $P_S$ cannot move weight from an OCB-allowed support sector to an OCB-forbidden one.
		
		Since $\widetilde W$ is an ordinary bipartite qubit QT process matrix, its Hilbert--Schmidt expansion has support only on the OCB-allowed sectors. Hence the same is true for $P_S(\widetilde W)$. Using
		\begin{align*}
			W_{\mathrm{la}}^{\mathrm{sym}}=P_S(W)=P_S(\widetilde W),
		\end{align*}
		we conclude that the locally reconstructable part of a bipartite qubit twirled-world process matrix contains only OCB-allowed support types. Equivalently, every OCB-forbidden term is invisible to local symmetric tomography in the bipartite qubit twirled case.
		
		\medskip
		\noindent\textbf{RQT.} Here
		\begin{align*}
			S=\Herm_{\mathbb{R}}^{A_1A_2}\otimes \Herm_{\mathbb{R}}^{B_1B_2},
		\end{align*}
		and by the argument above the locally accessible part is again
		\begin{align*}
			W_{\mathrm{la}}^{\mathrm{sym}}=P_S(W).
		\end{align*}
		Moreover, since all local real Choi operators lie in $S$, we have
		\begin{align*}
			\Tr\bigl[W(M^{A_1A_2}\otimes M^{B_1B_2})\bigr]=\Tr\bigl[P_S(W)(M^{A_1A_2}\otimes M^{B_1B_2})\bigr]
		\end{align*}
		for all local real Choi operators. In particular,
		\begin{align*}
			\Tr\bigl[W_{\mathrm{la}}^{\mathrm{sym}}(M^{A_1A_2}\otimes M^{B_1B_2})\bigr]=1
		\end{align*}
		for all local real CPTP Choi matrices.
		
		Lemma~\ref{lem:rqt-visible-sector-qubits} therefore applies to $W_{\mathrm{la}}^{\mathrm{sym}}$ and gives
		\begin{align*}
			W_{\mathrm{la}}^{\mathrm{sym}}\in (K_A\otimes K_B)\oplus(T_A\otimes K_B^0)\oplus(K_A^0\otimes T_B).
		\end{align*}
		Hence its support contains only the OCB-allowed bipartite qubit support types
		\begin{align*}
			&1,\quad A_1,\quad B_1,\quad A_1B_1,\quad A_2B_1,\quad A_1B_2,\quad A_1A_2B_1,\quad A_1B_1B_2.
		\end{align*}
		Thus every OCB-forbidden term is invisible to local real tomography in the bipartite qubit RQT case.
	\end{proof}

	\section{Operational consequences and correlation sets}
	\label{app:correlation-sets}
	
	In this Section we prove the correlation results stated in the main text. The proof of Theorem~\ref{thm:simulation} adapts the incoherent simulation of Ref.~\cite{ying2025a} to the process-matrix setting. The key extra step is to show that, starting from a QT process matrix $W$, one can construct a twirled process matrix $\widehat W$ in the finite-dimensional $G$-twirled theory that reproduces the same correlation.
	
		\begin{theorem}[Fixed-representation inclusion]
		\label{thm:twirl-subset}
		For every finite group $G$, every finite local dimension tuple $d$ and every $n$-party twirled world $(G,U)$,
		\begin{align*}
			\mathcal C_{(G,U)}^{\proc}(d)
			\subseteq
			\mathcal C_{\mathrm{QT}}^{\proc}(d).
		\end{align*}
	\end{theorem}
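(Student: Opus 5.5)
The plan is to prove the inclusion at the level of operators: every twirled process representative will be replaced by an ordinary QT process matrix on the same local spaces that gives the same probabilities on every covariant instrument. Let $p\in\mathcal C_{(G,U)}^{\proc}(d)$ be generated by a twirled process matrix $W$ and local $(G,U)$-covariant instruments $\{M^{(k)}_{a_k|x_k}\}$. I would define the \emph{independently} twirled operator
\begin{align*}
\widetilde W:=\bigl(\mathcal T^{X^{(1)}}\otimes\cdots\otimes\mathcal T^{X^{(n)}}\bigr)(W),
\end{align*}
where $\mathcal T^{X}$ is the local twirl on Choi operators from Appendix~\ref{app:notation}, with $V_g^X=U_g^{X_1}\otimes\overline U_g^{X_2}$. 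Here each party's group element $g_k$ is averaged separately. Global invariance of $W$ alone is not enough, and using independent elements is the essential choice.

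The first step is the duality of each local twirl under the Hilbert--Schmidt pairing: $\Tr[\mathcal T^X(A)B]=\Tr[A\,\mathcal T^X(B)]$. This holds because $V_g^\dagger=V_{g^{-1}}$ and the sum over $G$ is invariant under $g\mapsto g^{-1}$. A covariant Choi operator $M_k$ satisfies $\mathcal T^{X^{(k)}}(M_k)=M_k$. Therefore
\begin{align*}
\Tr\bigl[\widetilde W(M_1\otimes\cdots\otimes M_n)\bigr]=\Tr\bigl[W(M_1\otimes\cdots\otimes M_n)\bigr]
\end{align*}
for all covariant $M_k$. In particular, $\widetilde W$ reproduces $p$ exactly.

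The second step is to verify that $\widetilde W$ satisfies Proposition~\ref{prop:W-charac}. Positivity holds because $\widetilde W$ is a convex combination of unitary conjugates $(V_{g_1}\otimes\cdots\otimes V_{g_n})W(\cdots)^\dagger$ of the positive operator $W$, which is positive by Theorem~\ref{thm:tw-process-matrices}. For normalization, take arbitrary CPTP Choi matrices $M_k$. By duality,
\begin{align*}
\Tr\bigl[\widetilde W\textstyle\bigotimes_k M_k\bigr]=\Tr\bigl[W\textstyle\bigotimes_k\mathcal T^{X^{(k)}}(M_k)\bigr].
\end{align*}
Each $\mathcal T^{X^{(k)}}(M_k)$ is positive. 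It is trace preserving because $\Tr_{X_2}[V_gMV_g^\dagger]=U_g^{X_1}(\Tr_{X_2}M)U_g^{X_1\dagger}=\id$. It is covariant by the group-averaging (left-invariance) argument. Hence the normalization in Theorem~\ref{thm:tw-process-matrices} gives the value $1$. So $\widetilde W$ is a valid QT process matrix of the same local dimensions $d$. Since covariant instruments are in particular ordinary quantum instruments, $p\in\mathcal C_{\QT}^{\proc}(d)$.

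The main obstacle I expect is bookkeeping rather than conceptual. It consists of checking that the Choi convention with transpose and $\overline U_g$ on the output really makes $V_gMV_g^\dagger$ the Choi operator of $\calU_g\circ\mathcal M\circ\calU_{g^{-1}}$, so that fixed points of $\mathcal T^X$ are exactly the covariant maps. This is already recorded in Appendix~\ref{app:notation}, and I would invoke it there. I would also note that this operator-level construction ($P_S(W)=P_S(\widetilde W)$) is precisely what Theorem~\ref{thm:sym-ocb-forbidden-invisible} uses.
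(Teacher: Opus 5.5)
Your proposal is correct and follows the paper's proof essentially step for step. The paper also averages $W$ over independent local group elements $g_1,\dots,g_n$ and gets positivity from $W\ge 0$. It proves normalization by moving the twirl onto arbitrary CPTP $T_k$ via the change of variables $g\mapsto g^{-1}$, so that the twirled $N_k$ are covariant CPTP and Eq.~\eqref{eq:tw-norm} gives the value $1$. It then uses $\mathfrak U_g^{(k)}(M_k)=M_k$ for covariant Choi operators to conclude that all probabilities are unchanged, exactly as you do.
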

	\begin{proof}
		For each party $X^{(k)}$, let
		\begin{align*}
            \mathfrak U_g^{(k)}(M):=(U_g^{X^{(k)}_1}\otimes \overline{U}_g^{X^{(k)}_2}) M (U_g^{X^{(k)}_1}\otimes \overline{U}_g^{X^{(k)}_2})^\dagger
		\end{align*}
		be the induced action on local Choi operators. Given an $n$-party twirled process matrix $W$, define
		\begin{align*}
			\widetilde W
			:=
			\frac{1}{|G|^n}\sum_{g_1,\dots,g_n\in G}
			(\mathfrak U_{g_1}^{(1)}\otimes\cdots\otimes\mathfrak U_{g_n}^{(n)})(W).
		\end{align*}
		By Theorem~\ref{thm:tw-process-matrices}, $W\ge0$. Since each $\mathfrak U_g^{(k)}$ acts by unitary conjugation, $\widetilde W\ge 0$.
		
		Let $T_k$ be arbitrary CPTP Choi operators on the local spaces of party $X^{(k)}$, and define
		\begin{align*}
			N_k:= \frac{1}{|G|}\sum_{g\in G}\mathfrak U_g^{(k)}(T_k).
		\end{align*}
		Then each $N_k$ is a $(G,U)$-covariant CPTP Choi operator. Indeed, positivity is immediate, while
		\begin{align*}
	\Tr_{X^{(k)}_2}N_k&=\frac{1}{|G|}\sum_{g\in G}U_g^{X^{(k)}_1}\bigl(\Tr_{X^{(k)}_2}T_k\bigr)(U_g^{X^{(k)}_1})^\dagger =\frac{1}{|G|}\sum_{g\in G}U_g^{X^{(k)}_1}\,\id_{X^{(k)}_1}\,(U_g^{X^{(k)}_1})^\dagger=\id_{X^{(k)}_1}.
		\end{align*}
		Using the change of variables \(g\mapsto g^{-1}\) in the normalized finite-group sums,
		\begin{align*}
			\Tr\!\bigl[
			\widetilde W(T_1\otimes\cdots\otimes T_n)
			\bigr]
			&=
			\Tr\!\bigl[
			W(N_1\otimes\cdots\otimes N_n)
			\bigr] \\
			&=1.
		\end{align*}
		Hence $\widetilde W$ is an ordinary QT process matrix on the same local spaces.
		
		Finally, let $M_k$ be local $(G,U)$-covariant Choi operators. Since $\mathfrak U_g^{(k)}(M_k)=M_k$ for all $g\in G$,
		\begin{align*}
			\Tr\!\bigl[
			\widetilde W(M_1\otimes\cdots\otimes M_n)
			\bigr]
			=
			\Tr\!\bigl[
			W(M_1\otimes\cdots\otimes M_n)
			\bigr].
		\end{align*}
		Thus $\widetilde W$ reproduces all symmetric product probabilities of $W$. Replacing $W$ by $\widetilde W$ and keeping the same local instruments therefore gives the same correlation in QT. Hence
		\begin{align*}
			\mathcal C_{(G,U)}^{\proc}(d)
			\subseteq
			\mathcal C_{\mathrm{QT}}^{\proc}(d).
		\end{align*}
	\end{proof}
	
	We now prove the converse inclusion. Let $G$ be finite, and let $\mathsf{Tw}_G$ be the finite-dimensional $G$-twirled theory introduced in the main text. By definition, every finite-dimensional Hilbert space carrying a unitary representation of $G$ is a system of $\mathsf{Tw}_G$.
	
	Fix a local input-output pair $X=(X_1,X_2)$. Let
	\begin{align*}
		R_{X_1}\simeq R_{X_2}\simeq \mathbb C^{|G|}
	\end{align*}
	with orthonormal basis $\{\ket g\}_{g\in G}$, and let $G$ act on this basis by
	\begin{align*}
		L_h\ket g=\ket{hg}.
	\end{align*}
	Equip $X_1$ and $X_2$ with the trivial representation and define
	\begin{align*}
		\widehat X_1:=X_1\otimes R_{X_1},
		\qquad
		\widehat X_2:=X_2\otimes R_{X_2},
	\end{align*}
	with local action
	\begin{align*}
		\widehat U_h^{X_1}:=\id_{X_1}\otimes L_h,
		\qquad
		\widehat U_h^{X_2}:=\id_{X_2}\otimes L_h.
	\end{align*}
	These enlarged systems again belong to $\mathsf{Tw}_G$.
	
	Let $\widehat{\mathfrak U}_h^X$ denote the induced action on $L^{\widehat X_1\widehat X_2}$. After the canonical reordering of tensor factors, define
	\begin{align*}
		\Gamma_X(M)
		:=
		\sum_{g\in G}
		\ketbra{g}{g}_{R_{X_1}}
		\otimes
		\ketbra{g}{g}_{R_{X_2}}
		\otimes
		M
	\end{align*}
	and
	\begin{align*}
		\mathcal D_X(\widehat M)
		:=
		\frac{1}{|G|}
		\sum_{g\in G}
		\Tr_{R_{X_2}}
		\Big[
		(\bra g_{R_{X_1}}\otimes \id)\,
		\widehat M\,
		(\ket g_{R_{X_1}}\otimes \id)
		\Big].
	\end{align*}
	These are the Choi-operator versions of the local simulation maps of Ref.~\cite{ying2025a} that we call encoding and decoding map respectively.
	
	\begin{lemma}
		\label{lem:gamma-delta-properties}
		For every local input-output pair $X=(X_1,X_2)$, the maps $\Gamma_X$ and $\mathcal D_X$ are completely positive. Moreover:
		\begin{enumerate}[label=(\roman*)]
			\item if $M$ is a CPTP Choi operator on $X_1X_2$, then $\Gamma_X(M)$ is a $(G,\widehat U)$-covariant CPTP Choi operator on $\widehat X_1\widehat X_2$;
			\item if $\widehat T$ is a CPTP Choi operator on $\widehat X_1\widehat X_2$, then $\mathcal D_X(\widehat T)$ is a CPTP Choi operator on $X_1X_2$;
			\item one has
			\begin{align*}
				\mathcal D_X\circ \Gamma_X=\id;
			\end{align*}
			\item the Hilbert--Schmidt adjoint $\mathcal D_X^\dagger$ is completely positive and has globally $(G,\widehat U)$-invariant image.
		\end{enumerate}
	\end{lemma}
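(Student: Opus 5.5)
The plan is to verify each claim by direct computation. The approach rests on two facts: the reference registers carry the left-regular action $L_h$, which is a real permutation matrix in the basis $\{\ket g\}$, so $\overline{L}_h=L_h$; and the systems $X_1,X_2$ carry the trivial representation. Consequently, the induced action on the Choi space $L^{\widehat X_1\widehat X_2}$ is $\widehat V_h^X=\id_{X_1}\otimes L_h^{R_{X_1}}\otimes\id_{X_2}\otimes \overline{L}_h^{R_{X_2}}=\id_{X_1X_2}\otimes L_h\otimes L_h$, up to reordering of factors. I first treat complete positivity, then (i)--(iv) in order.

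\emph{Complete positivity.} $\Gamma_X$ is a sum over $g$ of the maps $M\mapsto P_g\otimes M$, where $P_g:=\ketbra{g}{g}_{R_{X_1}}\otimes\ketbra{g}{g}_{R_{X_2}}\ge 0$. Tensoring with a fixed positive operator is completely positive, so $\Gamma_X$ is completely positive. $\mathcal D_X$ is a positive combination of compressions $\widehat M\mapsto(\bra g\otimes\id)\widehat M(\ket g\otimes\id)$ followed by a partial trace, and each of these is completely positive.

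\emph{(i)} Positivity of $\Gamma_X(M)$ follows from the previous step. For trace preservation I compute
\begin{align*}
\Tr_{X_2R_{X_2}}\Gamma_X(M)=\sum_g\ketbra{g}{g}_{R_{X_1}}\otimes\Tr_{X_2}M=\id_{R_{X_1}}\otimes\id_{X_1}.
\end{align*}
For covariance, conjugation by $\widehat V_h^X$ sends $P_g\otimes M$ to $P_{hg}\otimes M$. Summing over $g$ and reindexing gives back $\Gamma_X(M)$.

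\emph{(ii)} Positivity of $\mathcal D_X(\widehat T)$ again follows from complete positivity. Since $\widehat T$ is CPTP, we have $\Tr_{X_2R_{X_2}}\widehat T=\id_{X_1R_{X_1}}$. The partial trace over $X_2$ commutes with compression on $R_{X_1}$, so
\begin{align*}
\Tr_{X_2}\mathcal D_X(\widehat T)=\frac1{|G|}\sum_g\bra g\,\id_{X_1R_{X_1}}\ket g=\id_{X_1}.
\end{align*}

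\emph{(iii)} Compressing $\Gamma_X(M)$ with $\bra g_{R_{X_1}}$ keeps only the term $\ketbra{g}{g}_{R_{X_2}}\otimes M$. Tracing out $R_{X_2}$ then yields $M$, and averaging over $g$ gives $\mathcal D_X(\Gamma_X(M))=M$.

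\emph{(iv)} I will read off the adjoint from the identity $\Tr[\mathcal D_X(\widehat M)Y]=\Tr[\widehat M\,\mathcal D_X^\dagger(Y)]$, which gives
\begin{align*}
\mathcal D_X^\dagger(Y)=\frac1{|G|}\sum_g\ketbra{g}{g}_{R_{X_1}}\otimes\id_{R_{X_2}}\otimes Y.
\end{align*}
This map is completely positive because $Y\mapsto\id_{R_{X_1}R_{X_2}}\otimes Y$ and multiplication by the positive operator $\frac1{|G|}\sum_g\ketbra gg\otimes\id$ commute and are both completely positive; equivalently, the expression is just $Y\mapsto \tfrac{1}{|G|}\id_{R_{X_1}R_{X_2}}\otimes Y$ written out. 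Its image is invariant: conjugation by $\widehat V_h^X$ acts as $L_h\cdot L_h^\dagger$ on $\id_{R_{X_1}}$ and on $\id_{R_{X_2}}$, both of which are fixed, and acts trivially on $X_1X_2$.

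The only step needing care is keeping track of the complex-conjugate factor $\overline{L}_h$ in the induced action, and the realness of permutation matrices removes it. I do not expect any real obstacle.
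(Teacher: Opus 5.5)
Your proof is correct. For complete positivity and parts (i)--(iii) it is the same direct computation the paper gives.

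The one genuine divergence is (iv). The paper never writes down $\mathcal D_X^\dagger$. It shows $\mathcal D_X\circ\widehat{\mathfrak U}_h^X=\mathcal D_X$ by relabelling $g\mapsto h^{-1}g$, then takes adjoints to get $\widehat{\mathfrak U}_h^X\circ\mathcal D_X^\dagger=\mathcal D_X^\dagger$. Complete positivity of $\mathcal D_X^\dagger$ comes from the general fact that the Hilbert--Schmidt adjoint of a CP map is CP.

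You instead compute the adjoint explicitly. In effect you observe that $\sum_g\bra g\cdot\ket g$ on $R_{X_1}$ is just the partial trace, so $\mathcal D_X=\frac{1}{|G|}\Tr_{R_{X_1}R_{X_2}}$ and $\mathcal D_X^\dagger(Y)=\frac{1}{|G|}\id_{R_{X_1}R_{X_2}}\otimes Y$. Both claims of (iv) are then immediate, and (ii) and (iii) also become one-line checks.

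Comparing the two routes:
\begin{itemize}
\item Your route is more transparent. For instance, it shows that the simulating process in Theorem~\ref{thm:simulation} is, up to reordering, simply $|G|^{-n}\,\id_R\otimes W$.
\item You are also more careful than the paper about the $\overline{L}_h$ factor in the induced action on the output register, which the paper uses tacitly.
\item The paper's route uses only the group-invariance and complete positivity of $\mathcal D_X$. It would therefore carry over unchanged to any decoding map with those properties, without needing its explicit form.
\end{itemize}

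One small slip: your first justification of complete positivity in (iv), via ``multiplication by the positive operator'', is not valid in general. One-sided multiplication by $P\ge 0$ is not even a positive map. It only works here because that operator equals $\frac{1}{|G|}\id$, which is exactly your ``equivalently'' remark, so lead with that.
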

	
	\begin{proof}
		Complete positivity of $\Gamma_X$ is immediate, since it is a sum of maps of the form
		\begin{align*}
			M\mapsto
			\ketbra{g}{g}_{R_{X_1}}
			\otimes
			\ketbra{g}{g}_{R_{X_2}}
			\otimes
			M.
		\end{align*}
		Complete positivity of $\mathcal D_X$ is also immediate, since it is a sum of compression maps followed by a partial trace.
		
		If $M$ is CPTP, then $\Gamma_X(M)\ge 0$ and
		\begin{align*}
			\Tr_{\widehat X_2}\Gamma_X(M)
			&=
			\sum_{g\in G}
			\ketbra{g}{g}_{R_{X_1}}
			\otimes
			\Tr_{X_2}M =
			\sum_{g\in G}
			\ketbra{g}{g}_{R_{X_1}}
			\otimes
			\id_{X_1}
			=
			\id_{\widehat X_1},
		\end{align*}
		so $\Gamma_X(M)$ is CPTP. Moreover,
		\begin{align*}
			\widehat{\mathfrak U}_h^X(\Gamma_X(M))
			&=
			\sum_{g\in G}
			\ketbra{hg}{hg}_{R_{X_1}}
			\otimes
			\ketbra{hg}{hg}_{R_{X_2}}
			\otimes
			M =
			\Gamma_X(M),
		\end{align*}
		so $\Gamma_X(M)$ is $(G,\widehat U)$-covariant. This proves (i).
		
		If $\widehat T$ is CPTP, then $\mathcal D_X(\widehat T)\ge 0$ and
		\begin{align*}
			\Tr_{X_2}\mathcal D_X(\widehat T)
			&=
			\frac{1}{|G|}
			\sum_{g\in G}
			\bra g_{R_{X_1}}
			\Tr_{\widehat X_2}\widehat T
			\ket g_{R_{X_1}} =
			\frac{1}{|G|}
			\sum_{g\in G}
			\bra g_{R_{X_1}}
			\id_{X_1R_{X_1}}
			\ket g_{R_{X_1}}
			=
			\id_{X_1},
		\end{align*}
		so $\mathcal D_X(\widehat T)$ is CPTP. This proves (ii).
		
		For every $M\in L^{X_1X_2}$,
		\begin{align*}
			(\bra g_{R_{X_1}}\otimes \id)\,
			\Gamma_X(M)\,
			(\ket g_{R_{X_1}}\otimes \id)
			=
			\ketbra{g}{g}_{R_{X_2}}\otimes M,
		\end{align*}
		and therefore
		\begin{align*}
			\mathcal D_X(\Gamma_X(M))
			&=
			\frac{1}{|G|}
			\sum_{g\in G}
			\Tr_{R_{X_2}}
			\bigl[
			\ketbra{g}{g}_{R_{X_2}}\otimes M
			\bigr]
			=
			M.
		\end{align*}
		This proves (iii).
		
		For every $h\in G$ and every $\widehat M\in L^{\widehat X_1\widehat X_2}$,
		\begin{align*}
			\mathcal D_X\!\bigl(\widehat{\mathfrak U}_h^X(\widehat M)\bigr)
			&=
			\frac{1}{|G|}
			\sum_{g\in G}
			\Tr_{R_{X_2}}
			\Big[
			(\bra g\otimes \id)\,
			\widehat{\mathfrak U}_h^X(\widehat M)\,
			(\ket g\otimes \id)
			\Big] =
			\frac{1}{|G|}
			\sum_{g\in G}
			\Tr_{R_{X_2}}
			\Big[
			(\bra{h^{-1}g}\otimes \id)\,
			\widehat M\,
			(\ket{h^{-1}g}\otimes \id)
			\Big] =
			\mathcal D_X(\widehat M),
		\end{align*}
		where we relabel $g\mapsto h^{-1}g$ in the last step. Hence
		\begin{align*}
			\mathcal D_X\circ \widehat{\mathfrak U}_h^X=\mathcal D_X
		\end{align*}
		for all $h\in G$. Passing to Hilbert--Schmidt adjoints gives
		\begin{align*}
			\widehat{\mathfrak U}_h^X\circ \mathcal D_X^\dagger=\mathcal D_X^\dagger,
		\end{align*}
		so the image of $\mathcal D_X^\dagger$ is globally invariant. Since $\mathcal D_X$ is completely positive on a finite-dimensional matrix algebra, so is $\mathcal D_X^\dagger$. This proves (iv).
	\end{proof}

		\begin{theorem}[Finite-dimensional simulation theorem]
		\label{thm:simulation}
		Let $G$ be a finite group. Then every finite-dimensional QT process-matrix correlation can be realized in the finite-dimensional $G$-twirled theory $\mathsf{Tw}_G$ after adjoining suitable finite-dimensional ancillas to the local systems. Equivalently,
		\begin{align*}
			\mathcal C_{\mathrm{QT}}^{\proc,\fin}
			\subseteq
			\mathcal C_{\mathsf{Tw}_G}^{\proc,\fin}.
		\end{align*}
	\end{theorem}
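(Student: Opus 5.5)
Given a QT process experiment, namely a process matrix $W$ on local spaces $X^{(k)}_1X^{(k)}_2$ together with local instruments $\{M^{(k)}_{a_k|x_k}\}$, I will build a twirled experiment on the enlarged systems $\widehat X^{(k)}_1\widehat X^{(k)}_2$ of Lemma~\ref{lem:gamma-delta-properties}. The twirled instruments are $\widehat M^{(k)}_{a_k|x_k}:=\Gamma_{X^{(k)}}(M^{(k)}_{a_k|x_k})$. The twirled process is the pullback
\begin{align*}
\widehat W:=\bigl(\mathcal D_{X^{(1)}}^\dagger\otimes\cdots\otimes\mathcal D_{X^{(n)}}^\dagger\bigr)(W),
\end{align*}
which is the Hilbert--Schmidt adjoint of the product of decoding maps applied to $W$. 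Correlations are then reproduced directly. Since $\mathcal D\circ\Gamma=\id$ by Lemma~\ref{lem:gamma-delta-properties}(iii),
\begin{align*}
\Tr\bigl[\widehat W\textstyle\bigotimes_k\Gamma_{X^{(k)}}(M^{(k)}_{a_k|x_k})\bigr]=\Tr\bigl[W\textstyle\bigotimes_k M^{(k)}_{a_k|x_k}\bigr].
\end{align*}
Each family $\{\Gamma(M_{a|x})\}_a$ is a valid covariant instrument: its elements are positive because $\Gamma$ is CP, they are covariant by part (i) (the covariance computation there uses only the form of $\Gamma$, not trace preservation), and their sum $\Gamma(M_x)$ is a covariant CPTP Choi operator, again by part (i).

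It remains to check that $\widehat W$ is a twirled process matrix in the sense of Theorem~\ref{thm:tw-process-matrices}. This requires three properties, which I will establish in order.
\begin{enumerate}[label=(\alph*)]
\item \emph{Invariance.} By part (iv), each $\mathcal D^\dagger_{X^{(k)}}$ has invariant image under the local action $\widehat{\mathfrak U}^{X^{(k)}}_h$. Hence $\widehat W$ is invariant even under independent local group actions, and in particular under the diagonal action $V_g$.
\item \emph{Positivity.} $\widehat W\ge0$ follows because a tensor product of CP maps is CP, and $W\ge0$ by Proposition~\ref{prop:W-charac}.
\item \emph{Normalization.} For arbitrary covariant CPTP Choi operators $\widehat T_k$ on $\widehat X^{(k)}_1\widehat X^{(k)}_2$,
\begin{align*}
\Tr\bigl[\widehat W(\widehat T_1\otimes\cdots\otimes\widehat T_n)\bigr]=\Tr\bigl[W(\mathcal D(\widehat T_1)\otimes\cdots\otimes\mathcal D(\widehat T_n))\bigr]=1.
\end{align*}
This holds because each $\mathcal D(\widehat T_k)$ is a CPTP Choi operator by part (ii), and $W$ satisfies Eq.~\eqref{eq:W-norm}.
\end{enumerate}

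To conclude, I will note that the enlarged systems are finite-dimensional and carry representations of $G$, so they are systems of $\mathsf{Tw}_G$. The realized correlation therefore lies in $\calC_{\mathsf{Tw}_G}^{\proc,\fin}$.

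The main point to verify carefully is the bookkeeping of the adjoint $\mathcal D^\dagger$ under the paper's Choi convention and tensor reorderings, in particular that the adjoint of a tensor product of the local maps factorizes as used above. Part (ii) is what makes the normalization step go through, and that step is where the argument would break if $\mathcal D$ only preserved CPTP-ness on covariant inputs. Since (ii) holds for all CPTP inputs, I expect no real obstacle there.
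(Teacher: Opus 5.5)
Your proposal is correct and follows the paper's proof essentially step for step. It uses the same encoding $\Gamma$ and the same pulled-back process $\widehat W=(\mathcal D_1^\dagger\otimes\cdots\otimes\mathcal D_n^\dagger)(W)$; invariance comes from Lemma~\ref{lem:gamma-delta-properties}(iv), positivity from complete positivity, normalization from part (ii), and reproduction of the probabilities from part (iii). Your explicit check that each individual element $\Gamma(M_{a|x})$ is covariant, and not only their sum, is a detail the paper leaves implicit but which is indeed needed for the encoded families to be valid twirled instruments.
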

	\begin{proof}
		Fix a finite-dimensional QT process-matrix realization
		\begin{align*}
			p(a_1,\dots,a_n|x_1,\dots,x_n)
			=
			\Tr\!\Bigl[
			W\bigl(
			M^{(1)}_{a_1|x_1}\otimes\cdots\otimes M^{(n)}_{a_n|x_n}
			\bigr)
			\Bigr],
		\end{align*}
		where $W$ is an $n$-party QT process matrix and, for each party $k$ and setting $x_k$, the family $\{M^{(k)}_{a_k|x_k}\}_{a_k}$ is a local quantum instrument.
		
		For each party $k$, regard the original spaces as systems $X^{(k)}_1$ and $X^{(k)}_2$ of $\mathsf{Tw}_G$ carrying the trivial representation, and add ancillas
		\begin{align*}
			R^{(k)}_1\simeq R^{(k)}_2\simeq \mathbb C^{|G|}
		\end{align*}
		with basis $\{\ket g\}_{g\in G}$ and action
		\begin{align*}
			L_h\ket g=\ket{hg}.
		\end{align*}
		Define
		\begin{align*}
			\widehat X^{(k)}_1:=X^{(k)}_1\otimes R^{(k)}_1,
			\qquad
			\widehat X^{(k)}_2:=X^{(k)}_2\otimes R^{(k)}_2.
		\end{align*}
		These enlarged systems again belong to $\mathsf{Tw}_G$.
		
		Let $\Gamma_k$ and $\mathcal D_k$ be the corresponding encoding and decoding maps from Lemma~\ref{lem:gamma-delta-properties}, and define
		\begin{align*}
			\widehat W
			:=
			(\mathcal D_1^\dagger\otimes\cdots\otimes \mathcal D_n^\dagger)(W).
		\end{align*}
		Then $\widehat W\ge 0$, since each $\mathcal D_k^\dagger$ is completely positive, and $\widehat W$ is globally invariant by Lemma~\ref{lem:gamma-delta-properties}(iv).
		
		Let $\widehat T_1,\dots,\widehat T_n$ be arbitrary local $(G,\widehat U)$-covariant CPTP Choi operators. By Lemma~\ref{lem:gamma-delta-properties}(ii), each $\mathcal D_k(\widehat T_k)$ is an ordinary CPTP Choi operator, so
		\begin{align*}
			\Tr\!\bigl[
			\widehat W(\widehat T_1\otimes\cdots\otimes \widehat T_n)
			\bigr]
			&=
			\Tr\!\Bigl[
			W\bigl(
			\mathcal D_1(\widehat T_1)\otimes\cdots\otimes \mathcal D_n(\widehat T_n)
			\bigr)
			\Bigr] \\
			&=1.
		\end{align*}
		By Theorem~\ref{thm:tw-process-matrices}, $\widehat W$ is therefore a valid twirled process matrix in $\mathsf{Tw}_G$.
		
		For each party $k$ and setting $x_k$, Lemma~\ref{lem:gamma-delta-properties}(i) shows that the operators $\Gamma_k(M^{(k)}_{a_k|x_k})$ are completely positive and that their sum over $a_k$ is a $(G,\widehat U)$-covariant CPTP Choi operator. Thus the encoded families define valid local instruments in $\mathsf{Tw}_G$.
		
		Finally, using adjointness and Lemma~\ref{lem:gamma-delta-properties}(iii),
		\begin{align*}
			\Tr\!\Bigl[
			\widehat W\bigl(
			\Gamma_1(M^{(1)}_{a_1|x_1})\otimes\cdots\otimes \Gamma_n(M^{(n)}_{a_n|x_n})
			\bigr)
			\Bigr] &=
			\Tr\!\Bigl[
			W\bigl(
			\mathcal D_1(\Gamma_1(M^{(1)}_{a_1|x_1}))\otimes\cdots\otimes
			\mathcal D_n(\Gamma_n(M^{(n)}_{a_n|x_n}))
			\bigr)
			\Bigr]\\ & =
			\Tr\!\Bigl[
			W\bigl(
			M^{(1)}_{a_1|x_1}\otimes\cdots\otimes M^{(n)}_{a_n|x_n}
			\bigr)
			\Bigr] =
			p(a_1,\dots,a_n|x_1,\dots,x_n).
		\end{align*}
		Hence the original QT correlation is reproduced in the same finite-dimensional $G$-twirled theory, proving
		\begin{align*}
			\mathcal C_{\mathrm{QT}}^{\proc,\fin}
			\subseteq
			\mathcal C_{\mathsf{Tw}_G}^{\proc,\fin}.
		\end{align*}
	\end{proof}
	
		\begin{theorem}[Finite-dimensional operational equivalence]
		\label{thm:findimequi}
		Let $G$ be a finite group. Then
		\begin{align*}
			\mathcal C_{\mathsf{Tw}_G}^{\proc,\fin}
			=
			\mathcal C_{\mathrm{QT}}^{\proc,\fin}.
		\end{align*}
	\end{theorem}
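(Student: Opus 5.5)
The plan is to obtain the equality from the two inclusions already proved, handling only the bookkeeping that links fixed-representation sets to the full finite-dimensional theory $\mathsf{Tw}_G$.

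\emph{Forward inclusion.} By definition, $\mathcal C_{\mathsf{Tw}_G}^{\proc,\fin}$ is the union, over all finite local dimension tuples $d$ and all finite-dimensional unitary representations $U$ of $G$ on the local systems, of $\mathcal C_{(G,U)}^{\proc}(d)$. Take any correlation $p$ in this union. It arises from some twirled world $(G,U)$ at some finite $d$. Theorem~\ref{thm:twirl-subset} then places $p$ in $\mathcal C_{\mathrm{QT}}^{\proc}(d)$. That set is contained in $\mathcal C_{\mathrm{QT}}^{\proc,\fin}$ by Eq.~\eqref{eq:qt-rqt-finite-corr-sets}. Taking the union over $(U,d)$ gives
\begin{align*}
\mathcal C_{\mathsf{Tw}_G}^{\proc,\fin}\subseteq\mathcal C_{\mathrm{QT}}^{\proc,\fin}.
\end{align*}

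\emph{Reverse inclusion.} Take $p\in\mathcal C_{\mathrm{QT}}^{\proc,\fin}$, realized by a QT process matrix $W$ and local instruments at some finite dimension $d$. Theorem~\ref{thm:simulation} produces the enlarged systems $\widehat X^{(k)}_i=X^{(k)}_i\otimes R^{(k)}_i$, which carry the finite-dimensional representation $\id\otimes L$. It also produces the twirled process $\widehat W=(\bigotimes_k\mathcal D_k^\dagger)(W)$ and the covariant instruments $\Gamma_k(M^{(k)}_{a_k|x_k})$, which together reproduce $p$. The local dimensions are $d_k|G|<\infty$, so $p\in\mathcal C_{(G,\id\otimes L)}^{\proc}(d|G|)\subseteq\mathcal C_{\mathsf{Tw}_G}^{\proc,\fin}$.

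\emph{Remaining check.} Combining the two inclusions gives Eq.~\eqref{eq:twirled-qt-equivalence}. The only point needing care is that the correlation sets are indexed consistently. A QT correlation at dimension $d$ reappears in $\mathsf{Tw}_G$ at the larger dimension $d|G|$, so equality can only be expected for the unions over all finite dimensions and representations, not at fixed $d$. I will therefore confirm that the simulated systems with the left-regular action are admissible systems of $\mathsf{Tw}_G$, which holds by the definition of $\mathsf{Tw}_G$ given before Theorem~\ref{thm:simulation}. I do not expect a real obstacle here, since both directions are supplied by the earlier theorems.
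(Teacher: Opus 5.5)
Your proposal is correct and matches the paper's proof: the forward inclusion comes from Theorem~\ref{thm:twirl-subset} followed by a union over all finite representation and dimension tuples, and the reverse inclusion is Theorem~\ref{thm:simulation}. Your extra bookkeeping, that the simulated systems $X^{(k)}_i\otimes R^{(k)}_i$ with the left-regular action have dimension $d_k|G|$ and are admissible systems of $\mathsf{Tw}_G$, is exactly what the paper asserts inside the proof of Theorem~\ref{thm:simulation}, so nothing is missing.
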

	\begin{proof}
		Every finite realization in $\mathsf{Tw}_G$ uses some finite-dimensional local representation tuple $(G,U)$ on some finite local dimension tuple $d$. By Theorem~\ref{thm:twirl-subset},
		\begin{align*}
			\mathcal C_{(G,U)}^{\proc}(d)
			\subseteq
			\mathcal C_{\mathrm{QT}}^{\proc}(d).
		\end{align*}
		Taking the union over all finite choices of local systems in $\mathsf{Tw}_G$ gives
		\begin{align*}
			\mathcal C_{\mathsf{Tw}_G}^{\proc,\fin}
			\subseteq
			\mathcal C_{\mathrm{QT}}^{\proc,\fin}.
		\end{align*}
		The reverse inclusion is exactly Theorem~\ref{thm:simulation}. Hence
		\begin{align*}
			\mathcal C_{\mathsf{Tw}_G}^{\proc,\fin}
			=
			\mathcal C_{\mathrm{QT}}^{\proc,\fin}. 
		\end{align*}
	\end{proof}

	\begin{theorem}[Strict inclusion of QT in RQT process correlations]
	\label{thm:QT-strict-RQT-supp}
	For finite-dimensional bipartite process-matrix correlations,
	\begin{align*}
		\mathcal C_{\mathrm{QT}}^{\proc,\fin}
		\subsetneq
		\mathcal C_{\mathrm{RQT}}^{\proc,\fin}.
	\end{align*}
\end{theorem}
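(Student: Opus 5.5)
The proof has two parts: the inclusion $\mathcal C_{\mathrm{QT}}^{\proc,\fin}\subseteq\mathcal C_{\mathrm{RQT}}^{\proc,\fin}$ and a separating point. The first follows by realifying the local systems, the instruments and the process matrix.

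\emph{Inclusion.} Fix a complex bipartite realization $p(a,b|x,y)=\Tr[W(M^{A}_{a|x}\otimes M^{B}_{b|y})]$. Adjoin to each local input and output system a qubit $Q$ that carries the real and imaginary parts, as in the standard realification. Concretely, use the real embedding $\rho\mapsto\rho_{\bR}:=\mathrm{Re}\,\rho\otimes\id_2+\mathrm{Im}\,\rho\otimes J$ with $J=\iu\sigma_y$. This embedding is a real, positivity-preserving $*$-homomorphism up to a factor of $2$ in the trace.

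On Choi operators I would define an encoding $\Gamma^{\bR}_X$ that maps a complex Choi matrix on $X_1X_2$ to a real Choi matrix on $X_1Q_1X_2Q_2$. It should map CPTP to CPTP and preserve positivity. I would also define a decoding $\mathcal D^{\bR}_X$ that maps real CPTP Choi matrices on the enlarged spaces to complex CPTP Choi matrices. The two must satisfy $\Tr[\widehat W(\Gamma^{\bR}_A(M)\otimes\Gamma^{\bR}_B(N))]=\Tr[W(M\otimes N)]$ for $\widehat W:=((\mathcal D^{\bR}_A)^\dagger\otimes(\mathcal D^{\bR}_B)^\dagger)(W)$, or for its real part.

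The argument then runs as in Theorem~\ref{thm:simulation}. Positivity of $\widehat W$ follows from complete positivity of the adjoint decoding. Normalization on all real CPTP local maps follows because their decodings are complex CPTP maps and $W$ obeys Proposition~\ref{prop:W-charac}. Taking $\mathrm{Re}\,\widehat W$ preserves both properties on real tests, so Theorem~\ref{thm:rqt-process-matrices} certifies that $\mathrm{Re}\,\widehat W$ is a valid RQT process matrix reproducing $p$.

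A convenient decoding reads the qubit $Q_1$ as the fixed state associated with multiplication by $\iu$, which is a compression onto an eigenvector of $\sigma_y$, and then traces out $Q_2$. This mirrors the reference-frame register of Lemma~\ref{lem:gamma-delta-properties}, with $\{I,C\}$ in place of $G$.

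\emph{Main obstacle.} The hard step is making $\mathcal D^{\bR}$ send every real CPTP map to a complex CPTP map while still satisfying $\mathcal D^{\bR}\circ\Gamma^{\bR}=\id$. The obvious decoding requires the decoded channel to commute with $J$ on the encoded input, and real instruments need not. If this fails, I would fall back on the \emph{ancilla-sharing} realification of McKague et al. Since $W$ may be shared, the parties can use a joint real ancilla in the state $\tfrac12(\id\otimes\id+J\otimes J^{T})$ that is built into $\widehat W$. Each party then applies its realified instrument jointly to its process system and its half of this ancilla, and decoding is unnecessary. Normalization must still be checked against \emph{arbitrary} real local CPTP maps acting on process system plus ancilla half. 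I would do this check by expressing it as the complex normalization of $W$ on averaged maps, averaging over the ancilla's $J$-sector.

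\emph{Strictness.} Exhibit the explicit real two-qubit-per-party $W$ and real instruments of the End Matter. Verify that $W\ge0$ and $W=\overline W$, and verify normalization on all real CPTP pairs; by Lemma~\ref{lem:rqt-visible-sector-qubits} the latter reduces to $P_S(W)$ lying in the OCB-like subspace with unit identity coefficient. These give $p_\star\in\mathcal C_{\mathrm{RQT}}^{\proc,\fin}$. Compute $I_{\LGY}(p_\star)\approx0.8602$, which exceeds the Liu--Chiribella dimension-independent complex bound $0.8194$, so $p_\star\notin\mathcal C_{\mathrm{QT}}^{\proc,\fin}$. The numerics are routine, but the comparison needs the Choi conventions matched at the level of probabilities.
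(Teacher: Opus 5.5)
Your plan is essentially the paper's proof. The paper realifies Kraus operators via $\rea_X(K)=\mathrm{Re}\,K\otimes\id+\mathrm{Im}\,K\otimes J$ and decodes with exactly your map. Its $\dec$, built from the coisometries $V,\bar V$, compresses the input qubit onto $\ket{\chi}=\tfrac{1}{\sqrt2}(\ket{0}-\iu\ket{1})$, which satisfies $J\ket{\chi}=\iu\ket{\chi}$, and then traces out the output qubit. The paper then sets $W_{\RQT}=\mathrm{Re}(\dec_{AB}^{\dagger}W)$ and certifies strictness with the same LGYNI point.

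The obstacle you flag does not arise, so no fallback is needed. The map $\psi\mapsto\Tr_{Q_2}\psi(\,\cdot\,\otimes\ketbra{\chi}{\chi}_{Q_1})$ is CPTP for every real CPTP $\psi$: it is a fixed ancilla preparation followed by a partial trace, and neither step needs $\psi$ to commute with $J$. The identity $\mathcal D\circ\Gamma=\id$ is only required on encoded maps, where $\rea(K)(\ket{\varphi}\otimes\ket{\chi})=K\ket{\varphi}\otimes\ket{\chi}$ holds automatically.

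In fact, unwinding $\mathrm{Re}(\dec_{AB}^{\dagger}W)$ gives $\tfrac14\bigl(\mathrm{Re}\,W\otimes\id\otimes\id+\mathrm{Im}\,W\otimes J\otimes\id\bigr)\bigl(\id\otimes\id+J\otimes J^{T}\bigr)$ on the two input qubits, with identity on the output qubits. So your McKague-style fallback is the same object as the main route, not a different construction.
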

	\begin{proof}
		Part of this proof is inspired by Refs.~\cite{stueckelberg1960,mckague2009,ying2025a}. We first prove the inclusion. In order to do so we are going to need two main ingredients.
		
		The first ingredient we need is the encoding map $\enc_X$. To define it, we first introduce the realification map $\rea$. To every complex Hilbert space $\cH_X\simeq\bC^d$, with the basis in which complex conjugation is fixed, we associate the realified space
		\begin{align*}
			\cH_X^{\bR}:=\bR^d\otimes U_X\simeq \bR^{2d},
		\end{align*}
		where $U_X\simeq\bR^2$. On $U_X$ we use
		\begin{align*}
			\id_{U_X}:=\begin{pmatrix}1&0\\0&1\end{pmatrix}, \qquad \mathrm J_{U_X}:=\begin{pmatrix}0&-1\\1&0\end{pmatrix}.
		\end{align*}
		For a complex matrix $M$ acting on $\cH_X$ we define
		\begin{align*}
			\rea_X(M)\coloneqq\Re(M)\otimes\id_{U_X}+\Im(M)\otimes\mathrm J_{U_X}.
		\end{align*}
		The same formula is used for rectangular matrices $K:X_i\to X_o$, in which case $\rea_{X_oX_i}(K):X_i^{\bR}\to X_o^{\bR}$. By direct calculation one has
		\begin{align*}
			\rea_X(aM)&=a\rea_X(M),\qquad a\in\bR,\\
			\rea_X(M+N)&=\rea_X(M)+\rea_X(N),\\
			\rea_X(M^\dagger)&=\rea_X(M)^T,\\
			\rea_X(MN)&=\rea_X(M)\rea_X(N).
		\end{align*}
		Thus, if $\phi:L(\cH_{X_i})\to L(\cH_{X_o})$ is a complex CP map with Kraus form
		\begin{align*}
			\phi(\cdot)=\sum_r K_r\,\cdot\,K_r^\dagger,
		\end{align*}
		we define its realification by
		\begin{align*}
			\phi^{\bR}(\cdot)\coloneqq\sum_r \rea_{X_oX_i}(K_r)\,\cdot\,\rea_{X_oX_i}(K_r)^T.
		\end{align*}
		This is manifestly a real CP map. Moreover,
		\begin{align*}
			\sum_r \rea_{X_oX_i}(K_r)^T\rea_{X_oX_i}(K_r)=\rea_{X_i}\!\left(\sum_r K_r^\dagger K_r\right).
		\end{align*}
		Since $\rea$ is positive and unital on Hermitian matrices, trace non-increasing maps and trace-preserving maps are sent respectively to trace non-increasing and trace-preserving real maps.
		
			This gives a Kraus-level realification of CP maps. More precisely, for each complex CP map $\phi$ we choose a Kraus representation
		\begin{align*}
			\phi(\cdot)=\sum_r K_r\,\cdot\,K_r^\dagger
		\end{align*}
		and define one real CP map $\phi^{\bR}$ by realifying those chosen Kraus operators as above. We write
		\begin{align*}
			\enc_{X_i,X_o}(\cC(\phi))\coloneqq\cC(\phi^{\bR})
		\end{align*}
		for the Choi operator of this chosen realification. 
		By the calculation above, this chosen encoding sends CP maps to real CP maps and preserves trace non-increasing and trace-preserving maps. Therefore, choosing Kraus representations for all instrument elements, it sends complex instruments to valid real instruments.
		
		The second ingredient we need is the recomplexification procedure at the level of Kraus operators. To introduce it, we first select the following basis for the real matrices on $U_X$:
		\begin{align*}
			\id_{U_X}, \qquad \mathrm J_{U_X}, \qquad Z_{U_X}:=\begin{pmatrix}1&0\\0&-1\end{pmatrix}, \qquad \mathrm J_{U_X}Z_{U_X}.
		\end{align*}
		With this basis every real matrix $M$ on $\cH_X^{\bR}$ can be written as
		\begin{align*}
			M=M_1\otimes\id_{U_X}+M_2\otimes\mathrm J_{U_X}+M_3\otimes Z_{U_X}+M_4\otimes\mathrm J_{U_X}Z_{U_X}.
		\end{align*}
		Equivalently,
		\begin{align*}
			M=\rea_X(M_1+\iu M_2)+(\id_X\otimes Z_{U_X})\rea_X(M_3-\iu M_4).
		\end{align*}
		It is convenient to introduce the coisometry
		\begin{align*}
			V_X\coloneqq\frac{1}{\sqrt{2}}\left(\id_X\otimes\bra{0}_{U_X}+\iu\,\id_X\otimes\bra{1}_{U_X}\right),
		\end{align*}
		so that
		\begin{align*}
			M_1+\iu M_2&=V_XMV_X^\dagger,\\
			M_3-\iu M_4&=\bar V_XMV_X^\dagger.
		\end{align*}
		Therefore, if $K_r:X_i^{\bR}\to X_o^{\bR}$ is a real Kraus operator, we recomplexify it into the two complex Kraus operators
		\begin{align*}
			L_{r,0}&\coloneqq V_{X_o}K_rV_{X_i}^\dagger,\\
			L_{r,1}&\coloneqq \bar V_{X_o}K_rV_{X_i}^\dagger.
		\end{align*}
		So a real CP map
		\begin{align*}
			\psi(\cdot)=\sum_r K_r\,\cdot\,K_r^T
		\end{align*}
		is sent by $\rec$ to the complex CP map
		\begin{align*}
			\rec_{X_i,X_o}(\psi)(\cdot)\coloneqq\sum_r\left(L_{r,0}\,\cdot\,L_{r,0}^\dagger+L_{r,1}\,\cdot\,L_{r,1}^\dagger\right).
		\end{align*}

        This is CP by construction. Moreover, since
		\begin{align*}
		V_{X_o}^\dagger V_{X_o}+\bar V_{X_o}^\dagger\bar V_{X_o}=\id_{X_o^{\bR}},
		\end{align*}
		we have
		\begin{align*}
			\sum_r\left(L_{r,0}^\dagger L_{r,0}+L_{r,1}^\dagger L_{r,1}\right)
			&=
			V_{X_i}\left(\sum_r K_r^TK_r\right)V_{X_i}^\dagger .
		\end{align*}
		Hence, if $\psi$ is trace non-increasing, then
		\begin{align*}
		0\leq \sum_r K_r^TK_r\leq \id_{X_i^{\bR}},
		\end{align*}
		and therefore
		\begin{align*}
		0\leq 
		V_{X_i}\left(\sum_r K_r^TK_r\right)V_{X_i}^\dagger
		\leq
		V_{X_i}V_{X_i}^\dagger
		=
		\id_{X_i}.
		\end{align*}
		Thus $\rec_{X_i,X_o}(\psi)$ is trace non-increasing. If $\psi$ is trace preserving, then the same computation gives
		\begin{align*}
		\sum_r\left(L_{r,0}^\dagger L_{r,0}+L_{r,1}^\dagger L_{r,1}\right)
		=
		V_{X_i}\id_{X_i^{\bR}}V_{X_i}^\dagger
		=
		\id_{X_i},
		\end{align*}
		so $\rec_{X_i,X_o}(\psi)$ is trace preserving.
		
		At the Choi level, this Kraus-level recomplexification induces the decoding map
		\begin{align*}
			\dec_{X_i,X_o}(N)
			\coloneqq&
			(V_{X_i}\otimes\bar V_{X_o})N(V_{X_i}^\dagger\otimes\bar V_{X_o}^\dagger)\\
			&+
			(V_{X_i}\otimes V_{X_o})N(V_{X_i}^\dagger\otimes V_{X_o}^\dagger).
		\end{align*}
		Indeed, applying the above construction to a real Kraus decomposition of a real CP map gives precisely this expression for the Choi operator of the recomplexified map. Since every real positive Choi operator admits a real spectral decomposition, the preceding trace non-increasing and trace-preserving statements apply to all real CP maps.
		
		Finally, if $K=\rea_{X_oX_i}(\widetilde K)$, then
		\begin{align*}
			V_{X_o}KV_{X_i}^\dagger&=\widetilde K,\\
			\bar V_{X_o}KV_{X_i}^\dagger&=0.
		\end{align*}
		Thus recomplexification undoes the chosen realification at the Kraus level. In Choi form, for the chosen Kraus-level encodings used above, this gives
		\begin{align*}
			\dec_{X_i,X_o}(\enc_{X_i,X_o}(M))=M
		\end{align*}
		for every encoded Choi operator $M$ of a complex CP map.

		With these two ingredients we can finally prove the inclusion part of the theorem. We write it for two local agents, as in the statement. Let $W_{\QT}$ be a finite-dimensional bipartite QT process matrix, and let $\{M^A_{a|x,\QT}\}_a$ and $\{M^B_{b|y,\QT}\}_b$ be local QT instruments. We write
		\begin{align*}
			&M^{\QT}_{ab|xy}\coloneqq M^A_{a|x,\QT}\otimes M^B_{b|y,\QT}, \\
			& p(a,b|x,y)=\TR{W_{\QT}M^{\QT}_{ab|xy}}.
		\end{align*}
		Define
		\begin{align*}
			\dec_{AB}\coloneqq\dec_{A_i,A_o}\otimes\dec_{B_i,B_o}.
		\end{align*}
		For each instrument element, choose Kraus representations and set
		\begin{align*}
			M^{A,\RQT}_{a|x}&\coloneqq \enc_{A_i,A_o}(M^{A,\QT}_{a|x}),\\
			M^{B,\RQT}_{b|y}&\coloneqq \enc_{B_i,B_o}(M^{B,\QT}_{b|y}),\\
			M^{\RQT}_{ab|xy}&\coloneqq M^{A,\RQT}_{a|x}\otimes M^{B,\RQT}_{b|y}.
		\end{align*}
		By the properties of $\enc$, these are valid real CP maps, and for every choice of settings their sums are real CPTP maps. Hence they form valid RQT instruments.
		
		Now define
		\begin{align*}
			\widehat W\coloneqq\dec_{AB}^{\dagger}(W_{\QT}), \qquad W_{\RQT}\coloneqq\Re(\widehat W),
		\end{align*}
		where the adjoint is taken with respect to the Hilbert--Schmidt inner product. Since $\dec_{AB}$ is CP, also $\dec_{AB}^{\dagger}$ is CP. Hence $\widehat W\geq0$, because $W_{\QT}\geq0$. Moreover $W_{\RQT}$ is real and symmetric. For every real vector $v$,
		\begin{align*}
			v^TW_{\RQT}v=v^\dagger\widehat Wv\geq0.
		\end{align*}
		Thus $W_{\RQT}$ is positive semidefinite as a real symmetric matrix, and therefore also as a complex Hermitian operator.
		
		We now check that $W_{\RQT}$ is normalized as an RQT process matrix. Let $N^A$ and $N^B$ be arbitrary real CPTP Choi operators, and write
		\begin{align*}
			N^{AB}\coloneqq N^A\otimes N^B.
		\end{align*}
		Since $\dec_{A_i,A_o}(N^A)$ and $\dec_{B_i,B_o}(N^B)$ are complex CPTP Choi operators, we get
		\begin{align*}
			\TR{W_{\RQT}N^{AB}}&=\TR{\widehat WN^{AB}}=\TR{W_{\QT}\dec_{AB}(N^{AB})}=1.
		\end{align*}
		In the first equality we used that $N^{AB}$ is real symmetric, while the imaginary part of the Hermitian operator $\widehat W$ is real antisymmetric and therefore has zero trace against $N^{AB}$. Thus $W_{\RQT}$ is a valid bipartite RQT process matrix.
		
		Finally, the probabilities are preserved. For the chosen Kraus-level encodings above, recomplexification undoes realification on each instrument element. Using adjointness and $\dec_{X_i,X_o}(\enc_{X_i,X_o}(M))=M$ for those chosen encodings, we have
		\begin{align*}
			\TR{W_{\RQT}M^{\RQT}_{ab|xy}}&=\TR{\widehat WM^{\RQT}_{ab|xy}}=\TR{W_{\QT}\dec_{AB}(M^{\RQT}_{ab|xy})}=\TR{W_{\QT}M^{\QT}_{ab|xy}}=p(a,b|x,y).
		\end{align*}
		This proves that every finite-dimensional QT process correlation can be reproduced by a finite-dimensional RQT process experiment. Therefore
		\begin{align*}
			\mathcal C_{\mathrm{QT}}^{\proc,\fin}\subseteq\mathcal C_{\mathrm{RQT}}^{\proc,\fin}.
		\end{align*}
		
		\noindent\emph{Strictness.}
        In the End Matter, we provide a probability distribution realizable in RQT and make the LGYNI value strictly larger than QT.
        Below, to explain the see-saw algorithm we used to obtain an achievable value of $I_{\LGY}$ in RQT, we introduce notations.
        However, note that the following see-saw search is merely one method for finding the strict separation between QT and RQT. 
        Afterward, we will perform a final verification of the obtained solution.
        
        Let $I_{\LGY}(W,\bM^A,\bM^B)$ denote the LGYNI value calculated for the probability distribution realized by  \begin{equation}\label{eq:realization}
            p(a,b|x,y)=\Tr\left[W\left(M^{A_1A_2}_{a|x}\otimes M^{B_1B_2}_{b|y}\right)\right].
        \end{equation}
        Let $\mathscr M^A_{\mathrm{RQT}}$ denote the set of all pairs of real binary-outcome instruments for Alice's two settings.
    In other words, $\mathscr M^A_{\mathrm{RQT}}$ contains any set of unnormalized Choi operators $\bM^A=\{M^{A_1A_2}_{a|x}\}_{a,x}$ satisfying
    \begin{align}
    M^{A_1A_2}_{a|x} = \overline{M^{A_1A_2}_{a|x}}&=\left(M^{A_1A_2}_{a|x}\right)^T\quad(\forall a,\forall x),\\
    M^{A_1A_2}_{a|x}&\ge 0\quad(\forall a,\forall x),\\
    \Tr_{A_2}\left[\sum_a M^{A_1A_2}_{a|x}\right]&=\id^{A_1}\quad(\forall x).
    \end{align}
    The definition of $\mathscr M^B_{\mathrm{RQT}}$ is analogous.
    
    First, we define $\mathsf{OptW}(\bM^A,\bM^B)$ as 
    \begin{equation}\label{eq:optW}
        \mathsf{OptW}(\bM^A,\bM^B):=\underset{W\in\mathcal W_{\mathrm{RQT}}}{\arg\max}\,I_{\LGY}(W,\bM^A,\bM^B),
    \end{equation}
    that is, maximize the LGYNI value over the set $\mathcal W_{\RQT}$, that is, all RQT process matrices satisfying conditions 
    \begin{align}
        W=\overline{W}&=W^T,\\
        W&\ge 0,
    \end{align}
    and the normalization condition
    \begin{equation}\label{eq:W-RQT-normalization}
        \Tr[(M^{A_1A_2}\otimes M^{B_1B_2})W]=1
    \end{equation}
    for all unnormalized Choi operators of Alice and Bob's CPTP maps $M^{A_1A_2},M^{B_1B_2}\in\mathbb R^{4\times 4}$ satisfying
    \begin{align}
        M^{A_1A_2}\ge 0,\quad (M^{A_1A_2})^T=M^{A_1A_2},\quad\Tr_{A_2}M^{A_1A_2}&=\id^{A_1},\\
        M^{B_1B_2}\ge 0,\quad (M^{B_1B_2})^T=M^{B_1B_2},\quad\Tr_{B_2}M^{B_1B_2}&=\id^{B_1}.
    \end{align}
    This type of matrices can be decomposed into
    \begin{align}
        M^{A_1A_2}&=\frac{\id^{A_1}\otimes \id^{A_2}}{2}+F^{A_1A_2},\\
        M^{B_1B_2}&=\frac{\id^{B_1}\otimes \id^{B_2}}{2}+F^{B_1B_2},\\
    \end{align}
    such that $\Tr_{A_2}F^{A_1A_2}=\Tr_{B_2}F^{B_1B_2}=0$.
    Thus, the normalization condition Eq.~\eqref{eq:W-RQT-normalization} means that
    \begin{equation}
        \Tr\left[W\left(\frac{\id^{A_1}\otimes \id^{A_2}\otimes \id^{B_1}\otimes \id^{B_2}}{4}+\frac{\id^{A_1}\otimes \id^{A_2}\otimes F^{B_1B_2}}{2}+\frac{F^{A_1A_2}\otimes \id^{B_1}\otimes \id^{B_2}}{2}+F^{A_1A_2}\otimes F^{B_1B_2}\right)\right]=1
    \end{equation}
    must hold for all $F^{A_1A_2}$ and $F^{B_1B_2}$ such that $\Tr_{A_2}F^{A_1A_2}=\Tr_{B_2}F^{B_1B_2}=0$.
    Therefore, we check
    \begin{align}
        \Tr\left[W\left(\frac{\id^{A_1}\otimes \id^{A_2}\otimes \id^{B_1}\otimes \id^{B_2}}{4}\right)\right]&=1,\label{eq:normalization1}\\
        \Tr\left[W\left(\frac{\id^{A_1}\otimes \id^{A_2}\otimes F^{B_1B_2}}{2}\right)\right]&=0,\label{eq:normalization2}\\
        \Tr\left[W\left(\frac{F^{A_1A_2}\otimes \id^{B_1}\otimes \id^{B_2}}{2}\right)\right]&=0,\label{eq:normalization3}\\
        \Tr\left[W\left(F^{A_1A_2}\otimes F^{B_1B_2}\right)\right]&=0.\label{eq:normalization4}\\
    \end{align}
    for all $F^{A_1A_2}$ and $F^{B_1B_2}$ such that $\Tr_{A_2}F^{A_1A_2}=\Tr_{B_2}F^{B_1B_2}=0$.
    The condition Eq.~\eqref{eq:normalization1} implies that
    \begin{equation}
        \Tr W=4.
    \end{equation}
    Notice that real symmetric matrices are characterised by the following 10 basis vectors written by Pauli basis:
    \begin{align}
        &\{\id^{A_1}\otimes \id^{A_2},\;\id^{A_1}\otimes X^{A_2},\;\id^{A_1}\otimes Z^{A_2},\\
        &X^{A_1}\otimes \id^{A_2},\;X^{A_1}\otimes X^{A_2},\;X^{A_1}\otimes Z^{A_2},\\
        &Z^{A_1}\otimes \id^{A_2},\;Z^{A_1}\otimes X^{A_2},\;Z^{A_1}\otimes Z^{A_2},\\
        &Y^{A_1}\otimes Y^{A_2}\}.
    \end{align}
    However, $\Tr_{A_2}[\id^{A_1}\otimes \id^{A_2}],\;\Tr_{A_2}[X^{A_1}\otimes \id^{A_2}]$, and $\Tr_{A_2}[Z^{A_1}\otimes \id^{A_2}]$ are not zero, the only basis that generate $F^{A_1A_2}$ are the following 7 basis
    \begin{align}
        \{G^{A_1A_2}_i\}_{i=0}^{6}:=\{\id^{A_1}\otimes X^{A_2},\id^{A_1}\otimes Z^{A_2},X^{A_1}\otimes X^{A_2},\;X^{A_1}\otimes Z^{A_2},Z^{A_1}\otimes X^{A_2},\;Z^{A_1}\otimes Z^{A_2},Y^{A_1}\otimes Y^{A_2}\}.
    \end{align}
    Similarly, $F^{B_1B_2}$ is generated by 
    \begin{align}
        \{G^{B_1B_2}_j\}_{j=0}^{6}:=\{\id^{B_1}\otimes X^{B_2},\id^{B_1}\otimes Z^{B_2},X^{B_1}\otimes X^{B_2},\;X^{B_1}\otimes Z^{B_2},Z^{B_1}\otimes X^{B_2},\;Z^{B_1}\otimes Z^{B_2},Y^{B_1}\otimes Y^{B_2}\}.
    \end{align}
    Then, the conditions Eqs.~\eqref{eq:normalization2}, \eqref{eq:normalization3}, and \eqref{eq:normalization4} are
    \begin{align}
        r^B_j&:=\Tr[W(\id^{A_1}\otimes \id^{A_2}\otimes G^{B_1B_2}_j)]=0,\label{eq:residual_B}\\
        r^A_i&:=\Tr[W(G^{A_1A_2}_i\otimes \id^{B_1}\otimes \id^{B_2})]=0,\label{eq:residual_A}\\
        r^{AB}_{ij}&:=\Tr[W(G^{A_1A_2}_i\otimes G^{B_1B_2}_j)]=0\label{eq:residual_AB}
    \end{align}
    for all $i,j=0,\dots,6$.
    In this way, the normalization condition Eq.~\eqref{eq:W-RQT-normalization} for all Choi operators are rephrased to finite number constraints.

    Next, we define     $\mathsf{OptB}(W,\bM^A)$ as 
    \begin{equation}
        \mathsf{OptB}(W,\bM^A):=\underset{\bM^B\in\mathscr{M}^B_\RQT}{\arg\max}\,I_{\LGY}(W,\bM^A,\bM^B),
    \end{equation}
    which means maximizing the LGYNI value over all instruments in Bob's system when the process matrix $W$ and instrument $\bM^A$.
    Similarly, we define
    \begin{equation}
        \mathsf{OptA}(W,\bM^B):=\underset{\bM^A\in\mathscr{M}^A_\RQT}{\arg\max}\,I_{\LGY}(W,\bM^A,\bM^B).
    \end{equation}

    \begin{algorithm}[t]
        \caption{RQT see-saw search for LGYNI}
        \label{alg:rqt-lgyni-seesaw}
        \small
        \SetKwInOut{Input}{Input}
        \SetKwInOut{Output}{Output}
        \SetKwInOut{Parameter}{Parameter}
        
        \Input{RQT process matrix set $\mathcal W_{\RQT}$, and RQT instrument sets
        $\mathscr{M}^{A}_{\RQT}$ and $\mathscr{M}^{B}_{\RQT}$.}
        \Parameter{Random seed $1$, number of random starts $R=10$, sweeps $T_{\max}=20$, and inner alternating updates $K =5$.}
        \Output{$W^\star,\bM^{A,\star},\bM^{B,\star}$, and $I_{\LGY}^\star$.}
        
        \BlankLine
        Set \(I_{\LGY}^\star\leftarrow -\infty\).\;
        
        \For{\(r=1,\ldots,R\)}{
          Randomly initialize
          \(\bM^A\in\mathscr{M}^{A}_{\RQT}\) and
          \(\bM^B\in\mathscr{M}^{B}_{\RQT}\).\;
        
          Set
          \(W\leftarrow \mathsf{OptW}(\bM^A,\bM^B)\).\;
        
          Set
          \(I_{\LGY,\mathrm{best}}\leftarrow
          I_{\LGY}(W,\bM^A,\bM^B)\).\;
        
          Store $(W_{\mathrm{best}},\bM^A_{\mathrm{best}},\bM^B_{\mathrm{best}})
          \leftarrow
          (W,\bM^A,\bM^B)$.
        
          \For{\(t=1,\ldots,T_{\max}\)}{
            Set
            \(W\leftarrow \mathsf{OptW}(\bM^A,\bM^B)\).\;
        
            Set $\bM^A_{(0)}\leftarrow \bM^A,
            \quad
            \bM^B_{(0)}\leftarrow \bM^B$.
        
            \For{\(k=1,\ldots,K \)}{Set $\bM^B_{(k)}
              \leftarrow
              \mathsf{OptB}(W,\bM^A_{(k-1)})$.
              
              Set $\bM^A_{(k)}
              \leftarrow
              \mathsf{OptA}(W,\bM^B_{(k)})$.
            }
        
            Set $
            \bM^A\leftarrow \bM^A_{(K)},
            \quad
            \bM^B\leftarrow \bM^B_{(K)}$.
        
            Set
            \(W\leftarrow \mathsf{OptW}(\bM^A,\bM^B)\).\;
        
            Set
            \(I_{\LGY,\mathrm{new}}\leftarrow
            I_{\LGY}(W,\bM^A,\bM^B)\).\;
        
            \If{\(I_{\LGY,\mathrm{new}}>I_{\LGY,\mathrm{best}}\)}{
              Set \(I_{\LGY,\mathrm{best}}\leftarrow I_{\LGY,\mathrm{new}}\).\;
        
              Store $(W_{\mathrm{best}},\bM^A_{\mathrm{best}},\bM^B_{\mathrm{best}})\leftarrow(W,\bM^A,\bM^B)$.
            }
          }
        
          \If{$I_{\LGY,\mathrm{best}}>I_{\LGY}^\star$}{
            Set $I_{\LGY}^\star\leftarrow I_{\LGY,\mathrm{best}}$.\;
        
            Store $(W^\star,\bM^{A,\star},\bM^{B,\star})\leftarrow(W_{\mathrm{best}},\bM^A_{\mathrm{best}},\bM^B_{\mathrm{best}}).$
          }
        }
        
        \Return{\(W^\star,\bM^{A,\star},\bM^{B,\star},I_{\LGY}^\star\).}
    \end{algorithm}

    The process matrix and Choi operators obtained by running Algorithm~\ref{alg:rqt-lgyni-seesaw} are as follows.
    The process matrix is a $16\times 16$ matrix, so we show each $4\times 4$ block matrices
    \begin{equation}
        W^\star=
        \left(
        \begin{array}{cccc}
             W_{00}& W_{01} & W_{02} & W_{03} \\
             W_{10} & W_{11} & W_{12} & W_{13}\\
             W_{20}& W_{21} & W_{22} & W_{23} \\
             W_{30} & W_{31} & W_{32} & W_{33}
        \end{array}
        \right)
    \end{equation}
    Since $W$ is a symmetric matrix, $W_{ij}$ satisfies $W_{ji}=W_{ij}^T\quad (i<j)$.
    \begin{equation}
W_{11} =
\begin{pmatrix}
+1.754945120284\mathrm{e}{-01} & -7.340374654669\mathrm{e}{-03} & -8.596035581910\mathrm{e}{-02} & -1.339021238097\mathrm{e}{-01} \\
-7.340374654669\mathrm{e}{-03} & +1.750401343644\mathrm{e}{-01} & +5.975678640260\mathrm{e}{-02} & +9.294287726734\mathrm{e}{-02} \\
-8.596035581910\mathrm{e}{-02} & +5.975678640260\mathrm{e}{-02} & +4.966129674258\mathrm{e}{-01} & -1.190289682946\mathrm{e}{-01} \\
-1.339021238097\mathrm{e}{-01} & +9.294287726734\mathrm{e}{-02} & -1.190289682946\mathrm{e}{-01} & +3.970902156024\mathrm{e}{-01}
\end{pmatrix}.
\end{equation}

\begin{equation}
W_{12} =
\begin{pmatrix}
+1.229895586617\mathrm{e}{-01} & +4.788160920844\mathrm{e}{-06} & -2.937274161005\mathrm{e}{-03} & +3.325427933972\mathrm{e}{-04} \\
-4.788160917132\mathrm{e}{-06} & +1.229895586617\mathrm{e}{-01} & -1.561555938042\mathrm{e}{-04} & -3.388947510354\mathrm{e}{-03} \\
-3.385643033220\mathrm{e}{-03} & +1.561555938052\mathrm{e}{-04} & -1.229895586617\mathrm{e}{-01} & -6.616000573543\mathrm{e}{-06} \\
-3.325427933995\mathrm{e}{-04} & -2.933969683868\mathrm{e}{-03} & +6.616000576325\mathrm{e}{-06} & -1.229895586617\mathrm{e}{-01}
\end{pmatrix}.
\end{equation}

\begin{equation}
W_{13} =
\begin{pmatrix}
-9.251231135202\mathrm{e}{-02} & -8.753244624525\mathrm{e}{-04} & -3.037176353285\mathrm{e}{-02} & -4.812439868641\mathrm{e}{-02} \\
+9.677932360356\mathrm{e}{-04} & -9.250356090272\mathrm{e}{-02} & +2.224044705556\mathrm{e}{-02} & +3.510418465651\mathrm{e}{-02} \\
+3.626312032255\mathrm{e}{-02} & -2.301458436213\mathrm{e}{-02} & +9.137702761046\mathrm{e}{-02} & +1.694153139800\mathrm{e}{-03} \\
+4.986069991421\mathrm{e}{-02} & -3.152229816475\mathrm{e}{-02} & -1.492299295437\mathrm{e}{-04} & +9.202241061765\mathrm{e}{-02}
\end{pmatrix}.
\end{equation}

\begin{equation}
W_{14} =
\begin{pmatrix}
+6.773232720228\mathrm{e}{-02} & +6.778323880397\mathrm{e}{-04} & +2.280579870164\mathrm{e}{-02} & +3.608647206678\mathrm{e}{-02} \\
-6.778323880394\mathrm{e}{-04} & +6.773232720228\mathrm{e}{-02} & -1.666708760611\mathrm{e}{-02} & -2.628567706480\mathrm{e}{-02} \\
-2.628849143892\mathrm{e}{-02} & +1.666708760611\mathrm{e}{-02} & -6.773232720228\mathrm{e}{-02} & -6.764595992234\mathrm{e}{-04} \\
-3.608647206678\mathrm{e}{-02} & +2.280298432752\mathrm{e}{-02} & +6.764595992222\mathrm{e}{-04} & -6.773232720228\mathrm{e}{-02}
\end{pmatrix}.
\end{equation}

\begin{equation}
W_{21} =
\begin{pmatrix}
+1.229895586617\mathrm{e}{-01} & -4.788160917132\mathrm{e}{-06} & -3.385643033220\mathrm{e}{-03} & -3.325427933995\mathrm{e}{-04} \\
+4.788160920844\mathrm{e}{-06} & +1.229895586617\mathrm{e}{-01} & +1.561555938052\mathrm{e}{-04} & -2.933969683868\mathrm{e}{-03} \\
-2.937274161005\mathrm{e}{-03} & -1.561555938042\mathrm{e}{-04} & -1.229895586617\mathrm{e}{-01} & +6.616000576325\mathrm{e}{-06} \\
+3.325427933972\mathrm{e}{-04} & -3.388947510354\mathrm{e}{-03} & -6.616000573543\mathrm{e}{-06} & -1.229895586617\mathrm{e}{-01}
\end{pmatrix}.
\end{equation}

\begin{equation}
W_{22} =
\begin{pmatrix}
+2.531195799694\mathrm{e}{-01} & -7.340374654667\mathrm{e}{-03} & -8.795661797361\mathrm{e}{-02} & -1.339021238097\mathrm{e}{-01} \\
-7.340374654667\mathrm{e}{-03} & +2.526652023054\mathrm{e}{-01} & +5.975678640260\mathrm{e}{-02} & +9.094661511283\mathrm{e}{-02} \\
-8.795661797361\mathrm{e}{-02} & +5.975678640260\mathrm{e}{-02} & +4.189878994848\mathrm{e}{-01} & -1.190289682946\mathrm{e}{-01} \\
-1.339021238097\mathrm{e}{-01} & +9.094661511283\mathrm{e}{-02} & -1.190289682946\mathrm{e}{-01} & +3.194651476613\mathrm{e}{-01}
\end{pmatrix}.
\end{equation}

\begin{equation}
W_{23} =
\begin{pmatrix}
-1.258892827459\mathrm{e}{-01} & -1.248764958000\mathrm{e}{-03} & -4.199444252501\mathrm{e}{-02} & -6.650362785699\mathrm{e}{-02} \\
+1.248764958000\mathrm{e}{-03} & -1.258892827459\mathrm{e}{-01} & +3.072006114289\mathrm{e}{-02} & +4.846960646256\mathrm{e}{-02} \\
+4.846531664087\mathrm{e}{-02} & -3.072006114289\mathrm{e}{-02} & +1.258892827459\mathrm{e}{-01} & +1.249463486431\mathrm{e}{-03} \\
+6.650362785699\mathrm{e}{-02} & -4.199873234670\mathrm{e}{-02} & -1.249463486429\mathrm{e}{-03} & +1.258892827459\mathrm{e}{-01}
\end{pmatrix}.
\end{equation}

\begin{equation}
W_{24} =
\begin{pmatrix}
+9.215965603743\mathrm{e}{-02} & +9.670315454435\mathrm{e}{-04} & +3.151713992115\mathrm{e}{-02} & +4.985437886987\mathrm{e}{-02} \\
-8.745627718597\mathrm{e}{-04} & +9.216840648673\mathrm{e}{-02} & -2.301616551616\mathrm{e}{-02} & -3.627654693639\mathrm{e}{-02} \\
-3.512136084047\mathrm{e}{-02} & +2.224202820959\mathrm{e}{-02} & -9.329493977899\mathrm{e}{-02} & -1.482351761722\mathrm{e}{-04} \\
-4.811807764207\mathrm{e}{-02} & +3.036285571913\mathrm{e}{-02} & +1.693158386425\mathrm{e}{-03} & -9.264955677180\mathrm{e}{-02}
\end{pmatrix}.
\end{equation}

\begin{equation}
W_{31} =
\begin{pmatrix}
-9.251231135202\mathrm{e}{-02} & +9.677932360356\mathrm{e}{-04} & +3.626312032255\mathrm{e}{-02} & +4.986069991421\mathrm{e}{-02} \\
-8.753244624525\mathrm{e}{-04} & -9.250356090272\mathrm{e}{-02} & -2.301458436213\mathrm{e}{-02} & -3.152229816475\mathrm{e}{-02} \\
-3.037176353285\mathrm{e}{-02} & +2.224044705556\mathrm{e}{-02} & +9.137702761046\mathrm{e}{-02} & -1.492299295437\mathrm{e}{-04} \\
-4.812439868641\mathrm{e}{-02} & +3.510418465651\mathrm{e}{-02} & +1.694153139800\mathrm{e}{-03} & +9.202241061765\mathrm{e}{-02}
\end{pmatrix}.
\end{equation}

\begin{equation}
W_{32} =
\begin{pmatrix}
-1.258892827459\mathrm{e}{-01} & +1.248764958000\mathrm{e}{-03} & +4.846531664087\mathrm{e}{-02} & +6.650362785699\mathrm{e}{-02} \\
-1.248764958000\mathrm{e}{-03} & -1.258892827459\mathrm{e}{-01} & -3.072006114289\mathrm{e}{-02} & -4.199873234670\mathrm{e}{-02} \\
-4.199444252501\mathrm{e}{-02} & +3.072006114289\mathrm{e}{-02} & +1.258892827459\mathrm{e}{-01} & -1.249463486429\mathrm{e}{-03} \\
-6.650362785699\mathrm{e}{-02} & +4.846960646256\mathrm{e}{-02} & +1.249463486431\mathrm{e}{-03} & +1.258892827459\mathrm{e}{-01}
\end{pmatrix}.
\end{equation}

\begin{equation}
W_{33} =
\begin{pmatrix}
+1.621158448030\mathrm{e}{-01} & +7.340374654663\mathrm{e}{-03} & +9.013077421103\mathrm{e}{-02} & +1.339021238097\mathrm{e}{-01} \\
+7.340374654663\mathrm{e}{-03} & +1.625702224670\mathrm{e}{-01} & -5.975678640260\mathrm{e}{-02} & -8.877245887541\mathrm{e}{-02} \\
+9.013077421103\mathrm{e}{-02} & -5.975678640260\mathrm{e}{-02} & +1.657766757426\mathrm{e}{-01} & +1.190289682946\mathrm{e}{-01} \\
+1.339021238097\mathrm{e}{-01} & -8.877245887541\mathrm{e}{-02} & +1.190289682946\mathrm{e}{-01} & +2.652994275661\mathrm{e}{-01}
\end{pmatrix}.
\end{equation}

\begin{equation}
W_{34} =
\begin{pmatrix}
+6.689678968724\mathrm{e}{-03} & -4.163021640648\mathrm{e}{-06} & -4.030428170465\mathrm{e}{-04} & -3.357822690794\mathrm{e}{-04} \\
+4.163021642481\mathrm{e}{-06} & +6.689678968724\mathrm{e}{-03} & +1.536075913540\mathrm{e}{-04} & +4.560998468449\mathrm{e}{-05} \\
+4.889943233682\mathrm{e}{-05} & -1.536075913542\mathrm{e}{-04} & -6.689678968729\mathrm{e}{-03} & +6.803985468029\mathrm{e}{-06} \\
+3.357822690800\mathrm{e}{-04} & -3.997533693948\mathrm{e}{-04} & -6.803985464308\mathrm{e}{-06} & -6.689678968728\mathrm{e}{-03}
\end{pmatrix}.
\end{equation}

\begin{equation}
W_{41} =
\begin{pmatrix}
+6.773232720228\mathrm{e}{-02} & -6.778323880394\mathrm{e}{-04} & -2.628849143892\mathrm{e}{-02} & -3.608647206678\mathrm{e}{-02} \\
+6.778323880397\mathrm{e}{-04} & +6.773232720228\mathrm{e}{-02} & +1.666708760611\mathrm{e}{-02} & +2.280298432752\mathrm{e}{-02} \\
+2.280579870164\mathrm{e}{-02} & -1.666708760611\mathrm{e}{-02} & -6.773232720228\mathrm{e}{-02} & +6.764595992222\mathrm{e}{-04} \\
+3.608647206678\mathrm{e}{-02} & -2.628567706480\mathrm{e}{-02} & -6.764595992234\mathrm{e}{-04} & -6.773232720228\mathrm{e}{-02}
\end{pmatrix}.
\end{equation}

\begin{equation}
W_{42} =
\begin{pmatrix}
+9.215965603743\mathrm{e}{-02} & -8.745627718597\mathrm{e}{-04} & -3.512136084047\mathrm{e}{-02} & -4.811807764207\mathrm{e}{-02} \\
+9.670315454435\mathrm{e}{-04} & +9.216840648673\mathrm{e}{-02} & +2.224202820959\mathrm{e}{-02} & +3.036285571913\mathrm{e}{-02} \\
+3.151713992115\mathrm{e}{-02} & -2.301616551616\mathrm{e}{-02} & -9.329493977899\mathrm{e}{-02} & +1.693158386425\mathrm{e}{-03} \\
+4.985437886987\mathrm{e}{-02} & -3.627654693639\mathrm{e}{-02} & -1.482351761722\mathrm{e}{-04} & -9.264955677180\mathrm{e}{-02}
\end{pmatrix}.
\end{equation}

\begin{equation}
W_{43} =
\begin{pmatrix}
+6.689678968724\mathrm{e}{-03} & +4.163021642481\mathrm{e}{-06} & +4.889943233682\mathrm{e}{-05} & +3.357822690800\mathrm{e}{-04} \\
-4.163021640648\mathrm{e}{-06} & +6.689678968724\mathrm{e}{-03} & -1.536075913542\mathrm{e}{-04} & -3.997533693948\mathrm{e}{-04} \\
-4.030428170465\mathrm{e}{-04} & +1.536075913540\mathrm{e}{-04} & -6.689678968729\mathrm{e}{-03} & -6.803985464308\mathrm{e}{-06} \\
-3.357822690794\mathrm{e}{-04} & +4.560998468449\mathrm{e}{-05} & +6.803985468029\mathrm{e}{-06} & -6.689678968728\mathrm{e}{-03}
\end{pmatrix}.
\end{equation}

\begin{equation}
W_{44} =
\begin{pmatrix}
+1.649487144742\mathrm{e}{-01} & +7.340374654664\mathrm{e}{-03} & +9.005757699871\mathrm{e}{-02} & +1.339021238097\mathrm{e}{-01} \\
+7.340374654664\mathrm{e}{-03} & +1.654030921382\mathrm{e}{-01} & -5.975678640260\mathrm{e}{-02} & -8.884565608773\mathrm{e}{-02} \\
+9.005757699871\mathrm{e}{-02} & -5.975678640260\mathrm{e}{-02} & +1.629438060715\mathrm{e}{-01} & +1.190289682946\mathrm{e}{-01} \\
+1.339021238097\mathrm{e}{-01} & -8.884565608773\mathrm{e}{-02} & +1.190289682946\mathrm{e}{-01} & +2.624665578949\mathrm{e}{-01}
\end{pmatrix}.
\end{equation}

Now, we confirm that the obtained matrix $W^\star$ is a valid RQT process matrix.
Eigenvalues of the matrix $W^\star$ are as follows:
\begin{equation}
\{\lambda_k(W^\star)\}_{k=0}^{15}
=
\begin{aligned}[t]
(&
-1.643013259010\times 10^{-13}
 -7.490859234793\times 10^{-18} i,\\
&
-1.643013259010\times 10^{-13}
 +7.490859234793\times 10^{-18} i,\\
&
 2.451052110290\times 10^{-13},\,
 1.132603845210\times 10^{-12},\\
&
 3.794928772220\times 10^{-2},\,
 3.801383735231\times 10^{-2},\\
&
 7.814820297188\times 10^{-2},\,
 7.830699106535\times 10^{-2},\\
&
 1.774499050719\times 10^{-1},\,
 1.998970582305\times 10^{-1},\\
&
 4.619860995176\times 10^{-1},\,
 4.620507349149\times 10^{-1},\\
&
 4.999995979645\times 10^{-1},\,
 5.000004384173\times 10^{-1},\\
&
 6.660949049145\times 10^{-1},\,
 8.001029418553\times 10^{-1}
),
\end{aligned}
\end{equation}
where $i$ is the imaginary unit.
From the full expression of $W^\star$ given above, one finds that it is symmetric.
Thus, one finds that $W^\star$ is real and positive semidefinite with numerical error.

Next, we show Choi operators.
\begin{equation}
M^{A_1A_2}_{0|0} =
\begin{pmatrix}
+1.743455605170\mathrm{e}{-01} & +2.382780015555\mathrm{e}{-01} & -2.382780014899\mathrm{e}{-01} & +1.743455604302\mathrm{e}{-01} \\
+2.382780015555\mathrm{e}{-01} & +3.256544409647\mathrm{e}{-01} & -3.256544408352\mathrm{e}{-01} & +2.382780014899\mathrm{e}{-01} \\
-2.382780014899\mathrm{e}{-01} & -3.256544408352\mathrm{e}{-01} & +3.256544408358\mathrm{e}{-01} & -2.382780011393\mathrm{e}{-01} \\
+1.743455604302\mathrm{e}{-01} & +2.382780014899\mathrm{e}{-01} & -2.382780011393\mathrm{e}{-01} & +1.743455606462\mathrm{e}{-01}
\end{pmatrix}.
\end{equation}

\begin{equation}
M^{A_1A_2}_{1|0} =
\begin{pmatrix}
+1.743455594635\mathrm{e}{-01} & +2.382780001157\mathrm{e}{-01} & -2.382780000501\mathrm{e}{-01} & +1.743455593768\mathrm{e}{-01} \\
+2.382780001157\mathrm{e}{-01} & +3.256544389969\mathrm{e}{-01} & -3.256544388676\mathrm{e}{-01} & +2.382780000502\mathrm{e}{-01} \\
-2.382780000501\mathrm{e}{-01} & -3.256544388676\mathrm{e}{-01} & +3.256544388682\mathrm{e}{-01} & -2.382779996997\mathrm{e}{-01} \\
+1.743455593768\mathrm{e}{-01} & +2.382780000502\mathrm{e}{-01} & -2.382779996997\mathrm{e}{-01} & +1.743455595928\mathrm{e}{-01}
\end{pmatrix}.
\end{equation}

\begin{equation}
M^{A_1A_2}_{0|1} =
\begin{pmatrix}
+5.263673401007\mathrm{e}{-06} & -1.556727692761\mathrm{e}{-12} & +1.622270058133\mathrm{e}{-03} & -4.775607232277\mathrm{e}{-10} \\
-1.556727692761\mathrm{e}{-12} & +5.263672455923\mathrm{e}{-06} & -4.809796714743\mathrm{e}{-10} & +1.622269763597\mathrm{e}{-03} \\
+1.622270058133\mathrm{e}{-03} & -4.809796714743\mathrm{e}{-10} & +4.999947820747\mathrm{e}{-01} & -1.475528756081\mathrm{e}{-07} \\
-4.775607232277\mathrm{e}{-10} & +1.622269763597\mathrm{e}{-03} & -1.475528756081\mathrm{e}{-07} & +4.999946902934\mathrm{e}{-01}
\end{pmatrix}.
\end{equation}

\begin{equation}
M^{A_1A_2}_{1|1} =
\begin{pmatrix}
+6.481173262699\mathrm{e}{-01} & -4.775504513208\mathrm{e}{-01} & -2.105638157369\mathrm{e}{-03} & +1.545683140821\mathrm{e}{-03} \\
-4.775504513208\mathrm{e}{-01} & +3.518721463134\mathrm{e}{-01} & +1.551491841110\mathrm{e}{-03} & -1.138901664210\mathrm{e}{-03} \\
-2.105638157369\mathrm{e}{-03} & +1.551491841110\mathrm{e}{-03} & +6.841122191368\mathrm{e}{-06} & -5.021782049215\mathrm{e}{-06} \\
+1.545683140821\mathrm{e}{-03} & -1.138901664210\mathrm{e}{-03} & -5.021782049215\mathrm{e}{-06} & +3.686436135420\mathrm{e}{-06}
\end{pmatrix}.
\end{equation}

\begin{equation}
M^{B_1B_2}_{0|0} =
\begin{pmatrix}
+3.479047825674\mathrm{e}{-01} & -2.300320548007\mathrm{e}{-01} & +2.300320551413\mathrm{e}{-01} & +3.479047825878\mathrm{e}{-01} \\
-2.300320548007\mathrm{e}{-01} & +1.520954845253\mathrm{e}{-01} & -1.520954842846\mathrm{e}{-01} & -2.300320551414\mathrm{e}{-01} \\
+2.300320551413\mathrm{e}{-01} & -1.520954842846\mathrm{e}{-01} & +1.520954843624\mathrm{e}{-01} & +2.300320552024\mathrm{e}{-01} \\
+3.479047825878\mathrm{e}{-01} & -2.300320551414\mathrm{e}{-01} & +2.300320552024\mathrm{e}{-01} & +3.479047827225\mathrm{e}{-01}
\end{pmatrix}.
\end{equation}

\begin{equation}
M^{B_1B_2}_{1|0} =
\begin{pmatrix}
+3.479044108367\mathrm{e}{-01} & -2.300318090150\mathrm{e}{-01} & +2.300318093592\mathrm{e}{-01} & +3.479044108622\mathrm{e}{-01} \\
-2.300318090150\mathrm{e}{-01} & +1.520953220135\mathrm{e}{-01} & -1.520953217751\mathrm{e}{-01} & -2.300318093591\mathrm{e}{-01} \\
+2.300318093592\mathrm{e}{-01} & -1.520953217751\mathrm{e}{-01} & +1.520953218553\mathrm{e}{-01} & +2.300318094236\mathrm{e}{-01} \\
+3.479044108622\mathrm{e}{-01} & -2.300318093591\mathrm{e}{-01} & +2.300318094236\mathrm{e}{-01} & +3.479044110018\mathrm{e}{-01}
\end{pmatrix}.
\end{equation}

\begin{equation}
M^{B_1B_2}_{0|1} =
\begin{pmatrix}
+4.999174963751\mathrm{e}{-01} & -1.436302204857\mathrm{e}{-07} & -6.424405711519\mathrm{e}{-03} & +1.848233748633\mathrm{e}{-09} \\
-1.436302204857\mathrm{e}{-07} & +4.999173827334\mathrm{e}{-01} & +1.844813528150\mathrm{e}{-09} & -6.424404254207\mathrm{e}{-03} \\
-6.424405711519\mathrm{e}{-03} & +1.844813528150\mathrm{e}{-09} & +8.255969827816\mathrm{e}{-05} & -2.373907793215\mathrm{e}{-11} \\
+1.848233748633\mathrm{e}{-09} & -6.424404254207\mathrm{e}{-03} & -2.373907793215\mathrm{e}{-11} & +8.255967959004\mathrm{e}{-05}
\end{pmatrix}.
\end{equation}

\begin{equation}
M^{B_1B_2}_{1|1} =
\begin{pmatrix}
+1.125045753634\mathrm{e}{-04} & -7.693861980467\mathrm{e}{-05} & +8.771050644870\mathrm{e}{-03} & -5.962756110631\mathrm{e}{-03} \\
-7.693861980467\mathrm{e}{-05} & +5.261624256497\mathrm{e}{-05} & -5.998272019752\mathrm{e}{-03} & +4.077759319580\mathrm{e}{-03} \\
+8.771050644870\mathrm{e}{-03} & -5.998272019752\mathrm{e}{-03} & +6.838075290901\mathrm{e}{-01} & -4.648675953726\mathrm{e}{-01} \\
-5.962756110631\mathrm{e}{-03} & +4.077759319580\mathrm{e}{-03} & -4.648675953726\mathrm{e}{-01} & +3.160273514610\mathrm{e}{-01}
\end{pmatrix}.
\end{equation}

\begin{equation}
\lambda(M^{A_1A_2}_{0|0})
=
\left(
-4.068687645590\mathrm{e}{-11},
+3.910147083051\mathrm{e}{-11},
+4.749553155489\mathrm{e}{-10},
+1.000000002490\mathrm{e}{+00}
\right).
\end{equation}
Eigenvalues are as follows.
\begin{equation}
\lambda(M^{A_1A_2}_{1|0})
=
\left(
-4.068684900988\mathrm{e}{-11},
+3.910148288475\mathrm{e}{-11},
+4.749552121986\mathrm{e}{-10},
+9.999999964480\mathrm{e}{-01}
\right).
\end{equation}

\begin{equation}
\begin{split}
    \lambda(M^{A_1A_2}_{0|1})&=
    \left(
    +9.818705982117\mathrm{e}{-11}
    -1.248874415321\mathrm{e}{-19}i,
    +9.818705982117\mathrm{e}{-11}
    +1.248874415321\mathrm{e}{-19}i,\right.\\
    &\qquad\left.+4.999998452327\mathrm{e}{-01},
    +5.000001542849\mathrm{e}{-01}\right).
\end{split}
\end{equation}

\begin{equation}
\lambda(M^{A_1A_2}_{1|1})
=
\left(
+7.564617105936\mathrm{e}{-11},
+1.011410911642\mathrm{e}{-10},
+1.299110542561\mathrm{e}{-09},
+9.999999986657\mathrm{e}{-01}
\right).
\end{equation}

\begin{equation}
\lambda(M^{B_1B_2}_{0|0})
=
\left(
-4.069623133300\mathrm{e}{-11},
+3.910345746025\mathrm{e}{-11},
+4.750157206892\mathrm{e}{-10},
+1.000000533704\mathrm{e}{+00}
\right).
\end{equation}

\begin{equation}
\lambda(M^{B_1B_2}_{1|0})
=
\left(
-4.069626108065\mathrm{e}{-11},
+3.910348134046\mathrm{e}{-11},
+4.750156357491\mathrm{e}{-10},
+9.999994652340\mathrm{e}{-01}
\right).
\end{equation}

\begin{equation}
\lambda(M^{B_1B_2}_{0|1})
=
\left(
+9.783698329557\mathrm{e}{-11},
+9.783698346649\mathrm{e}{-11},
+4.999998446586\mathrm{e}{-01},
+5.000001536320\mathrm{e}{-01}
\right).
\end{equation}

\begin{equation}
\lambda(M^{B_1B_2}_{1|1})
=
\left(
+7.522642735038\mathrm{e}{-11},
+1.011585977520\mathrm{e}{-10},
+1.298743858098\mathrm{e}{-09},
+9.999999998940\mathrm{e}{-01}
\right).
\end{equation}

\begin{table}[tbp]
    \centering
    \begin{tabular}{ccccc}
    \toprule
    $(x,y)$
    & $p(0,0|x,y)$
    & $p(0,1|x,y)$
    & $p(1,0|x,y)$
    & $p(1,1|x,y)$\\
    \midrule
    $(0,0)$
    & $2.500001342868\mathrm{e}{-01}$
    & $2.499998671664\mathrm{e}{-01}$
    & $2.500001327763\mathrm{e}{-01}$
    & $2.499998656559\mathrm{e}{-01}$\\
    $(0,1)$
    & $4.436823888076\mathrm{e}{-01}$
    & $5.631761263742\mathrm{e}{-02}$
    & $4.436823861267\mathrm{e}{-01}$
    & $5.631761229709\mathrm{e}{-02}$\\
    $(1,0)$
    & $4.436826249996\mathrm{e}{-01}$
    & $4.436821509396\mathrm{e}{-01}$
    & $5.631764205055\mathrm{e}{-02}$
    & $5.631758187917\mathrm{e}{-02}$\\
    $(1,1)$
    & $1.637506815830\mathrm{e}{-01}$
    & $8.511686905092\mathrm{e}{-02}$
    & $8.503754563056\mathrm{e}{-02}$
    & $6.660949035921\mathrm{e}{-01}$\\
    \bottomrule
    \end{tabular}
    \caption{
    The distribution generated by the process matrix and Choi operators obtained by the see-saw algorithm.
    }
    \label{tab:rqt-lgyni-distribution-raw}
\end{table}

\begin{table}[t]
\centering
\begin{tabular}{lll}
\toprule
Object & Check & Value \\
\midrule
$W^\star$ (real)
& $\max_{ij}|\operatorname{Im} W^\star_{ij}|$
& $0$ \\
$W^\star$ (symmetric)
& $\|W^\star-W^{\star T}\|_{\max}$
& $0$ \\
$W^\star$ (positive semidefinite)
& $\lambda_{\min}(W^\star)$
& $-1.643013259010\mathrm{e}-13$ \\
$W^\star$ (normalization)
& $|\operatorname{Tr}W^\star-4|$
& $6.794564910706\mathrm{e}-13$ \\
$W^\star$ (normalization)
& $\max_i |r^A_i|$ (see Eq.\eqref{eq:residual_A})
& $2.342570581959\mathrm{e}-14$ \\
$W^\star$ (normalization)
& $\max_j|r^B_j|$ (see Eq.\eqref{eq:residual_B})
& $ 2.318978342686\mathrm{e}-14$  \\
$W^\star$ (normalization)
& $\max_{i,j}|r^{AB}_{ij}|$ (see Eq.\eqref{eq:residual_AB})
& $2.409836348213\mathrm{e}-14$ \\
\midrule
$\{M^{A_1A_2\star}_{a|x}\}_{a,x}$ (real)
& $\max_{a,x}\max_{ij}|\operatorname{Im}(M^{A_1A_2\star}_{a|x})_{ij}|$
& $0$ \\
$\{M^{A_1A_2\star}_{a|x}\}_{a,x}$ (symmetric)
& $\max_{a,x}\|M^{A_1A_2\star}_{a|x}-(M^{A_1A_2\star}_{a|x})^T\|_{\max}$
& $0$ \\
$\{M^{A_1A_2\star}_{a|x}\}_{a,x}$ (complete positivity)
& $\min_{a,x}\lambda_{\min}(M^{A_1A_2\star}_{a|x})$
& $-4.068687645590\mathrm{e}-11$ \\
$\{M^{A_1A_2\star}_{a|x}\}_{a,x}$ (trace-preserving)
& $\max_x\left\|\operatorname{Tr}_{A_2}\sum_a M^{A_1A_2\star}_{a|x}-\id_{A_1}\right\|_{\max}$ 
& $7.354583608787\mathrm{e}-11$ \\
$\{M^{A_1A_2\star}_{a|x}\}_{a,x}$ (trace non-increasing)
& $\min_{a,x}\lambda_{\min}\left(\id_{A_1}-\operatorname{Tr}_{A_2}M^{A_1A_2\star}_{a|x}\right)$
& $2.672758577671\mathrm{e}-10$ \\
\midrule
$\{M^{B_1B_2\star}_{b|y}\}_{b,y}$ (real)
& $\max_{b,y}\max_{ij}|\operatorname{Im}(M^{B_1B_2\star}_{b|y})_{ij}|$
& $0$ \\
$\{M^{B_1B_2\star}_{b|y}\}_{b,y}$ (symmetric)
& $\max_{b,y}\|M^{B_1B_2\star}_{b|y}-(M^{B_1B_2\star}_{b|y})^T\|_{\max}$
& $0$ \\
$\{M^{B_1B_2\star}_{b|y}\}_{b,y}$ (complete positivity)
& $\min_{b,y}\lambda_{\min}(M^{B_1B_2\star}_{b|y})$
& $-4.069626108065\mathrm{e}-11$ \\
$\{M^{B_1B_2\star}_{b|y}\}_{b,y}$ (trace-preserving)
& $\max_y\left\|\operatorname{Tr}_{B_2}\sum_b M^{B_1B_2\star}_{b|y}-\id_{B_1}\right\|_{\max}$
& $7.364897580686\mathrm{e}-11$ \\
$\{M^{B_1B_2\star}_{b|y}\}_{b,y}$ (trace non-increasing)
& $\min_{b,y}\lambda_{\min}\left(\id_{B_1}-\operatorname{Tr}_{B_2}M^{B_1B_2\star}_{b|y}\right)$
& $2.666775792365\mathrm{e}-10$ \\
\midrule
$\{p^\star(a,b|x,y)\}_{a,b,x,y}$
& $\max_{a,b,x,y}|\operatorname{Im}p^\star(a,b|x,y)|$
& $0$ \\
$\{p^\star(a,b|x,y)\}_{a,b,x,y}$
& $\min_{a,b,x,y}\operatorname{Re}p^\star(a,b|x,y)$
& $5.631758187917\mathrm{e}-02$ \\
$\{p^\star(a,b|x,y)\}_{a,b,x,y}$
& $\max_{x,y}\left|\sum_{a,b}p^\star(a,b|x,y)-1\right|$
& $1.433795304706\mathrm{e}-10$ \\
\bottomrule
\end{tabular}
\caption{
Numerical residuals for the RQT certificate. The norm
$\|\cdot\|_{\max}$ denotes the largest absolute entry. The rows
$R_{\mathrm A}$, $R_{\mathrm B}$, and $R_{\mathrm A\mathrm B}$ are the
largest residuals of the affine normalization constraints for the RQT
process matrix, evaluated on the real CPTP tangent bases. The small
negative eigenvalues are within the numerical solver tolerance.
}
\label{tab:rqt-certificate-residuals}
\end{table}

    Table~\ref{tab:rqt-lgyni-distribution-raw} shows the distribution generated by the process matrix and Choi operators.
    
    All SDPs used CVXPY~\cite{cvxpy} and MOSEK~\cite{mosek}.
    The MOSEK parameters were set to
    \begin{equation}
        \begin{aligned}
        \texttt{MSK\_DPAR\_INTPNT\_CO\_TOL\_PFEAS} &= 10^{-10},\\
        \texttt{MSK\_DPAR\_INTPNT\_CO\_TOL\_DFEAS} &= 10^{-10},\\
        \texttt{MSK\_DPAR\_INTPNT\_CO\_TOL\_REL\_GAP} &= 10^{-10},\\
        \texttt{MSK\_IPAR\_NUM\_THREADS}&=1.
        \end{aligned}
    \end{equation}
    The smallest eigenvalues were negative only at the level of the solver tolerance, and the imaginary parts of the raw eigenvalues were at the level of machine precision.

    The largest residual in Table~\ref{tab:rqt-certificate-residuals} is of order $10^{-10}$, whereas the separation gap between the reported RQT value and the complex-QT upper bound is of order $10^{-2}$.
    Thus the strict inequality is stable under the reported numerical residuals.

    The code is available at~\cite{code}.
    The version corresponding to the preprint is ``v0.1-preprint.''
	\end{proof}
    
\end{document}